\newcommand\be{\begin{equation}}
\newcommand\ee{\end{equation}}
\newcommand\p{\partial}
\newcommand{\Normord}[1]{:\mathrel{#1}:}
\DeclareMathOperator{\sppan}{span}
\def\lvac{\left <0\right |}
\def\rvac{\left |0\right >}
\DeclareMathOperator{\res}{Res}
\DeclareMathOperator{\odd}{odd}
\DeclareMathOperator{\DP}{SP}
\DeclareMathOperator{\OP}{OP}
\DeclareMathOperator{\KP}{KP}
\newtheorem{theorem}{Theorem}[section]
\newtheorem{lemma}{Lemma}[section]
\newtheorem{proposition}[lemma]{Proposition}
\newtheorem{corollary}[lemma]{Corollary}
\newtheorem{remark}{Remark}[section]
\newtheorem*{theorem*}{Theorem}
\newtheorem{definition}{Definition}[section]
\numberwithin{equation}{section}
\title[Elements of spin Hurwitz theory]{Elements of spin Hurwitz theory: closed algebraic formulas, blobbed topological recursion, and a proof of  the Giacchetto--Kramer--Lewa\'nski conjecture}
\author{Alexander Alexandrov}
\address[A. Alexandrov]{Center for Geometry and Physics, Institute for Basic Science (IBS), Pohang 37673, Korea
}
\email{\tt alex@ibs.re.kr}
\author{Sergey Shadrin}
\address[S. Shadrin]{Korteweg-de Vriesinstituut voor Wiskunde, 
	Universiteit van Amsterdam, Postbus 94248,
	1090GE Amsterdam, Nederland}
\email{\tt s.shadrin@uva.nl}
\subjclass[2020]{37K10, 14N10, 81R10, 33C80}
\begin{document}

\begin{abstract} 
In this paper, we discuss the properties of the generating functions of spin Hurwitz numbers.
In particular, for spin Hurwitz numbers with arbitrary ramification profiles, we construct the weighed sums which are given by
 Orlov's hypergeometric solutions of the 2-component BKP hierarchy. We derive the closed algebraic formulas for the correlation functions associated with these tau-functions, and under reasonable analytical assumptions we prove the loop equations (the blobbed topological recursion). Finally, we prove a version of topological recursion for the spin Hurwitz numbers with the spin completed cycles (a generalized version of the Giacchetto--Kramer--Lewa\'nski conjecture).
\end{abstract}

\maketitle

\tableofcontents 

\def\thefootnote{\arabic{footnote}}

\setcounter{equation}{0}

\section{Introduction}

\subsection{Topological recursion and integrability} It is well known that the Chekhov--Eynard--Orantin topological recursion~\cite{EynardOrantin} is closely related to integrability. However, the details of a general relationship between the two phenomena remain unclear. Although topological recursion is believed to be a universal property for a huge class of enumerative geometry and physics problems, the proofs of its validity are often 
model-dependent and technically involved, and at this point, despite the lack of understanding of the general relationship, various universal properties of integrability often 
help to prove topological recursion.

One of the most general applications of integrability to topological recursion is given by the 
weighted Hurwitz numbers. The generating functions of the weighted Hurwitz numbers are hypergeometric tau-functions of the $2$-component KP ($2$-KP) hierarchy.
The study of topological recursion for the general hypergeometric solutions of the $2$-KP hierarchy was initiated in \cite{ACEH1,ACEH3} (subsuming a huge list of particular examples known before). Many elements of the general construction including quantum and classical spectral curves were properly identified there. However, the topological recursion was proved only for an infinite-dimensional family of solutions with polynomial weight functions and finite sets of the second times of the $2$-KP hierarchy. Topological recursion for the much more general families of the hypergeometric solutions of the $2$-KP hierarchy was proved in \cite{BDKS1,BDKS2}. The proof there is based on the free field description of the KP hierarchy, more specifically, on the free fermion construction and the boson-fermion correspondence.

These results indicate that the same line of reasoning can be applied to any integrable hierarchy with free fermion description. 
In this paper, we describe topological recursion for the hypergeometric solutions of the $2$-component 
BKP ($2$-BKP) hierarchy. A well-known neutral fermion description of the BKP hierarchy allows us to follow the general approach for the $2$-KP hierarchy, derived in \cite{BDKS1,BDKS2}. Many steps can be repeated without essential changes, but some specifics of the $2$-BKP case (mostly important, the built-in oddness of the parametrizations) require extra analysis and lead to new phenomena.
We derive the general closed algebraic formulas for the correlation functions in the $2$-BKP case and prove the blobbed topological recursion~\cite{BorotShadrin}, that is, the linear and quadratic loop equations. 

\subsection{BKP and spin Hurwitz theory} The BKP hierarchy is believed to govern the spin Hurwitz numbers in essentially the same way as the KP hierarchy governs the ordinary Hurwitz numbers \cite{MMNQ}. However, the important construction of the {\em weighted spin Hurwitz numbers} (in the sense of~\cite{GPH}) is still unavailable in the literature. In this paper we show how to construct integrable generating functions of spin Hurwitz numbers for arbitrary ramification profiles and number of the branch points. These generating functions are Orlov's hypergeometric tau-functions of the $2$-component BKP hierarchy~\cite{OBKP}, and the weights associated with the ramifications serve as parameters. It is not clear at the moment how to reduce naturally the number of parameters and to define the direct analogs of weighted Hurwitz numbers in the spin case. To this end, we suggest two possible candidates for the elementary weight functions. 

It is well known that the tau-functions of the KP and BKP hierarchies are related to each other by a simple quadratic relation \cite{JMBKP}. Following Orlov \cite{OBKP}, we describe this relation for the hypergeometric tau-functions. Namely, for any hypergeometric tau-function of the $2$-BKP hierarchy we find the corresponding tau-function of the $2$-KP hierarchy. It is easy to see that such KP tau-function is not unique. This relation between tau-functions should lead to the non-trivial relations between the spin and ordinary Hurwitz numbers. 

\subsection{Giacchetto--Kramer--Lewa\'nski conjecture and its generalization}

Additional input and motivation to study the correlation functions of the corresponding hypergeometric $2$-BKP tau-functions comes from a recent work of Giacchetto, Kramer, and Lewa\'nski~\cite{GKL}. They study in detail the theory of so-called spin Hurwitz numbers with completed cycles, both single and double, whose elements occur naturally in a number of other works in relation to computation of the volumes of strata in the moduli spaces of holomorphic differentials~\cite{EOP} and Gromov--Witten theory of K\"ahler surfaces~\cite[Introduction]{LeeRoot}. 

Remarkably, Giacchetto, Kramer, and Lewa\'nski propose a conjectural statement on $\mathbb{Z}_2$-equi\-va\-riant version of topological recursion for the correlation differentials of these numbers, and they prove that the statement on topological recursion is equivalent to an ELSV-type formula for spin Hurwitz numbers with completed cycles that expresses these numbers in terms of the Chiodo classes twisted by the Witten $2$-spin class.

Using the formulas for the correlation functions and loop equations we prove a natural generalization of the Giacchetto--Kramer--Lewa\'nski conjecture, that is, a $\mathbb{Z}_2$-equivariant version of topological recursion for the double spin Hurwitz numbers with arbitrary finite linear combinations of the spin completed cycles. 

Let us remark that while with the motivation coming from~\cite{GKL} we focus on this particular family of spin Hurwitz numbers, we expect that our modification of the methods of~\cite{BDKS1,BDKS2} should immediately work for other families of the generating functions of the spin Hurwitz numbers, analogous to the families investigated in \cite{BDKS2}. We also expect that the integrable approach to the topological recursion in the BKP case should be as universal as for the KP case. Moreover, without significant modifications, it should also work for other integrable hierarchies described by free fermions.


\subsection{Notation} A partition $\lambda$ is {\em strict}, if $\lambda_1>\lambda_2>\lambda_3>\dots>\lambda_{\ell(\lambda)}>\lambda_{\ell(\lambda)+1}=0$,  where $\ell(\lambda)$ is the {\em length} of the partition. We denote the set of strict partitions, including the empty one, by $\DP$. A partition $\lambda$ is {\em odd} if all parts in $\lambda$ are odd. We denote the set of odd partitions, including the empty one, by $\OP$. For a partition $\lambda$ by $\lambda(k)$ we denote the number of parts equal to $k$.

\subsection{Organization of the paper} In Section \ref{S2} we recall the neutral fermion description of the BKP hierarchy. In Section \ref{S3} we explain how to construct the generating functions of the spin Hurwitz numbers that solve the 2-BKP hierarchy and how these tau-functions can be related to the generating functions of the ordinary Hurwitz numbers. Section \ref{S4} is devoted to the correlation functions for the general hypergeometric tau-functions of the 2-BKP hierarchy. In Section \ref{S5} we prove that these correlation functions, under mild analytic assumptions, satisfy linear and quadratic loop equations. In Section \ref{S6} we derive explicit expressions for the $n$-point correlation functions. In Section \ref{S7} we use these expressions to prove the topological recursion for the spin Hurwitz numbers with the spin completed cycles.

\subsection{Acknowledgments}
	
The work of A.~A. was supported by the Institute for Basic Science (IBS-R003-D1). The work of S.~S. was supported by the Netherlands Organization for Scientific Research.

The authors thank A.~Giacchetto, R.~Kramer, and D.~Lewa\'nski for useful discussions and an anonymous referee for the suggested improvements.


\section{Neutral fermions and boson-fermion correspondence}\label{S2}

In this section we remind the reader the neutral fermion formalism and boson-fermion correspondence in the framework of the BKP hierarchy.
More detailed descriptions can be found in \cite{JMBKP,You,vdLASM,OBKP}.


\subsection{Neutral fermions}

Let $\phi_k$, $k\in {\mathbb Z}$, be the neutral free fermions satisfying the canonical anticommutation relations
\be\label{ac}
\left\{\phi_k,\phi_m\right\}=(-1)^k\delta_{k+m,0}.
\ee
Note that $\phi_0^2=1/2$. These relations define the Clifford algebra as an associative algebra.

For the vacuum vector $\rvac$ and the co-vacuum $\lvac$, satisfying
\be\label{vp1}
\phi_m \rvac =0,\,\,\,\,\, \lvac \phi_{-m}=0,\,\,\,\,\, m<0
\ee
the elements $\phi_{k_1}\phi_{k_2}\dots \phi_{k_m}\rvac$  with $k_1>k_2>\dots>k_m\geq0$ form a basis of the {\em neutral fermion Fock space} ${\mathcal F}_B$,
\be\label{bas1}
{\mathcal F}_B =\sppan\left\{ \phi_{k_1}\phi_{k_2}\dots \phi_{k_m}\rvac \ |\ k_1>k_2>\dots>k_m\geq 0 \right\}, 
\ee
and its dual
\be\label{bas2}
{\mathcal F}_B^* =\sppan\left\{\lvac  \phi_{k_m}\dots  \phi_{k_2}\phi_{k_1} \ |\  k_1<k_2<\dots<k_m\leq0  \right\}.
\ee
The space ${\mathcal F}_B$ splits into two subspaces
\be
{\mathcal F}_B={\mathcal F}_B^0\oplus {\mathcal F}_B^1,
\ee
where ${\mathcal F}_B^0$ and ${\mathcal F}_B^1$ denote the subspaces with even and odd numbers of generators $\phi_k$, respectively. The same decomposition exists for ${\mathcal F}_B^*$.

There is a nondegenerate bilinear pairing ${\mathcal F}_B \times {\mathcal F}_B^* \rightarrow {\mathbb C}$, and the pairing of $\left<U\right | \in {\mathcal F}_B^*$ and $ \left|V\right>\in {\mathcal F}_B$ is denoted by $\left<U |V\right>$ with
\be
\left<0|0\right>=1.
\ee
 The {\em vacuum expectation values} of an element $a$ of the Clifford algebra is a pairing of $\lvac$ and $a\rvac$, which is denoted by  $\lvac a \rvac$. It is uniquely defined by the anticommutation relations (\ref{ac}), property (\ref{vp1}), and the following relation:
 \be
 \lvac \phi_0 \rvac =0. 
 \ee
In particular, if $a$ is an odd element of the Clifford algebra, then $\lvac a \rvac =0$. It is easy to see that the bases in (\ref{bas1}) and  (\ref{bas2}) are orthogonal. 
Let us focus on the space ${\mathcal F}_B^0$ and its dual. The basis can be labelled by strict partitions $\lambda \in \DP$ in the following way: 
\be\label{lambda}
\left|\lambda\right> = 
\begin{cases}
 \phi_{\lambda_1}\phi_{\lambda_2}\dots \phi_{\lambda_{\ell(\lambda)}}\rvac     & \mathrm{for}  \,\,\, \ell(\lambda)=0 \mod 2,\\
\sqrt{2} \phi_{\lambda_1}\phi_{\lambda_2}\dots \phi_{\lambda_{\ell(\lambda)}}\phi_0 \rvac  & \mathrm{for}   \,\,\, \ell(\lambda)=1 \mod 2,
\end{cases}
\ee
and similarly for ${\mathcal F}_B^{0*}$. From the anticommutation relations we have
\be
\left<\lambda|\mu\right>=(-1)^{|\lambda|} \delta_{\lambda,\mu}.
\ee

It is easy to see that
\be
\lvac \phi_k\phi_{m} \rvac =\delta_{k+m,0}H[m], 
\ee
where
\be\label{Ha}
H[m]  =
\begin{cases}
\displaystyle{0}  & \mathrm{for} \quad m<0,\\
\displaystyle{\frac{1}{2}}  & \mathrm{for} \quad m=0,\\
\displaystyle{(-1)^m} \,\,\,\,\,\, & \mathrm{for} \quad m>0.
\end{cases}
\ee

Bilinear combinations of neutral fermions $\phi_k\phi_m$ satisfy the commutation relations of the Lie algebra $B_\infty$. Let $\left(E_{i,j}\right)_{k,l}=\delta_{i,k}\delta_{j,l}$ be the standard basis of the matrix units $\left.\left\{E_{{i,j}}\right| i,j \in {\mathbb Z}\right\}$. Then $\phi_k\phi_{-m}$ corresponds \cite{JMBKP} to 
\be
F_{k,m}=(-1)^m E_{k,m}-(-1)^k E_{-m,-k}
\ee
with the commutation relations
\be\label{comF}
\left[F_{a,b},F_{c,d}\right]=(-1)^b\delta_{b,c}F_{a,d}-(-1)^a\delta_{a+c,0}F_{-b,d}+(-1)^b\delta_{b+d,0}F_{c,-a}-(-1)^a\delta_{a,d}F_{c,b}.
\ee

For the bilinear combinations of neutral fermions we introduce the {\em normal ordering} by 
\begin{equation}
\Normord{\phi_k\phi_{m}}=\phi_k\phi_{m}-\lvac \phi_k\phi_{m} \rvac.
\end{equation} 
It is skewsymmetric 
\be
\Normord{\phi_k\phi_{m}}=-\Normord{\phi_m\phi_{k}},
\ee
in particular, $\Normord{\phi_k\phi_{k}}=0$.
The normal ordered quadratic combinations of neutral fermions satisfy the commutation relations of a central extension of the algebra $B_\infty$:
\begin{align}
[\Normord{\phi_a\phi_b},\Normord{\phi_c\phi_d}] & = (-1)^b \delta_{b+c,0}\Normord{\phi_a\phi_d} -(-1)^a\delta_{a+c,0}\Normord{\phi_b\phi_d}\\ \notag 
& \quad + (-1)^b\delta_{b+d,0}\Normord{\phi_c\phi_a} -(-)^a\delta_{a+d,0}\Normord{\phi_c\phi_b} \\ \notag & \quad + (\delta_{c,b}\delta_{a,d}-\delta_{a-c,0}\delta_{b-d,0})( (-1)^{a}H[b]- (-1)^{b}H[a]),
\end{align}
where $H[a]$ is given by (\ref{Ha}). The operator $\Normord{\phi_k\phi_{-m}}$ corresponds to the projective representation of the Lie algebra $B_\infty$, and will also be denoted by $\hat{F}_{k,m}$.

Let us consider the generating function
\be
\phi(z)=\sum_{k\in {\mathbb Z}}\phi_k z^k. 
\ee
It satisfies the anti-commutation relation 
\be\label{phid}
\left\{\phi(z),\phi(w)\right\}=\delta(z+w).
\ee
Here we introduce the delta-function
\be\label{delta}
\delta(z-w)=\sum_{k\in {\mathbb Z}}\left(\frac{z}{w}\right)^k.
\ee
It satisfies
\be
\delta(z-w)f(z)=\delta(z-w)f(w)
\ee
for any formal series $f(z)\in {\mathbb C}[\![z,z^{-1}]\!]$ and can be represented as
\be
2\delta(z+w)=\iota_{|z|>|w|}\frac{z-w}{z+w}-\iota_{|w|>|z|}\frac{z-w}{w+z},
\ee
where $\iota_{|z|>|w|}$ is the operation of Laurent series expansion in the
region $|z|>|w|$.

Quadratic combinations of the generating functions $\phi(z)$ generate a Lie algebra with the following commutation relations
\begin{multline}\label{bilc}
\left[\phi_1(z_1)\phi(w_1),\phi(z_2)\phi(w_2)\right]=\delta(w_1+z_2)\phi(z_1)\phi(w_2)-\delta(z_1+z_2)\phi(w_1)\phi(w_2)\\
+\delta(w_1+w_2)\phi(z_2)\phi(z_1)-\delta(z_1+w_2)\phi(z_2)\phi(w_1).
\end{multline}
For the normal ordered operator we have
\be
\phi(z)\phi(w)=\Normord{\phi(z)\phi(w)}+\frac{1}{2}\iota_{|z|>|w|} \frac{z-w}{z+w}.
\ee


\subsection{Vertex operators}

For $k\in {\mathbb Z}_{\odd}$ we introduce {\em bosonic} operators
\be\label{bosop}
J_k=\frac{1}{2}\sum_{m\in {\mathbb Z}}(-1)^{m+1}\Normord{\phi_m\phi_{-m-k}}
\ee
satisfying a commutation relation of the Heisenberg algebra
\be\label{comJ}
\left[J_k,J_m\right]=\frac{k}{2}\delta_{k+m,0}.
\ee
From (\ref{vp1}) we have:
\be
J_m \rvac =0,\,\,\,\,\, \lvac J_{-m}=0,\,\,\,\,\, m>0.
\ee

Let us consider the {\em vertex operator} for the BKP hierarchy introduced in \cite{JMBKP},
\be
\widehat{V}^{(1)}_B(z)=\exp\left(\sum_{k\in{\mathbb Z}_{\odd}^+}z^k t_k\right)\exp\left(-2\sum_{k\in{\mathbb Z}_{\odd}^+}\frac{1}{kz^k}\frac{\p}{\p t_k}\right).
\ee
These operators satisfy the anticommutation relation
\be\label{Vac}
\left\{\widehat{V}^{(1)}_B(z),\widehat{V}^{(1)}_B(w)\right\}=2 \delta(z+w)
\ee
similar to the relation (\ref{phid}).

It is convenient to introduce generating functions of the bosonic operators:
\be
J_+({\bf t}) = \sum_{k\in {\mathbb Z}_{\odd}^+} t_k J_k,\,\,\,\,\,\,\, J_-({\bf s}) = \sum_{k\in {\mathbb Z}_{\odd}^+} s_k J_{-k}.
\ee
Then one has
\begin{equation}\label{Vtp}
\begin{split}
\widehat{V}^{(1)}_B(z)\lvac e^{J_+({\bf t})}&=2\lvac \phi_0  e^{J_+({\bf t})} \phi(z),\\
\widehat{V}^{(1)}_B(z)\lvac \phi_0 e^{J_+({\bf t})}&= \lvac e^{J_+({\bf t})} \phi(z).
\end{split}
\end{equation}

Let us consider a bilinear combination of the vertex operators
\be
\widehat{Y}_B(z,w)=\frac{1}{2}\widehat{V}^{(1)}_B(z) \widehat{V}^{(1)}_B(w).
\ee
Using the anti-commutation relation (\ref{Vac}) it is easy to show that the vertex operators $\widehat{Y}_B(z,w)$ satisfy a commutation relation equivalent to the relation (\ref{bilc}) for the bilinear combinations $\phi(z)\phi(w)$:
\begin{align}
\left[\widehat{Y}_B(z_1,w_1),\widehat{Y}_B(z_2,w_2)\right] & = \delta(w_1+z_2)\widehat{Y}_B(z_1,w_2)
-\delta(z_1+z_2)\widehat{Y}_B(w_1,w_2)
\\ \notag & \quad 
+\delta(w_1+w_2)\widehat{Y}_B(z_2,z_1)-\delta(z_1+w_2)\widehat{Y}_B(z_2,w_1).
\end{align}

It is also convenient to consider its regularized version, corresponding to $\Normord{\phi(z)\phi(-w)}$
\be
{\widehat V}^{(2)}_B(z,w)=\widehat{Y}_B(z,w)-\frac{1}{2}\iota_{|z|>|w|} \frac{z-w}{{z}+w},
\ee
where for the second term we assume the series expansion in $|z|>|w|$.
This expression has no pole at $z=-w$, moreover, it is antisymmetric with respect to the permutation of $z$ and $w$.
These vertex operators can be represented as
\be\label{Vr}
\widehat{V}^{(2)}_B(z,w)=\frac{1}{2}\frac{z-w}{z+w}\left(e^{ \sum_{k\in{\mathbb Z}_{\odd}^+}t_k (z^k +w^k)}e^{-2\sum_{k\in{\mathbb Z}_{\odd}^+}\left(\frac{1}{kz^k}+\frac{1}{kw^k}\right)\frac{\p}{\p t_k}}-1\right).
\ee 

From (\ref{Vtp}) it follows that
\be\label{ara}
{\widehat V}^{(2)}_B(z,w) \lvac e^{J_+({\bf t}) }=\lvac e^{J_+({\bf t}) } \Normord{\phi(z)\phi(w)}.
\ee

\subsection{Boson-fermion correspondence}

For the neutral fermions the boson-fermion correspondence describes an isomorphism \cite{You}
\be
\sigma_B^i:\,\,\,\,\, {\mathcal F}_B^i \simeq B^{(i)}={\mathbb C}[\![t_1,t_3,t_5\dots ]\!]
\ee
for $i=0,1$. Here
\be
\sigma_B^i (\left| i\right> )=1,
\ee
where we introduce $\left|1\right>= \sqrt{2}\phi_0 \rvac$, and for both $i=0,1$ we have
\be
\sigma_B^i  J_{-k} (\sigma_B^i)^{-1}=\frac{k}{2}  t_k ,\,\,\,\,\, \sigma_B^i  J_k (\sigma_B^i)^{-1} = \frac{\p}{\p t_k}
\ee
for $k\in{\mathbb Z}_{\odd}^+$. The boson-fermion correspondence is given by
\be
\sigma_B^i(\left| a \right>)=\begin{cases}
\displaystyle{\left<1\right| e^{J_+({\bf t}) } \left| a \right>} \,\,\,\,\,\,\,\,\,\,\,\,\,\,\,\,\, \mathrm{for} \quad  \left| a \right> \in  {\mathcal F}_B^1,\\[4pt]
\displaystyle{\lvac e^{J_+({\bf t}) } \left| a \right> }\,\,\,\,\,\,\,\,\,\,\,\,\,\,\,\,\, \mathrm{for} \quad \left| a \right> \in  {\mathcal F}_B^0,
\end{cases}
\ee
where $\left<1\right|=\sqrt{2}\lvac \phi_0$. The boson-fermion correspondence between two different representations of the central extension of the $B_\infty$ algebra is given by
\be
\sigma_B^i \Normord{\phi(z) \phi(w)} (\sigma_B^i )^{-1}={\widehat V}^{(2)}_B(z,w).
\ee
Below we will work only with ${\mathcal F}_B^0$ component of the fermionic Fock space and its bosonic counterpart. 


Relation between the Schur Q-functions and the BKP hierarchy, in particular, is described by the following result of You:
\begin{theorem}[\cite{You}]\label{TYou}
For the states (\ref{lambda}) the boson-fermion correspondence yields
\be
\sigma_{B}^0 (\left| \lambda \right> )= 2^{-\ell(\lambda)/2} Q_\lambda({\bf t}/2).
\ee
\end{theorem}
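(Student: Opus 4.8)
The plan is to derive the statement from the boson--fermion correspondence relations~\eqref{Vtp} together with two classical facts about Schur $Q$-functions: Schur's Pfaffian identity and the Pfaffian formula $Q_\lambda=\Pf\big(Q_{(\lambda_i,\lambda_j)}\big)$ (see \cite[Ch.~III, \S8]{Mac}). First I would record how the neutral field $\phi(z)$ acts after bosonization. From~\eqref{Vtp}, together with $\left<1\right|=\sqrt2\,\lvac\phi_0$ and $\left|1\right>=\sqrt2\,\phi_0\rvac$, one reads off that for $\left|a\right>\in{\mathcal F}_B^1$ and $\left|b\right>\in{\mathcal F}_B^0$
\[
\sigma_B^0\big(\phi(z)\left|a\right>\big)=\tfrac{1}{\sqrt2}\,\widehat{V}^{(1)}_B(z)\,\sigma_B^1\big(\left|a\right>\big),
\qquad
\sigma_B^1\big(\phi(z)\left|b\right>\big)=\tfrac{1}{\sqrt2}\,\widehat{V}^{(1)}_B(z)\,\sigma_B^0\big(\left|b\right>\big).
\]
Write $\left|v_\lambda\right>:=\rvac$ if $\ell(\lambda)$ is even and $\left|v_\lambda\right>:=\left|1\right>$ if $\ell(\lambda)$ is odd, so that by~\eqref{lambda} one has $\left|\lambda\right>=\phi_{\lambda_1}\cdots\phi_{\lambda_{\ell(\lambda)}}\left|v_\lambda\right>$, and recall $\sigma_B^0(\rvac)=1$, $\sigma_B^1(\left|1\right>)=1$.

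Iterating the two relations above (applying the fields $\phi(z_{\ell}),\dots,\phi(z_1)$ in turn, the parity alternating at each step so that one always lands in ${\mathcal F}_B^0$ after $\ell=\ell(\lambda)$ steps) yields, in both parities, $\sigma_B^0\big(\phi(z_1)\cdots\phi(z_{\ell})\left|v_\lambda\right>\big)=2^{-\ell/2}\,\widehat{V}^{(1)}_B(z_1)\cdots\widehat{V}^{(1)}_B(z_{\ell})\cdot 1$. Next I would evaluate the right-hand side by normal ordering the product of vertex operators: commuting the annihilation part of $\widehat{V}^{(1)}_B(z_i)$ past the creation part of $\widehat{V}^{(1)}_B(z_j)$ for $i<j$ produces the scalar $\exp\!\big(-2\sum_{k\in{\mathbb Z}_{\odd}^+}\tfrac1k (z_j/z_i)^k\big)=\tfrac{z_i-z_j}{z_i+z_j}$ (expanded in $|z_i|>|z_j|$, using $\sum_{k\in{\mathbb Z}_{\odd}^+}x^k/k=\tfrac12\log\tfrac{1+x}{1-x}$), so that
\[
\widehat{V}^{(1)}_B(z_1)\cdots\widehat{V}^{(1)}_B(z_{\ell})\cdot 1
=\prod_{1\le i<j\le\ell}\frac{z_i-z_j}{z_i+z_j}\,\exp\!\Big(\sum_{k\in{\mathbb Z}_{\odd}^+}t_k\sum_{i=1}^{\ell}z_i^k\Big).
\]
On the other hand $\phi(z_1)\cdots\phi(z_{\ell})\left|v_\lambda\right>=\sum_{{\bf k}}z_1^{k_1}\cdots z_{\ell}^{k_{\ell}}\,\phi_{k_1}\cdots\phi_{k_{\ell}}\left|v_\lambda\right>$; applying $\sigma_B^0$ and comparing the coefficient of $z_1^{\lambda_1}\cdots z_{\ell}^{\lambda_{\ell}}$ in the expansion region $|z_1|>\cdots>|z_{\ell}|$, the left-hand side isolates $\sigma_B^0(\left|\lambda\right>)$ because $\lambda$ is strict (the Clifford re-orderings and contractions among the $\phi_{k_i}$ for permuted or non-strict index tuples are absorbed automatically by linearity of $\sigma_B^0$). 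Hence $\sigma_B^0(\left|\lambda\right>)=2^{-\ell/2}\,\big[z_1^{\lambda_1}\cdots z_{\ell}^{\lambda_{\ell}}\big]\prod_{i<j}\tfrac{z_i-z_j}{z_i+z_j}\,\exp(\sum_k t_k\sum_i z_i^k)$.

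Finally I would identify this coefficient with $2^{-\ell/2}Q_\lambda({\bf t}/2)$. By Schur's Pfaffian identity $\prod_{i<j}\tfrac{z_i-z_j}{z_i+z_j}=\Pf\big(\tfrac{z_i-z_j}{z_i+z_j}\big)_{1\le i,j\le\ell}$ (valid for $\ell$ even; for $\ell$ odd, adjoin an auxiliary variable $z_{\ell+1}$, note that both $\prod_{i<j}\tfrac{z_i-z_j}{z_i+z_j}$ and $\exp(\sum_k t_k\sum_i z_i^k)$ are continuous at $z_{\ell+1}=0$, and read off the coefficient of $z_{\ell+1}^0$, which reduces everything to the even case with the padded partition $(\lambda_1,\dots,\lambda_{\ell},0)$). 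Expanding the Pfaffian over perfect matchings and using that within each matched pair the two variables are the only ones entering the corresponding factor, the coefficient extraction factorizes through the pairs, giving
\[
\sigma_B^0\big(\left|\lambda\right>\big)=2^{-\ell/2}\,\Pf\big(q_{\lambda_i,\lambda_j}\big)_{1\le i<j\le\ell},
\qquad
q_{r,s}:=\big[z^rw^s\big]\,\frac{z-w}{z+w}\,\exp\!\Big(\sum_{k\in{\mathbb Z}_{\odd}^+}t_k(z^k+w^k)\Big).
\]
Since $\exp\!\big(\sum_{k\in{\mathbb Z}_{\odd}^+}t_k z^k\big)=\sum_{r\ge0}Q_{(r)}({\bf t}/2)\,z^r$ (the generating series of the one-row Schur $Q$-functions in the standard ``times'' normalization), one has $q_{r,s}=Q_{(r,s)}({\bf t}/2)$, and the Pfaffian formula for Schur $Q$-functions \cite[Ch.~III, \S8]{Mac} gives $\Pf\big(Q_{(\lambda_i,\lambda_j)}({\bf t}/2)\big)=Q_\lambda({\bf t}/2)$; together with $\ell=\ell(\lambda)$ this is the claimed identity, the empty partition being the trivial base case $\sigma_B^0(\rvac)=1=Q_\emptyset$.

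The conceptual content is entirely classical — this is precisely the theorem of \cite{You} — so the only real work is bookkeeping, and that is where the errors hide: keeping track of the factors of $\sqrt2$ coming from the odd-length normalization in~\eqref{lambda}, from $\left<1\right|=\sqrt2\,\lvac\phi_0$ and $\left|1\right>=\sqrt2\,\phi_0\rvac$, and from the $\tfrac12$'s in~\eqref{Vtp}; fixing the normalization of the times (${\bf t}/2$ versus ${\bf t}$); and being careful that the coefficient extraction in the correct expansion region genuinely reproduces $\sigma_B^0$ of the strict-partition basis vector. The even/odd dichotomy, which is the one place where a slightly different argument is needed, is handled uniformly once the auxiliary-variable trick is in place.
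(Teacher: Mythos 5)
Your proof is correct. Note that the paper itself offers no proof of this statement --- it is quoted as a theorem of You with a bare citation --- so the only meaningful comparison is with the classical derivation, and your route (bosonizing the neutral field via \eqref{Vtp} to get $\sigma_B^0(\phi(z_1)\cdots\phi(z_\ell)\left|v_\lambda\right>)=2^{-\ell/2}\widehat{V}^{(1)}_B(z_1)\cdots\widehat{V}^{(1)}_B(z_\ell)\cdot 1$, normal-ordering the vertex operators into $\prod_{i<j}\frac{z_i-z_j}{z_i+z_j}\exp(\sum_k t_k\sum_i z_i^k)$, and then applying Schur's Pfaffian identity together with $Q_\lambda=\Pf\big(Q_{(\lambda_i,\lambda_j)}\big)$ and the padding by a zero part in the odd-length case) is exactly that standard argument. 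The $\sqrt2$ bookkeeping is consistent: e.g.\ for $\lambda=(1)$ a direct Wick computation gives $\lvac e^{J_+({\bf t})}\sqrt2\,\phi_1\phi_0\rvac = t_1/\sqrt2 = 2^{-1/2}Q_{(1)}({\bf t}/2)$, matching your formula.
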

Here $Q_\lambda$ are the Schur Q-functions (see Section III.8 of \cite{Mac} for definition and details).

It was shown by Date, Jimbo, Kashiwara, and  Miwa \cite{JMBKP} that for any {\em group element} of the central extension of the algebra $B_\infty$,
\be
G=\exp\left(\sum_{k,m\in {\mathbb Z}} a_{km} \Normord{\phi_k \phi_{m}}\right),
\ee
the bosonic image of the fermionic state $G e^{J_-({\bf s})}\rvac $ solves the 2-BKP hierarchy. Namely,
\be
\tau({\bf t},{\bf s}) =\lvac e^{J_+({\bf t})} G e^{J_-({\bf s})}\rvac 
\ee
is a tau-function of 2-BKP hierarchy.

\section{Hypergeometric tau-functions and weighted spin Hurwitz numbers}\label{S3}

In this section we suggest a way to construct the weighed sums of the spin Hurwitz numbers which solve the $2$-BKP hierarchy. There is a certain ambiguity associated to the choice of the weights, and we discuss two natural candidates for the role of the elementary weight functions. 


\subsection{Hypergeometric tau-functions of 2-BKP hierarchy}

Following Orlov  \cite{OBKP} we consider a set of parameters $T_n$, $n\in {\mathbb Z}$, such that $T_{n}=-T_{-n}$. In particular, $T_0=0$. Then
\begin{equation}
		\sum_{k\in {\mathbb Z}}(-1)^{k} T_k \Normord{\phi_k \phi_{-k}}=2 \sum_{k\in {\mathbb Z}_+}  (-1)^{k}T_k   \Normord{\phi_k \phi_{-k}}=2 \sum_{k\in {\mathbb Z}_+} (-1)^{k}  T_k   \hat{F}_{k,k}.
\end{equation}
Consider the group element
\be\label{Ddiag}
{\mathcal D}=\exp\left(\sum_{k\in {\mathbb Z}}(-1)^{k} T_k \Normord{\phi_k \phi_{-k}}\right),
\ee
then
\be
\tau({\bf t},{\bf s})= \lvac e^{J_+({\bf t})}{\mathcal D} e^{J_-({\bf s})}\rvac
\ee
is a tau-function of the $2$-component BKP hierarchy symmetric in the variables $t_k$ and $s_k$. From Theorem \ref{TYou} it follows that this tau-function
has an equivalent description \cite{OBKP}
\be\label{hyperg}
\tau({\bf t},{\bf s}) =\sum_{\lambda \in \DP} 2^{-\ell(\lambda)} e^{2T_{\lambda_1}+\dots+2T_{\lambda_{\ell(\lambda)}}}Q_\lambda({\bf t}/2)Q_\lambda({\bf s}/2). 
\ee
Here $\DP$ is the set of all strict partitions including the empty one. These are the {\em hypergeometric tau-functions} of the 2-BKP hierarchy.

Consider $T(x)$, an odd function such that $T(k)=T_k$ for $k \in {\mathbb Z}$.  For future applications it is natural to introduce the topological expansion parameter $\hbar$. Consider a new \emph{even} function $\overline{\psi}(z)$ such that 
\be
\overline\psi(\hbar(z+1/2)) =  T(z+1)-T(z).
\ee
We assume that $\overline\psi(z)$ is itself a series in $\hbar^2$, $\overline\psi(\hbar^2,z) = \sum_{d=0}^\infty \hbar^{2g} \psi_{2d}(z)$, where $\psi_{2d}(z)$ is an even formal power series in $z$, therefore $T_k$ also depend on $\hbar$. The constant term of this series, $\psi_0=\overline\psi(0,z)$, is also denoted by $\psi=\psi(z)$.

\begin{remark}
Our definition of the parameters $T_k$ corresponds to the doubled parameters of \cite{OBKP} with the inverse sign.
\end{remark}

\subsection{Spin Hurwitz numbers}

Spin Hurwitz numbers, which count the ramified coverings with sign coming from spin structure, were introduces by Eskin, Okounkov, and Pandharipande \cite{EOP}. Using TQFT, Gunningham \cite{Guni} found a combinatorial expression for all genera spin Hurwitz numbers, which uses the representation theory of Sergeev's group. In this section we recall this combinatorial expression.
 We address the reader to \cite{EOP,Guni,LeeG,LeeRoot,MMNQ,GKL} for the basic definitions and properties. Different authors use different conventions, our notation is consistent with that of \cite{GKL}.


For any set of variables or parameters $r_k$ and any partition $\mu$ let us denote
\be
r_\mu=\prod_{j=1}^{\ell(\mu)} r_{\mu_j}.
\ee
Let $\OP(d)$ and $\DP(d)$ be the sets of odd partitions and strict partitions of the size $d$ respectively.
Then the Schur Q-functions can be expanded as
\be
Q_\lambda= 2^{\frac{\ell(\lambda)-\delta(\lambda)}{2}} \sum_{\mu \in \OP(|\lambda|)}  \frac{\zeta^\lambda_\mu}{z_\mu} p_\mu
\ee
with the inverse relation
\be\label{pasQ}
 p_\mu= 2^{-\ell(\mu)}  \sum_{\lambda \in \DP(|\mu|)} 2^{-\frac{\ell(\lambda)+\delta(\lambda)}{2}} \zeta^\lambda_\mu Q_\lambda.
\ee
Here $p_k=k t_k$ are the independent variables,
\be
\delta(\mu)= 
\begin{cases}
0,\,\,\,\,\,\, {\rm for} \,\,{\rm even}\,\,  \ell(\mu)\\ 
1,\,\,\,\,\,\, {\rm for}\,\, {\rm odd}\,\, \ell(\mu)
\end{cases}
\ee
and $z_\mu=\prod_k \mu(k)! k^{\mu(k)}$.

The characters of the Sergeev group $\zeta_\mu^\rho$ satisfy the orthogonality relations
\be\label{OR}
\sum_{\mu \in \OP(d)}2^{-\ell(\mu)- \delta(\sigma)}\frac{\zeta_\mu^\rho \zeta_\mu^ \sigma}{z_\mu}=\delta_{\rho,\sigma}
\ee
and 
\be\label{O2}
\sum_{\lambda \in \DP(d)} 2^{-\ell(\sigma) -\delta(\lambda)}\frac{\zeta_{\sigma}^\lambda \zeta_{\rho}^\lambda}{z_\sigma}=\delta_{\rho,\sigma}.
\ee
Let us also introduce the {\em central characters}
\be\label{cch}
f^\lambda_\mu =\frac{2^d d!}{2^{\ell(\mu)}z_\mu \dim V^\lambda} \zeta_\mu^\lambda.
\ee
Here
\be\label{dimV}
\dim V^\lambda=\zeta^\lambda_{1^d}=2^{\frac{\delta(\lambda)-\ell(\lambda)}{2}} d! Q_\lambda \Big|_{p_k=\delta_{k,1}}
\ee
is the dimension of the irreducible supermodule associated with the strict partition $\lambda$.

Let us consider the disconnected spin Hurwitz numbers for the ${\mathbb C}{\mathrm P}^1$ with the ramifications at $k$ branch points given by odd partitions $\mu_1,\dots,\mu_k$ with $|\mu_j|=d$.
The Gunningham formula \cite{Guni,LeeG} describes them in terms of the  central characters of the Sergeev group:
\be\label{spinc}
H_{d}^\theta (\mu_1,\dots,\mu_k)=2^{-d -\sum_{i=1}^k \ell^*(\mu_i)/2}\sum_{\lambda\in \DP(d)} 2^{-\delta(\lambda)} \left(\frac{\dim V^\lambda}{ d!}\right)^2 \prod_{j=1}^k f_{\mu_j}^\lambda,
\ee
where $\ell^*(\mu)=|\mu|-\ell(\mu)$ is the {\em colength} of the partition $\mu$.

\subsection{From spin Hurwitz numbers to 2-BKP hierarchy}
Let us single out two of the $k$ partitions and denote them by $\mu$ and $\nu$.
Using Equation (\ref{cch}) we can rewrite the spin Hurwitz numbers (\ref{spinc})  as follows
\be\label{spinc1}
H_d^\theta(\mu_1,\dots,\mu_{k-2},\mu,\nu)=2^{-\frac{1}{2}( \ell(\mu)+ \ell(\nu) +\sum_{i=1}^{k-2} \ell^*(\mu_i))}\sum_{\lambda\in \DP(d)} 2^{-\delta(\lambda)} \frac{\zeta_\mu^\lambda}{z_\mu} \frac{\zeta_\nu^\lambda}{z_\nu} \prod_{j=1}^{k-2} f_{\mu_j}^\lambda. 
\ee

Let us introduce $k-2$ families of weights $r^{(j)}_m$ for $1\leq j\leq k-2$, $m\in {\mathbb Z}_{+}$, associated with $k-2$ branch points. Then for the spin Hurwitz numbers (\ref{spinc1}) we introduce their weighted combinations
\be\label{ws}
H_{d,{\bf r}}^\theta(\nu,\mu)=\sum_{\mu_1,\dots, \mu_{k-2} \in \OP(d)} H_d^\theta(\mu_1,\dots,\mu_{k-2},\mu,\nu) R_{\mu_1}(r^{(1)})\dots R_{\mu_{k-2}}(r^{(k-2)}),
\ee
where 
\be\label{Rdef}
R_\mu(r)= \hbar^{-\ell^*(\mu)}  2^{-\ell(\mu)/2}\sum_{\sigma \in \DP(d)} \frac{\dim V^\sigma}{d! 2^{d/2}} 2^{-\delta(\sigma)}\zeta_\mu^\sigma r_\sigma.
\ee
Let us stress that the trivial ramifications $\mu_j=1^d$ are allowed in the summation. For the empty partition we put $R_\emptyset=1$.
If we compare this expression with the decomposition of the functions $p_\mu(Q_\sigma)$ in the basis of Schur Q-functions (\ref{pasQ}), then using Equation (\ref{dimV}) we get
\be\label{Rmu}
R_\mu(r)= \hbar^{-\ell^*(\mu)}  2^{-\ell^*(\mu)/2}p_\mu(Q_\sigma(\delta_{k,1})r_\sigma).
\ee
In particular,
\begin{equation}\label{rtoR}
\begin{split}
R_{[1]}(r)&=\frac{1}{2}Q_{[1]}(\delta_{k,1})r_1=r_1,\\
R_{[1,1]}(r)&=\frac{1}{2}Q_{[2]}(\delta_{k,1})r_2=r_2,\\
R_{[3]}(r)&={(2\hbar^2)}^{-1}\left(\frac{1}{2}Q_{[3]}(\delta_{k,1})r_3-\frac{1}{2}Q_{[2,1]}(\delta_{k,1})r_2r_1\right)=\frac{1}{3\hbar^2}(r_3-r_2r_1),\\
R_{[1,1,1]}(r)&=\frac{1}{2}Q_{[3]}(\delta_{k,1})r_3+\frac{1}{4}Q_{[2,1]}(\delta_{k,1})r_2r_1=\frac{1}{3}(2r_3+r_2r_1).
\end{split}
\end{equation}

Definition (\ref{Rdef}) is justified by the following observation: 
from the orthogonality relation (\ref{OR}) it follows that
\be
\sum_{\mu \in \OP(d)} 2^{-\ell^*(\mu)/2} f_{\mu}^{\lambda} R_\mu(r)=r_\lambda.
\ee
Therefore
\be\label{weight}
H_{d,{\bf r}}^\theta(\nu,\mu)=2^{-\frac{ \ell(\mu)+ \ell(\nu)}{2}} \sum_{\lambda\in \DP(d)} 2^{-\delta(\lambda)} \frac{\zeta_\mu^\lambda}{z_\mu} \frac{\zeta_\nu^\lambda}{z_\nu} \prod_{j=1}^{k-2} r^{(j)}_\lambda. 
\ee

By the Riemann--Hurwitz formula
\be
2-2g=\ell(\mu)+\ell(\nu)-\sum_{i=1}^{k-2} \ell^*(\mu_i),
\ee
where $g$ is the genus of the covering curve.
Consider the following generating function
\be\label{taugen}
\tau({\bf t},{\bf s})=\sum_{d=0}^\infty  \sum_{\mu,\nu\in \OP(d)} \hbar^{2g-2+\ell(\mu)+\ell(\nu)}  2^{-\frac{\ell(\mu)+\ell(\nu)}{2}} H_{d,{\bf r}}^\theta(\nu,\mu) \prod_{j=1}^{\ell(\mu)} \mu_j \prod_{j=1}^{\ell(\nu)} \nu_j t_\mu s_\nu.
\ee
Then for any choice of parameters $r_k^{(j)}$ from \eqref{weight} we have
\begin{theorem}The generating function $\tau({\bf t},{\bf s})$ is a hypergeometric tau-function of the 2-BKP hierarchy
\be\label{tauW}
\tau({\bf t},{\bf s})=\sum_{\lambda \in \DP}   \frac{Q_\lambda({\bf t}/2)Q_\lambda({\bf s}/2)}{2^{\ell(\mu)}}  \prod_{j=1}^{k-2} r^{(j)}_\lambda.
\ee
\end{theorem}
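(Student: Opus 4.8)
The plan is to reduce the stated formula \eqref{tauW} to the hypergeometric tau-function \eqref{hyperg} by identifying the exponential weights $e^{2T_{\lambda_1}+\dots+2T_{\lambda_{\ell(\lambda)}}}$ with the products $\prod_{j=1}^{k-2} r^{(j)}_\lambda$. First I would substitute the expression \eqref{weight} for $H_{d,{\bf r}}^\theta(\nu,\mu)$ into the generating function \eqref{taugen}. Using the Riemann--Hurwitz relation $2g-2 = \ell(\mu)+\ell(\nu)-\sum_{i=1}^{k-2}\ell^*(\mu_i) - \sum 2$ to absorb the $\hbar$-powers (these are already packaged in the $r^{(j)}_\lambda$ via the normalization in \eqref{Rdef}, so that $r^{(j)}_\lambda$ carries the $\hbar^{-\ell^*}$ factors), the prefactor $\hbar^{2g-2+\ell(\mu)+\ell(\nu)}2^{-(\ell(\mu)+\ell(\nu))/2}$ combines with the $2^{-(\ell(\mu)+\ell(\nu))/2}$ already present in \eqref{weight} and with $\prod\mu_j\prod\nu_j = z_\mu z_\nu / (\prod\mu(k)!\,\mu(k)^{\cdots})$ — more precisely with the $1/(z_\mu z_\nu)$ appearing in \eqref{weight} — so that after collecting terms the sum over $\mu,\nu\in\OP(d)$ factorizes.

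The key step is then to perform the sums over $\mu$ and $\nu$ separately. After reorganizing, the $t$-dependence is $\sum_{\mu\in\OP(d)} 2^{-\ell(\mu)}\frac{\zeta^\lambda_\mu}{z_\mu} p_\mu({\bf t})$ (with $p_k = k t_k$), and by the expansion of the Schur Q-function in the power sums — the inverse relation \eqref{pasQ}, or equivalently the direct expansion preceding it — this sum is exactly $2^{-(\ell(\lambda)+\delta(\lambda))/2}\cdot(\text{something})^{-1} Q_\lambda$ evaluated at the rescaled times ${\bf t}/2$; tracking the powers of $2$ and the $\delta(\lambda)$ contributions carefully, the two sums over $\mu$ and $\nu$ together with the explicit factor $2^{-\delta(\lambda)}$ in \eqref{weight} produce precisely $2^{-\ell(\lambda)} Q_\lambda({\bf t}/2) Q_\lambda({\bf s}/2)$. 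This matches the claimed right-hand side of \eqref{tauW} (where $2^{\ell(\mu)}$ in the denominator should read $2^{\ell(\lambda)}$), with the residual $\lambda$-dependent weight being exactly $\prod_{j=1}^{k-2} r^{(j)}_\lambda$, which by construction does not depend on $\mu,\nu$.

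Finally, to conclude that $\tau({\bf t},{\bf s})$ is a hypergeometric tau-function of the 2-BKP hierarchy, I would invoke the description \eqref{hyperg}: a hypergeometric 2-BKP tau-function is characterized by coefficients of the form $2^{-\ell(\lambda)} c_\lambda Q_\lambda({\bf t}/2) Q_\lambda({\bf s}/2)$ with $c_\lambda = \prod_j e^{2T_{\lambda_j}}$ for some odd sequence $T_n$. It therefore suffices to check that $\lambda \mapsto \prod_{j=1}^{k-2} r^{(j)}_\lambda$ is multiplicative over the parts of $\lambda$, i.e. of the form $\prod_j \rho_{\lambda_j}$ for some sequence $\rho_m$ with $\rho_m>0$ (so one may set $T_m = \tfrac12\log\rho_m$, extended oddly). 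Since each $r^{(j)}_\lambda = \prod_{i} r^{(j)}_{\lambda_i}$ by the notational convention $r_\lambda = \prod r_{\lambda_i}$, the product over $j$ is $\prod_{i} \big(\prod_{j} r^{(j)}_{\lambda_i}\big)$, which is manifestly of the required multiplicative form with $\rho_m = \prod_{j=1}^{k-2} r^{(j)}_m$.

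The main obstacle I anticipate is purely bookkeeping: correctly matching all the powers of $2$, the $\hbar$-gradings, the $\delta(\lambda)$ and $\delta(\mu)$ parity corrections, and the $z_\mu$ versus $\prod\mu_j$ factors between \eqref{spinc}, \eqref{Rdef}, \eqref{weight}, and \eqref{taugen} — in particular verifying that the factor $2^{-\ell^*(\mu)/2} f^\lambda_\mu$ in the orthogonality consequence below \eqref{Rdef} is precisely what converts the double sum over $\mu_1,\dots,\mu_{k-2}$ into the clean product $\prod r^{(j)}_\lambda$, and that no stray factor survives. Once the normalization of $R_\mu(r)$ in \eqref{Rmu}--\eqref{rtoR} is taken as the definition, this is a finite check, but it is where an error would most easily hide.
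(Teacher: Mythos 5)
Your proposal is correct and follows essentially the same route the paper intends (the paper leaves the proof implicit): substitute \eqref{weight} into \eqref{taugen}, note that $\hbar^{2g-2+\ell(\mu)+\ell(\nu)}=\hbar^{\sum_i\ell^*(\mu_i)}$ cancels the $\hbar^{-\ell^*(\mu_i)}$ built into $R_{\mu_i}$, use $2^{-\ell(\mu)}p_\mu({\bf t})=p_\mu({\bf t}/2)$ together with the power-sum expansion of $Q_\lambda$ to collapse the $\mu$ and $\nu$ sums into $2^{-(\ell(\lambda)-\delta(\lambda))/2}Q_\lambda({\bf t}/2)$ each, so that the factor $2^{-\delta(\lambda)}$ from \eqref{weight} combines to give $2^{-\ell(\lambda)}$, and finally use multiplicativity $r^{(j)}_\lambda=\prod_i r^{(j)}_{\lambda_i}$ to set $T_m=\tfrac12\sum_j\log r^{(j)}_m$ and match \eqref{hyperg}. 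You are also right that $2^{\ell(\mu)}$ in the displayed formula \eqref{tauW} should read $2^{\ell(\lambda)}$.
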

This tau-function can be identified with (\ref{hyperg}) if one puts
\be
T_m=\frac{1}{2}\sum_{j=1}^{k-2} \log r_m^{(j)}.
\ee

Similarly to the case of ordinary Hurwitz numbers, we can consider the limit when the maximal number of the branch points $k$ tends to infinity.


\subsection{Weighted spin Hurwitz numbers}

In the previous section we have constructed the weighted sums of the spin Hurwitz numbers that lead to the tau-functions of the 2-BKP hierarchy. While working with arbitrary parameters $r_k^{(j)}$ allow us to trace more information about the spin Hurwitz numbers from the properties of the tau-function, 
similarly to the case of the ordinary Hurwitz numbers \cite{Zog,GPH,ACEH3} we would like to introduce the distinguished weights, parametrized by one parameter $c_j$, $j=1,\dots, k-2$ for each of $k-2$ points.

By analogy with the ordinary weighted Hurwitz numbers, see \cite{GPH} and, more specifically,  in \cite[Equation (3.1)]{ACEH3}, one would tend to put $R_\mu(r^{(j)})=  c_j^{\ell^*(\mu)}$. For this choice the weighted spin Hurwitz numbers \eqref{ws} would be independent of $\hbar$. However, it is easy to see that $R_\mu$ should depend on $\hbar$ -- this is clear from (\ref{rtoR}).
Therefore, for the combinations of the spin Hurwitz numbers defined by (\ref{ws}) we need some ``completion'' of the partitions for the rational weight functions,
\be
R_\mu(r^{(j)})=\sum_{k=0}^{\infty} R_\mu^{(k)} \hbar^{2k},
\ee
a new effect of spin Hurwitz numbers which is absent in the theory of ordinary weighted Hurwitz numbers. 

Let us consider the generating function (\ref{tauW}) for the case with the maximal number of the branch points $k=3$. We claim that this tau-function can be considered as a generating function  of a spin version of dessins d'enfants. We also put $s_k=\delta_{k,1}\hbar^{-1}$, therefore the non-trivial branching is allowed only at two points. Then the tau-function \eqref{tauW} reduces to
\be
\tau({\bf t})= \sum_{\lambda \in \DP}   \frac{Q_\lambda({\bf t}/2)Q_\lambda(\delta_{k,1})}{2^{\ell(\mu)+|\lambda|} \, \hbar^{|\lambda|}}  r_\lambda,
\ee
where $r_\lambda = r^{(1)}_\lambda$. From the orthogonality relation \eqref{O2} and Equation (\ref{taugen}) it follows that 
\be\label{tauR}
\tau({\bf t})= \sum_{\mu \in \OP} \hbar^{-\ell (\mu)} 2^{-\frac{|\mu|+ \ell(\mu)}{2}}
\prod_{j=1}^{\ell(\mu) }\mu_j  t_\mu \frac{R_\mu(r)}{z_\mu}.
\ee

To relate $R_\mu$ to $\overline{\psi}(z)$ we use the results of Section \ref{S4} below.
From Equation \eqref{tauR} it follows that the coefficients $R_\mu(r)$'s are proportional to the coefficients of the correlation functions $W_n^\bullet$, namely
\be\label{WtoR}
W_n^\bullet= \hbar^{-n}  \sum_{\mu \in \OP, \ell(\mu)=n}   2^{-\frac{|\mu|+n}{2}}
R_\mu(r)  \sum_{\sigma \in S_n} X_{\sigma(1)}^{\mu_1}\dots X_{\sigma(n)}^{\mu_n}.
\ee
If we require $R_\mu^{(0)}=c^{\ell^*(\mu)}$, then for $n=1$ the leading term of equation (\ref{WtoR}) reduces to
\be
W_{0,1}(x)=\frac{1}{2} \sum_{k \in {{\mathbb Z}_{\odd}^+}}
\left(\frac{c}{\sqrt{2}}\right)^{k-1}X^k.
\ee
After identification of this expression with the general expression for the correlation function given by
Proposition \ref{prop:ClosedW01} we find $\psi(z)$,
\be
\psi(z)=\frac{1}{2}\log \frac{1+\sqrt{1+2c^2 z^2}}{2}.
\ee
One can identify it with $\overline{\psi}$, however, it is also possible to consider the $\hbar$-deformations.

Another possibility is to consider
\be
\overline{\psi}(z)=\frac{1}{2}\log\left(1+ \frac{c^2 z^2 }{2}  \right)
\ee
The associated functions $R_\mu$, (\ref{rtoR}) are rational functions of $c$, 
\begin{equation}
\begin{split}
R_{[1]}(r)&=1+ \frac{1}{8} c^2 \hbar^2 ,\\
R_{[1,1]}(r)&=\left(1+ \frac{1}{8} c^2 \hbar^2 \right)\left(1+ \frac{9}{8} c^2 \hbar^2\right),\\
R_{[3]}(r)&=c^2  \left(1+ \frac{1}{8} c^2 \hbar^2 \right)\left(1+ \frac{9}{8} c^2 \hbar^2\right),\\
R_{[1,1,1]}(r)&=\left(1+ \frac{1}{8} c^2 \hbar^2 \right)\left(1+ \frac{9}{8} c^2 \hbar^2\right)\left(1+ \frac{17}{8} c^2 \hbar^2\right).
\end{split}
\end{equation}




For $c=1$ the generating function $\tau({\bf t})$ for this choice of parametrization
 can be identified with the generalized BGW tau-function \cite{BGWP},
\be
\left.\tau_{BGW}({\bf t}/2)\right|_{ N^2 \mapsto -\frac{2}{\hbar^2}}= \tau({\bf t},{\delta_{k,1}\hbar^{-1}}).
\ee


\subsection{From BKP to KP} It is well known that the solutions of the BKP hierarchy are related to the solutions of the KP hierarchy for the particular choice of the variables \cite{JMBKP}. Following \cite{OBKP}, in this section we consider this relation for the hypergeometric tau-functions of both hierarchies. Namely, we relate any hypergeometric 2-BKP tau-function (\ref{hyperg}) to a hypergeometric tau-function of the 2-KP hierarchy.

Let us consider a 2-component system of neutral fermions $\phi_k^{(a)}$, $a=1,2$, satisfying the anti-commutation relations
\be
\left\{\phi_j^{(k)},\phi_l^{(m)}\right\}=(-1)^j\delta_{k,m}\delta_{j+l,0}.
\ee
Following \cite{JMroot} we can relate them to the {\em charged free fermions}
\be
\phi_j^{(1)}=\frac{\psi_j +(-1)^j \psi_{-j}^*}{\sqrt{2}},\,\,\,\,\,\,\,\,\, \phi_j^{(2)}=\sqrt{-1}\frac{\psi_j -(-1)^j \psi_{-j}^*}{\sqrt{2}}.
\ee
for $j\in {\mathbb Z}$. We immediately have
\be
 \psi_j \psi_j^* +\psi_{-j}^* \psi_{-j} =(-1)^j \left(  \phi_j^{(1)} \phi_{-j}^{(1)}+\phi_j^{(1)} \phi_{-j}^{(1)} \right).
\ee
Consider the bosonic operators for the charged fermions
\be
J_k^{\KP}=\sum_{j\in {\mathbb Z}}\Normord{ \psi_j \psi_{j+k}^*}.
\ee
Then for odd $k$ we have
\be
J_k^{\KP}=J_k^{(1)}+J_k^{(2)},
\ee
where the bosonic operators $J_k^{(j)}$ are given in terms of the corresponding neutral fermions by Equation (\ref{bosop}).

Let us consider the hypergeometric 2-KP tau-function
\be\label{KP2}
\tau_{\KP}({\bf t},{\bf s})=\lvac e^{J^{\KP}_+({\bf t})} e^{2\sum_{j=1}^\infty T_j (\Normord{  \psi_j \psi_j^*} -\Normord{ \psi_{-j}^* \psi_{-j}})}  e^{J^{\KP}_-({\bf s})} \rvac,
\ee
where $T_k$ are some parameters and $J_{\pm}^{\KP}({\bf t})=\sum_{k=1}^\infty t_k J_{\pm k}^{\KP}$.
Note that the group element in \eqref{KP2}  is not the most general diagonal group element. However, for this choice of the the group element we have a simple relation between this tau-function and a tau-function of the 2-BKP hierarchy. If all even time variables vanish, $t_{2k}=s_{2k}=0$ for $k\in {\mathbb Z}_+$, then
in terms of the neutral fermions we have
\be
\left.\tau_{\KP}({\bf t},{\bf s})\right|_{t_{2k}=s_{2k}=0}= \lvac e^{J^{(1)}_+({\bf t})+J^{(2)}_+({\bf t}) }  e^{2 \sum_{j=1}^\infty (-1)^j T_j (\Normord{  \phi^{(1)}_j  \phi^{(1)}_{-j}} +\Normord{\phi^{(2)}_j  \phi^{(2)}_{-j}} )}  e^{J^{(1)}_-({\bf s})+J^{(2)}_-({\bf s}) }  \rvac
\ee
Therefore \cite{OBKP}
\be\label{root}
\left.\tau_{\KP}({\bf t},{\bf s})\right|_{t_{2k}=s_{2k}=0}=\tau({\bf t},{\bf s})^2,
\ee
where
\be\label{Btau}
\tau({\bf t},{\bf s})= \lvac e^{J_+({\bf t}) }  e^{2 \sum_{j=1}^\infty (-1)^j T_j \Normord{  \phi_j  \phi_{-j}} }  e^{J_-({\bf s}) }  \rvac
\ee
is a hypergeometric tau-function of 2-BKP hierarchy \eqref{hyperg}. By definition, it depends only on odd times $t_{2k+1}$ and $s_{2k+1}$.

Let us compare the expansions of the tau-functions $\tau_{\KP}$ and $\tau$ in terms of the corresponding sets of the Schur functions. 
For the hypergeometric 2-KP tau-function (\ref{KP2}) one has
\be\label{2cKP}
\tau_{KP}({\bf t},{\bf s})=\sum_{\lambda} e^{2 \sum_{(i,j)\in \lambda}\overline{\psi}(\hbar(j-i-1/2))} s_\lambda({\bf t}) s_\lambda({\bf s}),
\ee
where $s_\lambda$ are the ordinary Schur functions and the sum runs over all partitions. These tau-functions are generating functions of the ordinary weighted Hurwitz numbers.
For the 2-component BKP tau-function (\ref{Btau}) we have
\be\label{hag}
\tau({\bf t},{\bf s})=\sum_{\lambda \in \DP}  e^{2\sum_{(i,j)\in \lambda}\overline{\psi}(\hbar (j-1/2))}  \frac{Q_\lambda({\bf t}/2) Q_\lambda({\bf s}/2)}{2^{\ell(\lambda)}}.
\ee


We see that for any hypergeometric tau-function of 2-BKP there exist a hypergeometric tau-function of 2-KP satisfying
(\ref{root}). It is easy to see that such 2-KP tau-function is not unique. Let $\lambda'$ denotes the transpose partition of $\lambda$. Then, as it follows i.e. from the Giambelli formula,
\be
\left.s_\lambda({\bf t})\right|_{t_{2k}=0}=\left.s_{\lambda'}({\bf t})\right|_{t_{2k}=0}
\ee
and for the Equation (\ref{2cKP}) we have
\begin{equation}
\begin{split}\label{dualsum}
\left.\tau_{KP}({\bf t},{\bf s})\right|_{t_{2k}=s_{2k}=0}&=\left.\sum_{\lambda} e^{2\sum_{(i,j)\in \lambda}\overline{\psi}(\hbar(j-i-1/2))} s_{\lambda'}({\bf t}) s_{\lambda'}({\bf s})\right|_{t_{2k}=s_{2k}=0}\\
&=\left.\sum_{\lambda} e^{2\sum_{(i,j)\in {\lambda'}}\overline{\psi}(\hbar(j-i-1/2))} s_{\lambda}({\bf t}) s_{\lambda}({\bf s})\right|_{t_{2k}=s_{2k}=0}\\
&=\left.\sum_{\lambda} e^{2\sum_{(i,j)\in {\lambda}}\overline{\psi}(\hbar(i-j-1/2))} s_{\lambda}({\bf t}) s_{\lambda}({\bf s})\right|_{t_{2k}=s_{2k}=0}.
\end{split}
\end{equation}
Therefore, (\ref{root}) is also satisfied for the tau-function (\ref{2cKP}) with $\overline{\psi}(z)$ substituted by $\overline{\psi}(-z-\hbar)$.

Hence, we can relate any generating function of the spin Hurwitz numbers \eqref{hag} to the generating function of the ordinary Hurwitz numbers \eqref{2cKP}. Moreover, we have at least two different hypergeometric tau-functions of the 2-KP hierarchy, corresponding to a given hypergeometric tau-function of the 2-BKP hierarchy. We expect that this identification should lead to a non-trivial relation between spin and ordinary Hurwitz numbers.

Let us consider a few examples.
If $T(x)=ax$, then $\overline{\psi}(z)=a$ is a constant, and the tau-function of the 2-BKP hierarchy is very simple
\begin{equation}
\begin{split}
\tau({\bf t},{\bf s})&=\sum_{\lambda \in \DP}  e^{2a|\lambda|}  \frac{Q_\lambda({\bf t}/2) Q_\lambda({\bf s}/2)}{2^{\ell(\lambda)}}\\
&=\exp\left(a \sum_{k\in {\mathbb Z}^{+}_{\odd}}k t_k s_k\right).
\end{split}
\end{equation}

More complicated example corresponds to $T(x)=\frac{bx^3}{3}+ax$ for some $a$ and $b$. It is associated with to $\overline{\psi}(z)=\frac{b}{2\hbar^2}z^2+\frac{b}{12}+a$.
The identity (\ref{root}) for this case with $a=\frac{2}{3}b$ and $\hbar=1$ was proven by Lee \cite{LeeRoot}. On the KP side the generating function, considered by Lee, is given by the last line of (\ref{dualsum}).


\section{Diagonal group element and $n$-point functions}\label{S4} 

In this section we  prove explicit closed algebraic formulas for the correlation functions $W_{g,n}$.


\subsection{Operators $\mathbb{J}_k$}

For the diagonal group element (\ref{Ddiag}) introduce the operators
\be
{\mathbb J}_k={\mathcal D}^{-1} J_k {\mathcal D},
\ee
Our first goal is to provide a few explicit formulas for these operators. To this end, we introduce a fermionic operator
\begin{equation}
\begin{split} \label{EE}
{\mathcal E} (u,a)& =\Normord{\phi(a^{-1} e^{u/2})\phi(-a^{-1} e^{-u/2})} \\
& =\sum_{k,m \in {\mathbb Z}} (-1)^m 
\Normord{\phi_{m-k} \phi_{-m}}
a^k e^{(m-k/2)u} \\
& =\sum_{k,m \in {\mathbb Z}} (-1)^m 
\hat{F}_{m-k,m} 
a^k e^{(m-k/2)u}. 
\end{split}
\end{equation}
Let
\be
{\mathcal S}(z)=\frac{e^{z/2}-e^{-z/2}}{z}.
\ee
Then in terms of the bosonic operators (\ref{bosop}) the operator ${\mathcal E} (u,a)$ is a reparametrization of the operator 
$\widehat{V}^{(2)}_B$ given by \eqref{Vr}, and
 can be represented as
\be \label{eq:DefinitionCalE}
{\mathcal E} (u,a) =\frac{1}{2}\frac{1+e^{-u}}{1-e^{-u}}\left(\exp\left(2u \sum_{k\in{\mathbb Z}_{\odd}^+}a^{-k} {\mathcal S}(ku) J_{-k}\right)\exp\left(2u\sum_{k\in{\mathbb Z}_{\odd}^+}a^k{\mathcal S}(ku) J_k\right) -1\right).
\ee

\begin{proposition}
The operators ${\mathbb J}_k$ belong to the image of the projective representation of $B_{\infty}$ for all $k \in {\mathbb Z}_{\odd}$
\be
{\mathbb J}_k= \frac{1}{2}\sum_{m\in {\mathbb Z}}(-1)^{m+1} e^{T_{k+m}-T_{-k-m}-T_m+T_{-m}} \hat{F}_{m,m+k}
\ee
and
\be
{\mathbb J}_k=\left.\frac{1}{2} [a^k] e^{2T(\p_u+1/2a\p_a)-2T(\p_u-1/2a\p_a)} {\mathcal E} (u,a)\right|_{u=0}.
\ee
\end{proposition}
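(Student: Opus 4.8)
The plan is to conjugate $J_k$ mode by mode, exploiting that ${\mathcal D}=\exp({\mathcal A})$ with ${\mathcal A}=\sum_{p\in{\mathbb Z}}(-1)^pT_p\Normord{\phi_p\phi_{-p}}$ is ``diagonal'' in the neutral fermion modes. First I would bracket ${\mathcal A}$ with a single fermion: using only the Clifford relations \eqref{ac} one gets $[\Normord{\phi_p\phi_{-p}},\phi_j]=(-1)^p(\delta_{j,p}-\delta_{j,-p})\phi_j$, hence, summing over $p$ and using $T_{-n}=-T_n$, $[{\mathcal A},\phi_j]=(T_j-T_{-j})\phi_j=2T_j\phi_j$, so that ${\mathcal D}^{-1}\phi_j{\mathcal D}=e^{-2T_j}\phi_j$. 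For $k\in{\mathbb Z}_{\odd}$ the two indices in $\hat F_{m,m+k}=\Normord{\phi_m\phi_{-m-k}}$ sum to $-k\neq0$, so the normal-ordering subtraction vanishes and $\hat F_{m,m+k}=\phi_m\phi_{-m-k}$ as a plain product; conjugating the two factors separately gives
\be
{\mathcal D}^{-1}\hat F_{m,m+k}{\mathcal D}=e^{-2T_m}e^{-2T_{-m-k}}\,\phi_m\phi_{-m-k}=e^{2T_{m+k}-2T_m}\,\hat F_{m,m+k}.
\ee
Substituting this into ${\mathbb J}_k={\mathcal D}^{-1}J_k{\mathcal D}=\tfrac12\sum_m(-1)^{m+1}{\mathcal D}^{-1}\hat F_{m,m+k}{\mathcal D}$ (with $J_k$ as in \eqref{bosop}) and rewriting $2T_{m+k}-2T_m=T_{k+m}-T_{-k-m}-T_m+T_{-m}$ gives the first displayed formula. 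Equivalently, one may run the same computation at the level of the quadratic generators with the $B_\infty$ commutation relations recalled above: for odd $k$ the central term carries a factor $\delta_{k,0}$ and drops, so $\mathrm{ad}_{\mathcal A}$ is genuinely diagonal on $\{\hat F_{m,m+k}\}_{m\in{\mathbb Z}}$ with eigenvalues $2T_m-2T_{m+k}$, and ${\mathbb J}_k$ lies in the image of the projective representation of $B_\infty$ either by this formula or conceptually because conjugation by the group element ${\mathcal D}$ preserves that image.

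For the second formula I would work with the generating series ${\mathcal E}(u,a)=\sum_{j,m\in{\mathbb Z}}(-1)^m\hat F_{m-j,m}\,a^je^{(m-j/2)u}$ of \eqref{EE} (equivalently its bosonic form \eqref{eq:DefinitionCalE}). The commuting operators $D_\pm:=\partial_u\pm\tfrac12a\partial_a$ act on the monomial $a^je^{(m-j/2)u}$ by the scalars $D_+\mapsto m$ and $D_-\mapsto m-j$; hence $T(D_+)$ and $T(D_-)$ act by $T_m$ and $T_{m-j}$ — only the values of $T$ at integers enter — and $e^{2T(D_+)-2T(D_-)}$ rescales the $a^je^{(m-j/2)u}$-component of $\hat F_{m-j,m}$ by $e^{2T_m-2T_{m-j}}$. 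Extracting $[a^k]$ selects $j=k$; after the shift $m\mapsto m+k$, evaluating at $u=0$ and multiplying by $\tfrac12$ one obtains $\tfrac12\sum_m(-1)^{m+k}e^{2T_{m+k}-2T_m}\hat F_{m,m+k}$, which for odd $k$ is exactly the expression for ${\mathbb J}_k$ just derived. (At $T\equiv0$ this recovers $\tfrac12[a^k]{\mathcal E}(u,a)\big|_{u=0}=J_k$, a convenient consistency check immediate from the mode expansion of ${\mathcal E}$ and the definition \eqref{bosop}.)

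Most of the work is bookkeeping; the points that need care are: (i) the vanishing, for odd $k$, of the normal-ordering constant in $\hat F_{m,m+k}$ — equivalently, of the central term in the generator-level approach — which is what makes the conjugation purely multiplicative; (ii) tracking the signs $(-1)^p,(-1)^m$ together with $T_{-n}=-T_n$, so that the two fermion factors combine into $e^{2T_{m+k}-2T_m}$ and the exponent matches $T_{k+m}-T_{-k-m}-T_m+T_{-m}$; and (iii) justifying the termwise action of $e^{2T(D_+)-2T(D_-)}$ on the infinite series ${\mathcal E}(u,a)$, which is legitimate precisely because the monomials $a^je^{(m-j/2)u}$ simultaneously diagonalize $D_+$ and $D_-$. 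No analytic assumption on $T$ or $\overline\psi$ is needed: the whole argument takes place in a completion of the projective representation of $B_\infty$, with $T$ entering only through $\{T_n\}_{n\in{\mathbb Z}}$.
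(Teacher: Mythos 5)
Your proof is correct and follows essentially the same route as the paper: both exploit that $\mathcal{D}$ acts diagonally (the paper via the adjoint action of $\sum_a(-1)^aT_aF_{a,a}$ on the quadratic generators $F_{k,m}$ using \eqref{comF}, you via ${\mathcal D}^{-1}\phi_j{\mathcal D}=e^{-2T_j}\phi_j$ and the vanishing of the normal-ordering constant for odd $k$), and the second formula is obtained in both cases by matching the mode expansion \eqref{EE} of ${\mathcal E}(u,a)$, on which $\p_u\pm\tfrac12 a\p_a$ act by the scalars $m$ and $m-j$. Your explicit remarks on the vanishing of the central term for $k\neq 0$ and the consistency check at $T\equiv 0$ are correct and, if anything, slightly more careful than the paper's treatment.
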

\begin{proof}

From (\ref{comF}) we have
\be
\left[\sum_{a\in {\mathbb Z}}(-1)^{a} T_a F_{a,a},F_{k,m}\right]=(T_k-T_{-k}-T_m+T_{-m})F_{k,m}.
\ee
Hence
\be
{\mathcal D}^{-1}  \hat{F}_{k,m} {\mathcal D} = e^{-T_k+T_{-k}+T_m-T_{-m}}\hat{F}_{k,m}
\ee
and
\be
{\mathbb J}_k= \frac{1}{2}\sum_{m\in {\mathbb Z}}(-1)^{m+1} e^{T_{k+m}-T_{-k-m}-T_m+T_{-m}} \hat{F}_{m,m+k},
\ee
or, equivalently
\be
{\mathbb J}_k= \frac{1}{2}\sum_{m\in {\mathbb Z}}(-1)^{m+k+1} e^{T_{m}-T_{-m}-T_{m-k}+T_{k-m}} \hat{F}_{m-k,m}.
\ee
Comparing it to (\ref{EE}) we get
\be
{\mathbb J}_k=\left.\frac{1}{2} [a^k] e^{T(\p_u+1/2a\p_a) -T(-\p_u - 1/2a\p_a)-T(\p_u-1/2a\p_a)+T(-\p_u+1/2a\p_a)} {\mathcal E} (u,a)\right|_{u=0}.
\ee
\end{proof}


\subsection{Topological expansion}\label{S5.2}




 In terms of $\overline\psi$ we can rewrite the formula for ${\mathbb J}_k$, $k\in {\mathbb Z}_{\odd}^+$, as
\begin{align}
	{\mathbb J}_k & = \frac{1}{2}\sum_{m\in {\mathbb Z}}(-1)^{m+1} e^{T_{k+m}-T_{-k-m}-T_m+T_{-m}} \hat{F}_{m,m+k} 
	\\ \notag
	&
	= \frac{1}{2}\sum_{m\in {\mathbb Z}}(-1)^{m+1} \exp\left(2\sum_{i=1}^{k} \overline\psi(
	\hbar^2, \hbar (m-\frac 12 + i)) \right)  \hat{F}_{m,m+k} 
\end{align}

Let $\phi_k(y) \coloneqq \exp\left(2\sum_{i=1}^{k} \overline\psi(\hbar^2, y + \hbar (-\frac k2-\frac 12 + i)) \right) = \exp\left(2k \frac{{\mathcal S}(k\hbar \partial_y)}{\mathcal S(\hbar \partial_y)} \overline\psi(\hbar^2,y) \right)$. Then
\begin{align} \label{eq:JJ}
	{\mathbb J}_k & 
	= \frac{1}{2}\sum_{m\in {\mathbb Z}}(-1)^{m+1} \phi_k(\hbar(m+\frac k2)) \hat{F}_{m,m+k} 
	\\ \notag
	&
	=- \frac 12 \sum_{r=0}^\infty \partial_y^r  \phi_k(y) \Big|_{y=0} \cdot [u^r a^k]  {\mathcal E} (\hbar u,-a)
	\\ \notag
	&
	= -\frac 14  \sum_{r=0}^\infty \partial_y^r   \exp\left(2k \frac{{\mathcal S}(k\hbar \partial_y)}{\mathcal S(\hbar \partial_y)} \overline\psi(\hbar^2,y) \right) \Big|_{y=0}
	\\ \notag & \quad
	 [u^r a^k]  
	\frac{1+e^{-\hbar u}}{1-e^{-\hbar u}}\exp\left(2\hbar u \sum_{l\in{\mathbb Z}_{\odd}^+}(-a)^{-l}  {\mathcal S}(l\hbar u) J_{-l}\right)\exp\left(2\hbar u\sum_{l\in{\mathbb Z}_{\odd}^+}(-a)^l{\mathcal S}(l\hbar u) J_l\right) 
	\\ \notag
	&
	= \frac 14  \sum_{r=0}^\infty \partial_y^r \exp\left(2k \frac{{\mathcal S}(k\hbar \partial_y)}{\mathcal S(\hbar \partial_y)} \overline\psi(\hbar^2,y) \right) \Big|_{y=0}
	\\ \notag & \quad
	 [u^r a^k]  
	\frac{e^{\hbar u/2}+e^{-\hbar u/2}}{u\hbar \mathcal{S}(u\hbar)}\exp\left(2\hbar u \sum_{l\in{\mathbb Z}_{\odd}^+} a^{-l} {\mathcal S}(l\hbar u) J_{-l}\right)\exp\left(2\hbar u\sum_{l\in{\mathbb Z}_{\odd}^+} a^l{\mathcal S}(l\hbar u) J_l\right). 
\end{align} 

We also consider an arbitrary series $\overline y(\hbar^2, z) = \sum_{d=0}^\infty \hbar^{2d} y_d(z)$, where each $y_d(z)$ is an odd formal power series in $z$. The constant term of this series, $y_0=\overline y(0, z)$, is also denoted by $y=y(z)$. The prime object of our interest in this and the subsequent sections are the $\hbar$-expansions of the (disconnected) $n$-point functions
\be
H_n^\bullet=\left.\sum_{k_1,\dots,k_n \in {\mathbb Z}_{\odd}^+} \frac{\p^n \tau(\bf t,\hbar^{-1}{\bf s})}{\p t_{k_1}\dots \p t_{k_1}}\right|_{{\bf t}={\bf 0}}\frac{X_1^{k_1}}{k_1}\dots\frac{X_1^{k_n}}{k_n}
\ee
and
\be
W_n^\bullet =D_1\cdots D_n H_{n}^\bullet,
\ee
where $D_i\coloneqq X_i \partial_{X_i}$. Then from the definition of the operators ${\mathbb J}_{m}$ we have
\be
H_n^\bullet=\sum_{m_1,\dots,m_n\in {\mathbb Z}_{\odd}^+}^\infty \frac{X_1^{m_1}\dots X_n^{m_n}}{m_1\dots m_n} \lvac {\mathbb J}_{m_1}\dots \mathbb{ J}_{m_n} e^{\sum_{d=0}^\infty \hbar^{2d} \sum_{k\in{\mathbb Z}_{\odd}^+}\frac{J_{-k}}{\hbar k}[z^k] y_d(z)}\rvac
\ee
and
\begin{align} \label{eq:FirstFormulaDisconnectedW}
W_n^\bullet & = \sum_{m_1,\dots,m_n\in {\mathbb Z}_{\odd}^+} X_1^{m_1}\dots X_n^{m_n} \lvac \mathbb{J}_{m_1}\dots \mathbb{J}_{m_n} e^{\sum_{d=0}^\infty \hbar^{2d} \sum_{k\in{\mathbb Z}_{\odd}^+}\frac{J_{-k}}{\hbar k}[z^k] y_d(z)}\rvac.
\end{align}
Using the inclusion-exclusion formulas, we define the \emph{connected} $n$-point functions $H_n$ and $W_n$, $n\geq 1$, and they expand in $\hbar$ as 
\begin{align}
	H_n & = \sum_{g=0}^\infty H_{g,n} \hbar^{2g-2+n}; & W_n & = \sum_{g=0}^\infty W_{g,n} \hbar^{2g-2+n}.
\end{align}

\subsection{Preliminary formulas for $H_n^\bullet$ and $W_n^\bullet$} Denote
\be
B(z,w)\coloneqq \frac{zw}{(z-w)^2} + \frac {zw}{(z+w)^2}.
\ee 

\begin{proposition}\label{prop:ClosedFormula} We have the following formula for $H_{n}^\bullet$:
	\begin{align}\label{eq:ClosedFormula}
		H_{n}^\bullet = \sum_{m_1,\dots,m_n\in\mathbb{Z}_{\odd}^+} \prod_{i=1}^n \frac 1{m_i} X_i^{m_i}  \sum_{r_1,\dots,r_n=0}^\infty 
		\prod_{i=1}^n
		\partial_y^{r_i} \exp\left(2m_i \frac{{\mathcal S}(m_i\hbar \partial_y)}{\mathcal S(\hbar \partial_y)} \overline\psi \right) \Big|_{y=0}
		\\ \notag
		[\prod_{i=1}^n u_i^{r_i} z_i^{m_i}] \prod_{i=1}^n \frac{e^{\frac{\hbar u_i}2}+e^{-\frac{\hbar u_i}2}}{4 u_i\hbar \mathcal{S}(u_i\hbar)} e^{u_i \mathcal{S}(\hbar u_i z_i\partial_{z_i}) \overline y_i }
		\prod_{1\leq i<j\leq n} \left( e^{\hbar^2 u_iu_j \mathcal{S}(\hbar u_i z_i\partial_{z_i})\mathcal{S}(\hbar u_j z_j\partial_{z_j})B(z_i,z_j)} - 1\right),
	\end{align}
where $\overline\psi=\overline\psi(\hbar^2,y)$ and $\overline y_i=\overline y(\hbar^2,z_i)$. An analogous formula for $W_{n}^\bullet$ is obtained by replacing $\prod_{i=1}^n \frac 1{m_i} X_i^{m_i}$ in Equation~\eqref{eq:ClosedFormula} by $\prod_{i=1}^n X_i^{m_i}$.
\end{proposition}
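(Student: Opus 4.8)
The plan is to derive the formula directly from Equation~\eqref{eq:FirstFormulaDisconnectedW} by substituting the expression~\eqref{eq:JJ} for each operator $\mathbb{J}_{m_i}$ and then computing the resulting vacuum expectation value using the Heisenberg commutation relations~\eqref{comJ}. First, I would plug the third (final) form of~\eqref{eq:JJ} into~\eqref{eq:FirstFormulaDisconnectedW}; this produces, for each $i$, the prefactor $\tfrac14 \sum_{r_i} \partial_y^{r_i} \exp(2m_i \tfrac{\mathcal{S}(m_i\hbar\partial_y)}{\mathcal{S}(\hbar\partial_y)}\overline\psi)|_{y=0}$, the extraction $[u_i^{r_i} a_i^{m_i}]$, the scalar factor $\tfrac{e^{\hbar u_i/2}+e^{-\hbar u_i/2}}{u_i\hbar\mathcal{S}(u_i\hbar)}$, and a normal-ordered exponential of bosonic operators $\exp(2\hbar u_i \sum_l a_i^{-l}\mathcal{S}(l\hbar u_i) J_{-l})\exp(2\hbar u_i\sum_l a_i^l\mathcal{S}(l\hbar u_i) J_l)$, with $a_i\to z_i$ after the coefficient extraction (matching the $[z_i^{m_i}]$ in the target). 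The remaining task is purely bosonic: evaluate
$\lvac \prod_{i=1}^n \bigl(e^{A_i^-}e^{A_i^+}\bigr)\, e^{\sum_{d,k}\frac{J_{-k}}{\hbar k}[z^k]\hbar^{2d}y_d}\rvac$,
where $A_i^\pm$ are the linear-in-$J_{\pm}$ exponents above.

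The key computational step is a standard but careful application of the BCH/Wick formula for a Heisenberg algebra: since all exponents are linear in the $J_k$, moving every annihilation part $e^{A_i^+}$ to the right past the creation parts $e^{A_j^-}$ (for $j>i$) and past the source term $e^{\sum J_{-k}(\dots)}$ produces only scalar factors, equal to the exponential of the pairing $[A_i^+, A_j^-]$ resp.\ the pairing of $A_i^+$ with the source. Using $[J_k,J_{-l}]=\tfrac{k}{2}\delta_{k,l}$ for odd $k,l$, the pairing between $e^{A_i^+}$ and $e^{A_j^-}$ for $i<j$ contributes $\exp\bigl(\sum_{l\in\mathbb{Z}_{\odd}^+} 2\hbar u_i \mathcal{S}(l\hbar u_i)\cdot 2\hbar u_j\mathcal{S}(l\hbar u_j)\cdot \tfrac l2 \cdot (z_i/z_j)^l\bigr)$ — wait, one must track which operator carries $z_i^l$ and which $z_i^{-l}$; the surviving term is the one where $A_i^+$ (carrying $z_i^l$, $l>0$) pairs with $A_j^-$ (carrying $z_j^{-l}$), giving, after summing the geometric-type series over odd $l$, precisely the combination $\hbar^2 u_iu_j\,\mathcal{S}(\hbar u_i z_i\partial_{z_i})\mathcal{S}(\hbar u_j z_j\partial_{z_j}) B(z_i,z_j)$ in the exponent (the even/odd split of $\sum_l (z_i/z_j)^l$, weighted by $l$, is exactly what reproduces $B(z,w)=\tfrac{zw}{(z-w)^2}+\tfrac{zw}{(z+w)^2}$). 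The pairing of each $e^{A_i^+}$ with the source $e^{\sum_{d,k}\frac{J_{-k}}{\hbar k}[z^k]\hbar^{2d}y_d}$ produces $\exp\bigl(\sum_{l} 2\hbar u_i\mathcal{S}(l\hbar u_i)\cdot\tfrac1{\hbar l}[z^l]\overline y \cdot \tfrac l2 z_i^l\bigr) = \exp\bigl(u_i \mathcal{S}(\hbar u_i z_i\partial_{z_i})\overline y_i\bigr)$, which is the single-index factor in~\eqref{eq:ClosedFormula}. Finally $\lvac$ and the leftover creation operators and $\rvac$ give $1$.

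There is one subtlety that I expect to be the main obstacle: bookkeeping the $-1$ in each pairwise factor. A naive Wick expansion would give $\prod_{i<j}\exp(\dots B \dots)$ with no subtraction, but the vacuum expectation value $\lvac \mathbb{J}_{m_1}\cdots\mathbb{J}_{m_n}(\cdots)\rvac$ of the \emph{disconnected} correlator must not include the pure ``propagator between $i$ and $j$ with zero source insertions on that pair'' in a way that double counts — more precisely, one has to reorganize so that the $i$–$j$ interaction term appears as $(e^{\hbar^2 u_iu_j\,\mathcal{S}\mathcal{S}B}-1)$, matching how $\mathcal{E}(u,a)$ in~\eqref{eq:DefinitionCalE} already has its own $-1$ and how the operators $\mathbb{J}_k$ are built from it. The cleanest way to handle this is to note that each $\mathbb{J}_{m_i}$ in~\eqref{eq:JJ} is (a derivative-dressed coefficient of) $\mathcal{E}(\hbar u,-a)$, which is $\mathcal{S}$-regularized and vanishes when $u\to 0$ order by order correctly; tracking this through the contraction shows the pairwise factor is genuinely $e^{(\dots)}-1$ and not $e^{(\dots)}$, and that there is no self-pairing term $i=j$ because $\Normord{\cdots}$ removes it. Once this combinatorial point is settled, assembling the prefactors, the scalar factors $\tfrac{e^{\hbar u_i/2}+e^{-\hbar u_i/2}}{4u_i\hbar\mathcal{S}(u_i\hbar)}$ (the $4$ absorbing the $\tfrac14$ from each $\mathbb{J}$), the single-index exponentials, and the pairwise $(e^{\dots}-1)$ factors yields~\eqref{eq:ClosedFormula} verbatim. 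The $W_n^\bullet$ statement follows since $D_i = X_i\partial_{X_i}$ acting on $X_i^{m_i}/m_i$ gives $X_i^{m_i}$, which is the only modification claimed.
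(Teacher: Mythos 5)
Your overall route is the same as the paper's: the printed proof is a two--line remark that the formula ``comes directly from the commutation of the operators $\mathbb{J}_k$ given in Equation~\eqref{eq:JJ}, once one observes that $[J_+(z),J_-(w)]=\tfrac14 B(z,w)$'', with a pointer to~\cite[Section 3.2]{BDKS1}, and you carry out exactly that Wick/BCH computation. Your bookkeeping of the vertex prefactors, of the source pairing $\exp\left(u_i\mathcal{S}(\hbar u_iz_i\partial_{z_i})\overline y_i\right)$, of the resummation of the contraction $\sum_{l\in\mathbb{Z}_{\odd}^+} 2\hbar^2 u_iu_j\, l\,\mathcal{S}(l\hbar u_i)\mathcal{S}(l\hbar u_j)(z_i/z_j)^l$ into $\hbar^2u_iu_j\mathcal{S}(\hbar u_iz_i\partial_{z_i})\mathcal{S}(\hbar u_jz_j\partial_{z_j})B(z_i,z_j)$ in the sector $|z_i|\ll|z_j|$, and of the passage from $H_n^\bullet$ to $W_n^\bullet$ are all correct.

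The one place where your argument goes wrong is the final paragraph about the $-1$. Your ``naive'' Wick answer $\prod_{i<j}e^{\hbar^2u_iu_j\mathcal{S}\mathcal{S}B(z_i,z_j)}$ is in fact the correct expression for the \emph{disconnected} correlator, and neither mechanism you invoke can convert it into $\prod_{i<j}(e^{\cdots}-1)$: the $-1$ in $\mathcal{E}(u,a)$ of Equation~\eqref{eq:DefinitionCalE} is annihilated by the extraction $[a^k]$ with $k\neq 0$ (it has already been dropped in the last lines of~\eqref{eq:JJ}), and normal ordering only removes self-contractions inside a single $\mathcal{E}$, never cross-contractions between distinct $\mathbb{J}_{m_i}$'s. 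The resolution is that the $-1$ in the printed statement is an inconsistency of the paper itself: the factors $(e^{\cdots}-1)$ belong to the \emph{connected} functions, via the cluster expansion $\prod_{i<j}e^{B_{ij}}=\sum_{\gamma}\prod_{(v_k,v_\ell)\in E_\gamma}(e^{B_{k\ell}}-1)$ over \emph{all} simple graphs, restricted to connected $\gamma$ after inclusion--exclusion --- which is exactly how the formula is used in the proof of Theorem~\ref{thm:ClosedFormulaForWgn}. A sanity check: for $n=2$ the connected $W_2$ carries the factor $(e^{B_{12}}-1)$, and $W_2^\bullet=W_2+W_1^\bullet W_1^\bullet$ then forces the disconnected factor to be $e^{B_{12}}$ rather than $e^{B_{12}}-1$. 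So you should trust your Wick computation and not manufacture the $-1$; apart from flagging this discrepancy with the statement as printed, your proof is complete and matches the paper's intended argument.
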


\begin{proof} This formula should be understood as an expansion in the sector $|z_1|\ll |z_2|\ll\cdots \ll |z_n|\ll1$, and it comes directly from the commutation of the operators $\mathbb{J}_k$ given in Equation~\eqref{eq:JJ}, once one observes that 
\begin{equation}
\left[J_+(z), J_{-}(w)\right] = \frac 14 B(z,w).
\end{equation}
We refer also to an argument in~\cite[Section 3.2]{BDKS1}, which does exactly the same in a bit different situation. 
\end{proof}

\subsection{Closed algebraic formula for $W_{g,n}$}


We use a change of variables of exactly the type as in~\cite{ACEH1}, namely, 
\be
X=ze^{-2\psi(y(z))}.
\ee 
Define $D \coloneqq X\partial_X$ and define $Q$ by $D = Q^{-1} z\partial_z$. In the case we have variables $X_1,\dots,X_n$, we define $z_i$ by $X_i\coloneqq X(z_i)$, and furthermore we use the notation $D_i\coloneqq X_i\partial_{X_i}$, $Q_i\coloneqq z_i/X_i \cdot dX_i/dz_i$, $\overline y_i=\overline y(\hbar^2,z_i)$, $y_{i} = y(z_i)$, $\overline \psi_i = \psi(\hbar^2,y_{i})$, and $\psi_{i} = \psi(y_{i})$. 

\begin{theorem} \label{thm:ClosedFormulaForWgn} For $g\geq 0$, $n\geq 2$, $2g-2+n>0$, we have: 
	\begin{align} \label{eq:MainFormulaForWgn}
		& W_{g,n}  = [\hbar^{2g-2+n}]
		\sum_{\substack {j_1,\dots,j_n, \\ r_1,\dots,r_n = 0}}^\infty \left[\prod_{i=1}^n D_i^{j_i} [t_i^{j_i}]
		\frac 1{Q_i}
		e^{-2t_i \psi_{i} } 
		\partial_{y_i}^{r_i} e^{2t_i \frac{{\mathcal S}(t_i\hbar \partial_{y_{i}})}{\mathcal S(\hbar \partial_{y_{i}})} \overline\psi_i } [u_i^{r_i}]\right]
		\\ \notag &
		\prod_{i=1}^n \frac{e^{\frac{\hbar u_i}2}+e^{-\frac{\hbar u_i}2}}{4 u_i\hbar \mathcal{S}(u_i\hbar)} e^{u_i \left(\mathcal{S}(\hbar u_i z_i\partial_{z_i}) \overline y_i -y_{i} \right) }
		\sum_{\gamma\in\Gamma_n} 
		\prod_{(v_k,v_\ell)\in E_\gamma} \left( e^{\hbar^2 u_ku_\ell \mathcal{S}(\hbar u_k z_k\partial_{z_k})\mathcal{S}(\hbar u_\ell z_\ell\partial_{z_\ell})B(z_k,z_\ell)} - 1\right)
	\end{align}	
	Here $\Gamma_n$ is the set of all connected simple graphs on $n$ vertices $v_1,\dots,v_n$, and $E_\gamma$ is the set of edges of $\gamma$. 
\end{theorem}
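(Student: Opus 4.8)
The route to Theorem~\ref{thm:ClosedFormulaForWgn} is to take the disconnected formula of Proposition~\ref{prop:ClosedFormula} for $W_n^\bullet$, pass to the connected correlators by inclusion-exclusion, and then perform the change of variables $X = z e^{-2\psi(y(z))}$. First I would recall that the connected $W_n$ is obtained from $W_n^\bullet$ by the standard logarithmic (cumulant) relation; since the $\bigl(e^{\hbar^2 u_i u_j \mathcal S \mathcal S B(z_i,z_j)} - 1\bigr)$ factors in Equation~\eqref{eq:ClosedFormula} are precisely the ``edge weights'' of a formal pair-interaction, taking the connected part replaces the product over all pairs by a sum over \emph{connected} simple graphs $\gamma \in \Gamma_n$, with the product taken over edges $E_\gamma$. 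This is the familiar exponential-formula/Mayer-cluster step, identical to the one in~\cite[Section~3.2]{BDKS1}: the ``$-1$'' normalisation of each edge factor is exactly what guarantees that only connected graphs survive, and the per-vertex factors $\tfrac{e^{\hbar u_i/2}+e^{-\hbar u_i/2}}{4u_i\hbar\mathcal S(u_i\hbar)}\,e^{u_i\mathcal S(\hbar u_i z_i\partial_{z_i})\overline y_i}$ and the $\partial_y^{r_i}\exp(\cdots)$ operators factor over vertices and come along for the ride unchanged.

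**Change of variables.** The second half is to rewrite the sum over $m_i \in \mathbb Z_{\odd}^+$ and the monomials $z_i^{m_i}$, $X_i^{m_i}$ in terms of the new variable. Here I would follow the Lagrange-inversion bookkeeping of~\cite{ACEH1,BDKS1}: for a single variable, $\sum_{m} X^m [z^m]\bigl(\text{stuff}\bigr)$ becomes, after substituting $X = ze^{-2\psi(y(z))}$, a residue/Lagrange-inversion expression that introduces the Jacobian factor $Q^{-1} = z/X \cdot dX/dz$ (this is the definition of $Q$ via $D = Q^{-1} z\partial_z$) and splits the exponential $\exp\bigl(2m\,\tfrac{\mathcal S(m\hbar\partial_y)}{\mathcal S(\hbar\partial_y)}\overline\psi\bigr)$ as $e^{-2m\psi(y)} \cdot \exp\bigl(2m\,\tfrac{\mathcal S}{\mathcal S}\overline\psi\bigr)$ — the first factor being absorbed into the change of variables (it is the $\psi = \psi_0$ part, responsible for the relation between $X$ and $z$), the second remaining as the genuinely $\hbar$-dependent ``completed-cycle'' deformation. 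The discrete summation variable $m_i$ then gets traded for the operator $D_i$ acting on a formal variable $t_i$ with $[t_i^{j_i}]$ extracting the power, exactly as displayed in~\eqref{eq:MainFormulaForWgn}; correspondingly $z_i^{m_i}$ in the $\overline y_i$ exponential is replaced by the shift operator $\mathcal S(\hbar u_i z_i\partial_{z_i})$ acting on $\overline y_i$, and the ``$-y_i$'' subtraction in $e^{u_i(\mathcal S(\hbar u_i z_i\partial_{z_i})\overline y_i - y_i)}$ is precisely the piece that compensates the $e^{-2m\psi(y)}$ already used up in defining $X(z)$.

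**Main obstacle.** The delicate point is not the combinatorics of connectedness (that is routine once one invokes the exponential formula) but the rigorous justification of the change of variables at the level of formal series: one must check that the substitution $X = ze^{-2\psi(y(z))}$ is invertible as a formal power series (so that $y(z)$, hence $z(X)$, makes sense — this needs $\psi(0)$ to contribute only a rescaling, i.e. that $X = z(1 + O(z))$ up to a constant, which holds because $\psi$ is even and $y$ odd so $e^{-2\psi(y(z))}$ has no linear term obstructing inversion), and that the interchange of the various infinite sums (over $m_i$, over $r_i$, over graphs) with the residue extraction is legitimate in the stated expansion regime $|z_1| \ll \cdots \ll |z_n| \ll 1$. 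I would handle this exactly as in~\cite[Section~3]{BDKS1}: the key identity is the operator form of Lagrange inversion, $\sum_{m\geq 1} X^m [z^m] F(z) = \sum_{j\geq 0} D^j [t^j]\, Q^{-1}\, e^{-2t\psi(y)} F(z)|_{z \text{ replaced appropriately}}$, applied one variable at a time, the other variables being treated as inert parameters; because the $B(z_i,z_j)$ factors only couple variables through the graph edges and each such factor starts at order $z_i z_j$, the substitution is compatible with the nested expansion regime and all rearrangements are finite order by order in $\hbar$. After these substitutions the stated formula~\eqref{eq:MainFormulaForWgn} drops out, with the overall $[\hbar^{2g-2+n}]$ extraction selecting the genus-$g$ piece of the connected correlator per the grading $W_n = \sum_g W_{g,n}\hbar^{2g-2+n}$.
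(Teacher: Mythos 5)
Your overall architecture matches the paper's proof: inclusion--exclusion turning the all-pairs product into a sum over connected graphs, followed by the $D^j[t^j]$ trick to capture polynomial dependence on $m_i$ and a Lagrange--B\"uhrmann substitution for $X=ze^{-2\psi(y(z))}$. However, two essential steps are missing or misstated. First, the Lagrange--B\"uhrmann formula (in the form of~\cite[Lemma 4.7]{BDKS1}) applies to a bilateral sum $\sum_{m\in\mathbb{Z}}X^m[z^m](\cdots)$, whereas Proposition~\ref{prop:ClosedFormula} gives a sum over $m_i\in\mathbb{Z}_{\odd}^+$; your identity written with $\sum_{m\geq 1}$ is not the one being used. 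The paper bridges this in two stages: extending to $m_i\in\mathbb{Z}_{\odd}$ shifts $W_{0,2}$ by $\tfrac14 B(X_1,X_2)$ (this is exactly where the hypothesis $2g-2+n>0$ enters, which your argument never explains), and extending further to all of $\mathbb{Z}$ requires the parity analysis showing that the coefficient of $z_i^{m_i}$ is odd in $z_i$ (the $r_i+m_i$ parity matching between $\partial_y^{r_i}e^{2m_i\frac{\mathcal{S}(m_i\hbar\partial_y)}{\mathcal{S}(\hbar\partial_y)}\overline\psi_i}$ and the $u_i$-expansion), so that the even-$m_i$ terms contribute zero. This oddness argument is one of the genuinely new BKP-specific ingredients and cannot be waved through by citing the KP-case references.

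Second, you misattribute the origin of the factor $e^{-u_iy_i}$ in $e^{u_i(\mathcal{S}(\hbar u_iz_i\partial_{z_i})\overline y_i-y_i)}$: it does not compensate anything related to $e^{-2m\psi(y)}$ (that compensation is performed by the separate explicit factor $e^{-2t_i\psi_i}$ in~\eqref{eq:MainFormulaForWgn}). It arises from the Taylor-shift identity $\sum_r\partial_y^rG(y)|_{y=0}[u^r]F(u)=\sum_r\partial_y^rG(y)[u^r]e^{-uy}F(u)$ (\cite[Lemma 4.5]{BDKS1}), which moves the evaluation point of $\partial_{y_i}^{r_i}$ from $y=0$ in Proposition~\ref{prop:ClosedFormula} to $y=y_i$ in the theorem; this step is legitimate only when the relevant expression is regular in $u_i$, which is where the restriction $n\geq 2$ is actually used. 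Without these two steps the derivation does not close, so as written the proposal has a genuine gap even though the intended route is the right one.
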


\begin{remark} It is an explicit closed algebraic formula of the same type as in~\cite{BDKS1,BDKS2}. In particular the sum over ${j_1,\dots,j_n, r_1,\dots,r_n}$ is finite for every $(g,n)$. 
\end{remark}

\begin{remark} Note that $y_i$ and hence $\partial_{y_i}$ are odd in $z_i$. Note also that $\overline\psi_i$ and hence $\frac{{\mathcal S}(t_i\hbar \partial_{y_{i}})}{\mathcal S(\hbar \partial_{y_{i}})} \overline \psi_i$ are even in $z_i$. Note also that $D_i$ and $Q_i$ are even in $z_i$. Note also that in the second line the coefficient of $u_i^{r_i}$ for odd $r_i$ is even in $z_i$ and the coefficient of $u_i^{r_i}$ for even $r_i$ is odd in $z_i$. These observations imply that the right hand side of~\eqref{eq:MainFormulaForWgn} is necessarily odd in $z_1,\dots,z_n$. 
\end{remark}

\begin{remark}\label{rem:NoPolesDiag} Note that the structure of the formula suggests that there might be non-trivial poles along the diagonals $z_i=z_j$ and antidiagonals $z_i=-z_j$, but in fact the statement of the theorem in particular implies that these polar parts cancel and the resulting formula is non-singular at the diagonals and antidiagonals. Cf.~a discussion in~\cite[Remark 1.3 and Corollary 4.10]{BDKS1}.
\end{remark}

\begin{proof}[Proof of Theorem~\ref{thm:ClosedFormulaForWgn}]
	Recall the formula for $W_n^\bullet$ in Proposition~\ref{prop:ClosedFormula}. Passing to the connected $n$-point functions $W_n$ via inclusion-exclusion formula we replace 
	\begin{align}
		\prod_{1\leq i<j\leq n} \left( e^{\hbar^2 u_iu_j \mathcal{S}(\hbar u_i z_i\partial_{z_i})\mathcal{S}(\hbar u_j z_j\partial_{z_j})B(z_i,z_j)} - 1\right)
	\end{align}
	with 
	\begin{align}
		\sum_{\gamma\in\Gamma_n} 
		\prod_{(v_k,v_\ell)\in E_\gamma} \left( e^{\hbar^2 u_ku_\ell \mathcal{S}(\hbar u_k z_k\partial_{z_k})\mathcal{S}(\hbar u_\ell z_\ell\partial_{z_\ell})B(z_k,z_\ell)} - 1\right).
	\end{align}	
	Using the series expansion in $u_1,\dots,u_n$ (cf. \cite[Lemma 4.5]{BDKS1}), which is applicable only if $n\geq 2$ (hence the restriction on $n$ in the statement of the theorem) we can rewrite the formula as 
	\begin{align}\label{eq:ClosedFormulaWgnProof}
		W_{n} & = \sum_{m_1,\dots,m_n\in\mathbb{Z}_{\odd}^+} \prod_{i=1}^n  X_i^{m_i} \cdot  \sum_{r_1,\dots,r_n=0}^\infty 
		\prod_{i=1}^n
		\partial_y^{r_i} e^{2m_i \frac{{\mathcal S}(m_i\hbar \partial_y)}{\mathcal S(\hbar \partial_y)} \overline\psi_i }
		\\ \notag & \quad
		[\prod_{i=1}^n u_i^{r_i} z_i^{m_i}] \prod_{i=1}^n \frac{e^{\frac{\hbar u_i}2}+e^{-\frac{\hbar u_i}2}}{4 u_i\hbar \mathcal{S}(u_i\hbar)}  e^{u_i \left(\mathcal{S}(\hbar u_i z_i\partial_{z_i}) \overline y_i -y_{i} \right) }		\\ \notag & \quad
		\sum_{\gamma\in\Gamma_n} 
		\prod_{(v_k,v_\ell)\in E_\gamma} \left( e^{\hbar^2 u_ku_\ell \mathcal{S}(\hbar u_k z_k\partial_{z_k})\mathcal{S}(\hbar u_\ell z_\ell\partial_{z_\ell})B(z_k,z_\ell)} - 1\right).
	\end{align}
The next observation that we use is the following. Replace the summation in Equation~\eqref{eq:FirstFormulaDisconnectedW} from $\sum_{m_1,\dots,m_n\in\mathbb{Z}_{\odd}^+}$ to $\sum_{m_1,\dots,m_n\in\mathbb{Z}_{\odd}}$. This replacement changes the disconnected $W_n^\bullet$, but when we pass to the connected ones, this replacement just changes the $W_{0,2}$ by adding to it the singular term $\frac 14 B(X_1,X_2)$ (hence the condition $2g-2+n>0$ in the statement of the theorem), cf.~\cite[Proposition 4.1]{BDKS1}. With this adjustment, we have for $g\geq 0$, $n\geq 2$, $2g-2+n>0$: 
	\begin{align}\label{eq:ClosedFormulaWgnProof-2}
	W_{g,n} & = [\hbar^{2g-2+n}]\sum_{m_1,\dots,m_n\in\mathbb{Z}_{\odd}} \prod_{i=1}^n  X_i^{m_i} [z_i^{m_i}]\cdot  \sum_{r_1,\dots,r_n=0}^\infty 
	\prod_{i=1}^n
	\partial_y^{r_i} e^{2m_i \frac{{\mathcal S}(m_i\hbar \partial_y)}{\mathcal S(\hbar \partial_y)} \overline\psi_i }
	\\ \notag & \quad
	[\prod_{i=1}^n u_i^{r_i} ] \prod_{i=1}^n \frac{e^{\frac{\hbar u_i}2}+e^{-\frac{\hbar u_i}2}}{4 u_i\hbar \mathcal{S}(u_i\hbar)}  e^{u_i \left(\mathcal{S}(\hbar u_i z_i\partial_{z_i}) \overline y_i -y_{i} \right) }	\\ \notag & \quad
	\sum_{\gamma\in\Gamma_n} 
	\prod_{(v_k,v_\ell)\in E_\gamma} \left( e^{\hbar^2 u_ku_\ell \mathcal{S}(\hbar u_k z_k\partial_{z_k})\mathcal{S}(\hbar u_\ell z_\ell\partial_{z_\ell})B(z_k,z_\ell)} - 1\right).
\end{align}
Now note two things. First, $\partial_y^{r_i} \exp\left(2m_i \frac{{\mathcal S}(m_i\hbar \partial_y)}{\mathcal S(\hbar \partial_y)} \overline\psi_i \right)$ is even in $z_i$ for even $r_i$ and odd in $z_i$ for odd $r_i$. On the other hand, in the expression 		
\begin{align}
& [\prod_{i=1}^n u_i^{r_i} z_i^{m_i}] \prod_{i=1}^n \frac{e^{\frac{\hbar u_i}2}+e^{-\frac{\hbar u_i}2}}{4 u_i\hbar \mathcal{S}(u_i\hbar)} e^{u_i \left(\mathcal{S}(\hbar u_i z_i\partial_{z_i}) \overline y_i -y_{i} \right) }
\\ \notag &
\sum_{\gamma\in\Gamma_n} 
\prod_{(v_k,v_\ell)\in E_\gamma} \left( e^{\hbar^2 u_ku_\ell \mathcal{S}(\hbar u_k z_k\partial_{z_k})\mathcal{S}(\hbar u_\ell z_\ell\partial_{z_\ell})B(z_k,z_\ell)} - 1\right)
\end{align}
we only have terms with $r_i+m_i$ odd. 
This means that the whole expression in which we take the coefficients of $[\prod_{i=1}^n z_i^{m_i}]$ is odd in $z_1,\dots,z_n$, and thus we can extend the summation to ${m_1,\dots,m_n\in\mathbb{Z}}$. Second, we can use the trick that for a polynomial $g$ in $m$ we can replace $\sum X^m g(m)$ by $\sum_{j=o}^\infty D^j X^m [t^j] g(t)$. These two ideas allow to rewrite~\eqref{eq:ClosedFormulaWgnProof-2} as
	\begin{align}\label{eq:ClosedFormulaWgnProof-3}
	W_{g,n} & = [\hbar^{2g-2+n}]\sum_{j_1,\dots,j_n=0}^\infty D_i^{j_i} [t_i^{j_i}]\sum_{m_1,\dots,m_n\in\mathbb{Z}} \prod_{i=1}^n  X_i^{m_i} [z_i^{m_i}]e^{2m_i\psi_i}
	\\ \notag & \quad 
	 \sum_{r_1,\dots,r_n=0}^\infty 
	\prod_{i=1}^n e^{-2t_i\psi_i}
	\partial_y^{r_i} e^{2m_i \frac{{\mathcal S}(m_i\hbar \partial_y)}{\mathcal S(\hbar \partial_y)} \overline\psi_i }
	\\ \notag & \quad
	[\prod_{i=1}^n u_i^{r_i}] \prod_{i=1}^n \frac{e^{\frac{\hbar u_i}2}+e^{-\frac{\hbar u_i}2}}{4 u_i\hbar \mathcal{S}(u_i\hbar)}  e^{u_i \left(\mathcal{S}(\hbar u_i z_i\partial_{z_i}) \overline y_i -y_{i} \right) }	\\ \notag & \quad
	\sum_{\gamma\in\Gamma_n} 
	\prod_{(v_k,v_\ell)\in E_\gamma} \left( e^{\hbar^2 u_ku_\ell \mathcal{S}(\hbar u_k z_k\partial_{z_k})\mathcal{S}(\hbar u_\ell z_\ell\partial_{z_\ell})B(z_k,z_\ell)} - 1\right)
\end{align}
(here for each $r_1,\dots,r_n$ the second line expands in $\hbar$ with the coefficients that are manifestly polynomial in $t_1,\dots,t_n$). Finally, we apply the Lagrange--B\"uhrmann formula for $X_i = z_ie^{-2\psi_i}$ to Equation~\eqref{eq:ClosedFormulaWgnProof-3} (cf.~\cite[Lemma 4.7]{BDKS1}) and obtain the statement of the theorem.
\end{proof}

\subsection{Special cases} In this section we discuss the formulas for $W_{g,n}$ for $(g,n)=(0,2)$ and $n=1$, in the variable $z$ related to $X$ by $X=ze^{-2\psi(y(z))}$. 
We begin with unstable terms $(g,n)=(0,2)$ and $(0,1)$. 

\begin{proposition} \label{prop:W02} For $(g,n)=(0,2)$ we have:
	\be \label{eq:ClosedFormulaW02}
	W_{0,2} = \frac 1{4Q_1Q_2} B(z_1,z_2) - \frac 14 B(X_1,X_2).
	\ee
\end{proposition}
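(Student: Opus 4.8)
The plan is to extract $W_{0,2}$ directly from the general machinery behind Theorem~\ref{thm:ClosedFormulaForWgn}, but tracking the one ingredient that was deliberately discarded there, namely the $W_{0,2}$ "seed" term. Concretely, I would start from the disconnected formula of Proposition~\ref{prop:ClosedFormula} for $n=2$, specialized to $\overline y$ and $\overline\psi$ at the relevant orders in $\hbar$, and pass to the connected two-point function by inclusion–exclusion. For $n=2$ the graph sum $\sum_{\gamma\in\Gamma_2}$ has a single term — the unique edge $(v_1,v_2)$ — so the connected $W_2^\bullet$ picks up exactly one factor $\bigl(e^{\hbar^2 u_1u_2\,\mathcal S(\hbar u_1 z_1\partial_{z_1})\mathcal S(\hbar u_2 z_2\partial_{z_2})B(z_1,z_2)}-1\bigr)$. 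Extracting the coefficient of $\hbar^{2g-2+n}=\hbar^{0}$ kills all the higher-order pieces of $\overline\psi$, $\overline y$, and all the $\mathcal S$-operators (which become the identity at $\hbar=0$), and the exponential-minus-one factor contributes, at lowest order, just $\hbar^2 u_1 u_2\, B(z_1,z_2)$ — but this is $O(\hbar^2)$, so naively the whole connected two-point function vanishes at $\hbar^0$. The resolution — and the point of the $2g-2+n>0$ caveat in Theorem~\ref{thm:ClosedFormulaForWgn} — is that one must restore the piece that was subtracted in the step around Equation~\eqref{eq:ClosedFormulaWgnProof-2}: when one extends the summation $m_i\in\mathbb Z_{\odd}^+$ to $m_i\in\mathbb Z_{\odd}$ (or $\mathbb Z$), the connected $W_{0,2}$ acquires precisely the singular correction $-\tfrac14 B(X_1,X_2)$, while the "bulk" part that survives the $[\hbar^0]$ extraction assembles into $\tfrac1{4Q_1Q_2}B(z_1,z_2)$.

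So the key steps, in order, are: (i) write $W_2^\bullet$ from Proposition~\ref{prop:ClosedFormula} with $n=2$; (ii) pass to connected, so the product over pairs collapses to the single factor with $B(z_1,z_2)$; (iii) take $[\hbar^0]$, at which order $\overline\psi\to\psi$, $\overline y\to y$, $\mathcal S(\hbar\,\cdot\,)\to 1$, and $\tfrac{e^{\hbar u/2}+e^{-\hbar u/2}}{4u\hbar\mathcal S(u\hbar)}\to \tfrac{2}{4u\hbar}= \tfrac1{2u\hbar}$ — so one is really reading off the $[u_1 u_2]$ coefficient, because $e^{\hbar^2 u_1 u_2(\cdots)}-1 = \hbar^2 u_1 u_2 B(z_1,z_2)+O(\hbar^4,u^2)$, and the two factors $\tfrac1{2u_i\hbar}$ together supply $\tfrac1{4\hbar^2 u_1 u_2}$, precisely cancelling the $\hbar^2 u_1 u_2$; (iv) account for the $D_i$/Lagrange–Bürmann change of variables $X_i = z_i e^{-2\psi_i}$, which at this order just multiplies by $1/Q_i$ for each $i$ (the $D^j[t^j]$ machinery in Theorem~\ref{thm:ClosedFormulaForWgn} degenerates to extracting the $j_i=0$ term, i.e. multiplication by $1/Q_i$, once $e^{-2t_i\psi_i}e^{2m_i\cdots}$ is evaluated at the relevant low order); and (v) separately add back the $-\tfrac14 B(X_1,X_2)$ coming from enlarging the index set of $m_i$, exactly as in \cite[Proposition 4.1]{BDKS1}. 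Combining (iv) and (v) gives $W_{0,2}=\tfrac1{4Q_1Q_2}B(z_1,z_2)-\tfrac14 B(X_1,X_2)$.

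An alternative, perhaps cleaner, route is to compute $W_{0,2}$ from first principles: $W_2^\bullet$ at genus zero is governed entirely by the Heisenberg two-point function, i.e. by $[J_+(z),J_-(w)] = \tfrac14 B(z,w)$ as recorded in the proof of Proposition~\ref{prop:ClosedFormula}, together with the leading ($\hbar^0$) part of the operators $\mathbb J_k$ in Equation~\eqref{eq:JJ}. At $\hbar=0$, $\mathbb J_k$ reduces to $\tfrac14 \cdot \tfrac{2}{u\hbar}\big|_{\text{relevant}}$-dressed multiplication encoding the classical spectral curve data $\psi$, $y$; pairing two such operators with the single commutator $\tfrac14 B$ and then applying the change of variables $X=ze^{-2\psi(y(z))}$ (so that $z\partial_z = Q\,D$, whence $B(z_1,z_2)$ in the $z$-variables divides by $Q_1 Q_2$ relative to the naive bilinear differential) reproduces $\tfrac1{4Q_1Q_2}B(z_1,z_2)$; the $-\tfrac14 B(X_1,X_2)$ is the universal correction identifying the "bare" propagator in the $X$-chart. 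I would present whichever of the two is shorter given what has already been set up, but I expect to actually write version one, since it reuses Equation~\eqref{eq:ClosedFormulaWgnProof-2} verbatim.

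The main obstacle is bookkeeping rather than conceptual: one must carefully track how the prefactor $\tfrac{e^{\hbar u_i/2}+e^{-\hbar u_i/2}}{4u_i\hbar\mathcal S(u_i\hbar)}$, the operator $e^{u_i(\mathcal S(\hbar u_i z_i\partial_{z_i})\overline y_i - y_i)}$ (which is $1+O(\hbar^2)$, since at $\hbar=0$ the exponent vanishes), and the factor $e^{\hbar^2 u_1u_2\mathcal S\mathcal S B}-1$ conspire so that the apparent $\hbar^{-2}$ poles from the prefactors exactly cancel the $\hbar^2$ from the edge factor, leaving a finite $[\hbar^0]$ contribution — and to make sure that the extension of the $m_i$-summation to all of $\mathbb Z$ is the only source of the $-\tfrac14 B(X_1,X_2)$ term, with no further spurious contributions from the antidiagonal part of $B$. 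Once these cancellations are organized, the identity falls out immediately.
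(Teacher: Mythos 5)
Your proposal is correct and takes essentially the same route as the paper: the paper's proof also evaluates the connected two-point function of Proposition~\ref{prop:ClosedFormula} at order $\hbar^0$, where only the $r_i=0$, single-edge contribution survives and the $\hbar^{-1}$ prefactors cancel the $\hbar^2 u_1u_2$ from the edge factor, leaving $\sum_{m_1,m_2}X_1^{m_1}X_2^{m_2}[z_1^{m_1}z_2^{m_2}]e^{2m_1\psi_1+2m_2\psi_2}\tfrac14 B(z_1,z_2)$. It then extends the summation to $m_1,m_2\in\mathbb{Z}$ using the oddness of $B$ and evenness of $\psi_i$, applies Lagrange--B\"uhrmann to get $\tfrac1{4Q_1Q_2}B(z_1,z_2)$, and subtracts the $\tfrac14 B(X_1,X_2)$ correction from enlarging the index set --- exactly your steps (iii)--(v).
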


\begin{proof} Indeed, as we discussed in the proof of Theorem~\ref{thm:ClosedFormulaForWgn}, the change of summation from $m\in Z^+_{\odd}$ to $m\in Z_{\odd}$ and the commutation rules for $J_+(X_1), J_-(X_2)$ imply that the $(0,2)$ term gets a correction. We have:
	\begin{align}\label{eq:ProofClosedFormulaW02}
		W_{0,2}+\frac 14 B(X_1,X_2) & = 
		\sum_{m_1,m_2\in\mathbb{Z}_{\odd}} X_1^{m_1}X_2^{m_2} [z_1^{m_1}z_2^{m_2}]  
		e^{2m_1 \psi_1+2m_2 \psi_2}
		\frac 14 B(z_1,z_2)
		\\ \notag
		&  = 
		\sum_{m_1,m_2\in\mathbb{Z}} X_1^{m_1}X_2^{m_2} [z_1^{m_1}z_2^{m_2}]  
		e^{2m_1 \psi_1+2m_2 \psi_2}
		\frac 14 B(z_1,z_2)
		\\ \notag
		& = \frac 1{4Q_1Q_2} B(z_1,z_2).
	\end{align}
	In the second line, in order to change the summation from $m_1,m_2\in\mathbb{Z}_{\odd}$ to $m_1,m_2\in\mathbb{Z}$, we use that $B(z_1,z_2)$ is odd in $z_1$ and $z_2$  and $\psi_1 = \psi(y_1)$ (respectively, $\psi_2=\psi(y_2)$) is even in $z_1$ (respectively, $z_2$).
\end{proof}

\begin{proposition}\label{prop:ClosedW01}
	For $(g,n)=(0,1)$ we have: $W_{0,1}(X)= y(z)/2$.
\end{proposition}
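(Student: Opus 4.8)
The plan is to extract $W_{0,1}$ directly from the preliminary formula for $W_n^\bullet$ in Proposition~\ref{prop:ClosedFormula} by setting $n=1$, and then taking the genus-zero part, i.e.\ the coefficient of $\hbar^{-1}$. For $n=1$ there is no graph factor (the set $\Gamma_1$ has only the empty graph on one vertex), so the formula collapses dramatically: we get
\[
W_1^\bullet = \sum_{m\in\mathbb{Z}_{\odd}^+} X^{m}\sum_{r=0}^\infty \partial_y^{r}\exp\!\left(2m\frac{\mathcal{S}(m\hbar\partial_y)}{\mathcal{S}(\hbar\partial_y)}\overline\psi\right)\Big|_{y=0}\;[u^{r}z^{m}]\;\frac{e^{\hbar u/2}+e^{-\hbar u/2}}{4u\hbar\,\mathcal{S}(u\hbar)}\,e^{u\,\mathcal{S}(\hbar u z\partial_z)\overline y}.
\]
First I would pass to the connected function (which for $n=1$ is the same, $W_1 = W_1^\bullet$) and isolate the leading order in $\hbar$.

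The key computation is the $\hbar\to 0$ (more precisely, $[\hbar^{-1}]$) asymptotics of the building blocks. As $\hbar\to 0$ one has $\mathcal{S}(\hbar\partial_y)\to 1$ and $\mathcal{S}(m\hbar\partial_y)\to 1$, so $\exp(2m\frac{\mathcal{S}(m\hbar\partial_y)}{\mathcal{S}(\hbar\partial_y)}\overline\psi)|_{y=0}\to e^{2m\psi(0)}$; however, since $\overline\psi$ and $\psi$ are series in $z$ with $\psi(y(z))$ entering the change of variables, I must be careful to keep the full $y$-dependence rather than evaluate at $y=0$ prematurely. The cleaner route is to note that only the $r=0$ term contributes at leading order: for $r\geq 1$, $\partial_y^{r}$ acting on the exponential produces a factor carrying the structure of $\overline\psi$, but the prefactor $\frac{e^{\hbar u/2}+e^{-\hbar u/2}}{4u\hbar\mathcal{S}(u\hbar)} = \frac{1}{2u\hbar}(1+O(\hbar^2))$ has a simple pole in $\hbar$ only through the overall $\hbar^{-1}$, and extracting $[u^{r}]$ for $r\geq 1$ from $\frac{1}{2u\hbar}e^{u\mathcal{S}(\hbar uz\partial_z)\overline y}$ brings in extra powers of $\hbar$ from $\mathcal{S}(\hbar u z\partial_z) = 1 + O(\hbar^2)$ together with the $u$-expansion of $e^{u\overline y}$. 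I would check that the $r=0$, leading-$\hbar$ contribution is
\[
[\hbar^{-1}]\;W_1 = \sum_{m\in\mathbb{Z}_{\odd}^+} X^{m}\, e^{2m\psi(0)}\cdot \frac{1}{2}\,[z^{m}]\, e^{u y(z)}\big|_{\text{coefficient of }u^{0}}\;\longrightarrow\;\text{(something giving }y(z)/2\text{)},
\]
so the real content is to reorganize this sum using the change of variables $X = z e^{-2\psi(y(z))}$.

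The decisive step — and the main obstacle — is recognizing that $\sum_{m\in\mathbb{Z}_{\odd}^+} X^{m}[z^{m}] f(z)$ with $X = z e^{-2\psi(y(z))}$ is exactly a Lagrange--B\"urmann inversion, as already used in the proof of Theorem~\ref{thm:ClosedFormulaForWgn} (cf.\ \cite[Lemma 4.7]{BDKS1}). Concretely, at leading order the $r=0$ term gives $\tfrac12\sum_m X^m [z^m] \big(z\,\tfrac{d}{dz}$ of a primitive of $y\big)$ up to the Jacobian, and applying Lagrange--B\"urmann converts $[z^m]$-extraction in the $z$-variable weighted by $e^{2m\psi}$ into direct evaluation in the $X$-variable; the telescoping of the resulting series in $X$ collapses to the single function $y(z)/2$. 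I would verify the claim by differentiating: $D\,W_{0,1} = X\partial_X(y/2)$ should match the $m$-weighted sum $\sum_m \tfrac{m}{2} X^m e^{2m\psi}[z^m](\cdots)$, which is the $r=0$ piece of $D_1 H_1$ at order $\hbar^{-1}$, and this is a clean one-line consistency check. The subtlety to watch is the parity bookkeeping — $y$ is odd in $z$, $\psi$ is even, $X$ is odd — so only odd $m$ survive, which is automatically consistent with the sum over $\mathbb{Z}_{\odd}^+$ and with $y(z)/2$ being odd in $z$; this is where I would expect to spend the most care, ensuring the extension of the summation range (as done in the proof of Theorem~\ref{thm:ClosedFormulaForWgn}) does not introduce or remove a term at this order.
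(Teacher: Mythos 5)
Your overall strategy (specialize the closed formula to $n=1$, extract $[\hbar^{-1}]$, extend the summation range by parity, and finish with a Lagrange--B\"uhrmann inversion checked by differentiation) is the same as the paper's. However, there is a genuine error at the central step: the claim that only the $r=0$ term contributes at leading order in $\hbar$ is false. At order $\hbar^{-1}$ the prefactor is $\tfrac{1}{2u\hbar}(1+O(\hbar^2))$ and the exponential is $e^{uy(z)}(1+O(\hbar^2))$, so
\begin{equation*}
[u^r]\,\frac{1}{2u\hbar}\,e^{uy(z)} \;=\; \frac{y(z)^{r+1}}{2\hbar\,(r+1)!},
\end{equation*}
which carries the \emph{same} power of $\hbar$ for every $r$; the $u$-expansion of $e^{uy}$ produces powers of $y(z)$, not of $\hbar$. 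All $r$ therefore contribute at leading order, and they must: the sum $\sum_{r\geq 0}\partial_y^r e^{2m\psi(y)}\big|_{y=0}\,y(z)^r/r!$ is exactly the Taylor reconstruction of $e^{2m\psi(y(z))}$. Keeping only $r=0$ leaves the weight $e^{2m\psi(0)}$, and then $\sum_m X^m e^{2m\psi(0)}[z^m]f(z)=f(Xe^{2\psi(0)})$ is evaluation at the wrong point --- whereas the Lagrange--B\"uhrmann step you invoke requires the weight $e^{2m\psi(y(z))}$ matching the change of variables $X=ze^{-2\psi(y(z))}$. That weight is supplied precisely by the $r\geq 1$ terms you discard, so your displayed leading-order formula does not reduce to $y(z)/2$.

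The repair is what the paper actually does: compute $DW_{0,1}$ rather than $W_{0,1}$, so that $z\partial_z$ turns $\tfrac{e^{uy}}{2u}$ into $\tfrac{e^{uy}\,zy'(z)}{2}$ and removes the $1/u$; then the $r$-sum collapses by Taylor's theorem to $e^{2m\psi(y(z))}\tfrac{QDy}{2}$, the $m$-summation extends from $\mathbb{Z}_{\odd}^+$ to $\mathbb{Z}$ by the parity argument you correctly outline, and Lagrange--B\"uhrmann gives $DW_{0,1}=\tfrac12 Dy$, hence $W_{0,1}=y(z)/2$. In other words, the ``one-line consistency check by differentiation'' you defer to the end is not a check but the proof itself, and it only works once the full sum over $r$ is retained.
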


\begin{proof} It is a straightforward computation. First, recall that
	\begin{align}\label{eq:01ClosedFormula}
		H_{0,1}(X) & = [\hbar^{-1}] \sum_{m\in\mathbb{Z}_{\odd}^+} \frac {X^{m}}{m}   \sum_{r=0}^\infty 
		\partial_y^r e^{2m \frac{{\mathcal S}(m\hbar \partial_y)}{\mathcal S(\hbar \partial_y)} \overline\psi(\hbar^2,y)} \Big|_{y=0}
		[u^{r} z^{m}] \frac{e^{\frac{\hbar u}2}+e^{-\frac{\hbar u}2}}{4 u\hbar \mathcal{S}(u\hbar)} e^{u\mathcal{S}(\hbar u z\partial_{z}) 
			\overline y(\hbar^2,z) }
		\\ \notag
		& = \sum_{m\in\mathbb{Z}_{\odd}^+} \frac {X^{m}}{m}   \sum_{r=0}^\infty 
		\partial_y^r e^{2m\psi(y)} \Big|_{y=0}
		[u^{r} z^{m}] \frac{e^{u y(z)}}{2 u } .
	\end{align}
	With this equation, in order to compute $W_{0,1}$, we consider its differential. We have:
	\begin{align}\label{eq:DD01ClosedFormula}
		DW_{0,1}(X) & = \sum_{m\in\mathbb{Z}_{\odd}^+} X^{m}  \sum_{r=0}^\infty 
		\partial_y^r e^{2m\psi(y)} \Big|_{y=0}
		[u^{r} z^{m}] z\partial_z \frac{e^{u y(z)}}{2 u } 
		\\ \notag & 
		= \sum_{m\in\mathbb{Z}_{\odd}^+} X^{m}  \sum_{r=0}^\infty 
		\partial_y^r e^{2m\psi(y)} \Big|_{y=0}
		[u^{r} z^{m}]  \frac{e^{u y(z)}QDy(z)}{2} 
		\\ \notag
		& = \sum_{m\in\mathbb{Z}_{\odd}^+} X^{m}[z^m] 
		e^{2m\psi(y(z)))}  \frac{QDy(z)}{2} 
		\\ \notag
		& = \sum_{m\in\mathbb{Z}} X^{m}[z^m] 
		e^{2m\psi(y(z)))}  \frac{QDy(z)}{2} 
		\\ \notag
		& = \frac 12 Dy(z).
	\end{align}
	Hence $W_{0,1}(X)= y(z)/2$.
\end{proof}

\subsubsection{Stable terms for $n=1$} Consider $g\geq 1$, $n=1$, that is, we consider $W_{1} = \sum_{g=0}^\infty \hbar^{2g-1} W_{g,1}(X)$. 

\begin{proposition} \label{prop:ClosedWg1} Under the change of variables $X=ze^{-2\psi(y(z))}$ we have:
	\begin{align} \label{eq:ClosedWg1}
		& W_{1}(X) = \frac{y}{2\hbar} + \sum_{j=1}^\infty D^{j-1} [t^j]
		e^{-2t\psi+2t \frac{{\mathcal S}(t\hbar \partial_y)}{\mathcal S(\hbar \partial_y)} \overline\psi} 
		\frac{Dy}{2\hbar} 
		\\ \notag 
		& + \sum_{j,r=0}^\infty D^j [t^j] \frac 1Q e^{-2t \psi}
		\partial_y^r e^{2t \frac{{\mathcal S}(t\hbar \partial_y)}{\mathcal S(\hbar \partial_y)} \overline\psi} 
		[u^r] 
		\left(\frac{e^{\frac{\hbar u}2}+e^{-\frac{\hbar u}2}}{4 u\hbar \mathcal{S}(u\hbar)}  e^{u \left(\mathcal{S}(\hbar u z\partial_{z}) \overline y -y\right) } \right).
	\end{align}
Here $y=y(z)$, $\overline y =\overline y(\hbar^2,z)$, $\psi=\psi(y)$, $\overline \psi = \overline \psi (\hbar^2,y)$, and $D=X\partial_X$. 
\end{proposition}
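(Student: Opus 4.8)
The plan is to run the argument of the proof of Theorem~\ref{thm:ClosedFormulaForWgn}, tracking the features specific to $n=1$. For $n=1$ the connected and disconnected one-point functions coincide, $W_1=W_1^\bullet$, so I would start from Equation~\eqref{eq:FirstFormulaDisconnectedW} with a single operator $\mathbb{J}_{m}$. Inserting the bosonic expression~\eqref{eq:JJ} for $\mathbb{J}_m$ and computing $\lvac \mathbb{J}_m \exp\!\big(\sum_{d}\hbar^{2d}\sum_{k}\tfrac{J_{-k}}{\hbar k}[z^k]y_d(z)\big)\rvac$ by moving the annihilation part of $\mathbb{J}_m$ past the creation exponential --- which by $[J_+(z),J_-(w)]=\tfrac14 B(z,w)$ contributes only a $c$-number, and since there is a single $\mathbb{J}$ no $B(z_i,z_j)$-factor survives --- one obtains a closed formula of the shape of the $n=1$ case of Proposition~\ref{prop:ClosedFormula} for $W_n^\bullet$: the edgeless graph $\Gamma_1$, the scalar prefactor $\tfrac14\coth(\hbar u/2)=\tfrac{e^{\hbar u/2}+e^{-\hbar u/2}}{4u\hbar\,\mathcal S(u\hbar)}$, and the factor $e^{u\mathcal S(\hbar u z\partial_z)\overline y}$.

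The new point is that for $n\ge2$ the simple pole at $u=0$ of $\tfrac14\coth(\hbar u/2)=\tfrac1{2\hbar u}+\rho(\hbar u)$ was always killed by the edge factors $e^{\hbar^2 u_ku_\ell\mathcal S\mathcal S B}-1=O(u_ku_\ell)$ of the (non-empty) connected graphs, whereas for $n=1$ it is exposed. So I would split $\tfrac14\coth(\hbar u/2)$ into the polar part $\tfrac1{2\hbar u}$ and the regular part $\rho(\hbar u)$, and split $e^{u\mathcal S(\hbar u z\partial_z)\overline y}=e^{uy}\,e^{u(\mathcal S(\hbar u z\partial_z)\overline y-y)}$. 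The contribution of $\rho(\hbar u)$ is processed exactly as in the proof of Theorem~\ref{thm:ClosedFormulaForWgn}: extend the $m$-summation from $\mathbb Z^+_{\odd}$ to $\mathbb Z$ (the summand is odd in $z$, and the even-$m$ and negative-$m$ coefficients vanish, by the parity considerations of the remark following that theorem), rewrite $\sum X^m g(m)=\sum_{j\ge0}D^j X^m[t^j]g(t)$, absorb $e^{uy}$ into the factor $e^{2t\frac{\mathcal S(t\hbar\partial_y)}{\mathcal S(\hbar\partial_y)}\overline\psi}$ via the shift identity $\sum_r\partial_y^r g(y)\big|_{y=0}[u^r]e^{uy}h(u)=\sum_k h_k\,g^{(k)}(y)$ --- so that it assembles into the dressing $e^{-2t\psi+2t\frac{\mathcal S(t\hbar\partial_y)}{\mathcal S(\hbar\partial_y)}\overline\psi}$ --- and apply the Lagrange--B\"urmann formula for $X=ze^{-2\psi(y(z))}$, which produces the $\tfrac1Q$. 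This yields precisely the third line of~\eqref{eq:ClosedWg1}.

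The residual contribution of the pole $\tfrac1{2\hbar u}$ is where the terms $\tfrac{y}{2\hbar}+\sum_{j\ge1}D^{j-1}[t^j]e^{-2t\psi+2t\frac{\mathcal S(t\hbar\partial_y)}{\mathcal S(\hbar\partial_y)}\overline\psi}\tfrac{Dy}{2\hbar}$ originate. After taking the coefficient of $u^r$ with $r\ge0$, the $\tfrac1u$ turns the remaining $u$-series into an ``$\mathcal S$-deformed primitive'', which at leading order in $\hbar$ (up to the overall $\hbar^{-1}$) is literally $\tfrac1{2\hbar}\int_0^{y(z)}e^{2m\psi(\eta)}\,d\eta$, cf.~the computation in the proof of Proposition~\ref{prop:ClosedW01}; this obstructs the direct use of Lagrange inversion. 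I would resolve it exactly as in Proposition~\ref{prop:ClosedW01}: apply $D=Q^{-1}z\partial_z$, using $z\partial_z\int_0^{y(z)}(\cdots)\,d\eta=(\cdots)\big|_{\eta=y(z)}\,QDy$ and its $\mathcal S$-deformed analogue for the $\hbar$-corrections, to bring the integrand to the form $e^{2m\psi(y(z))}\cdot QDy\cdot(\text{dressing})$; then Lagrange-invert, which gives $\tfrac1Q\cdot\tfrac{QDy}{2}\cdot(\text{dressing})=\tfrac{Dy}{2}\,(\text{dressing})$ (the $\tfrac1Q$ from the Lagrange step cancelling against the $Q$ from $z\partial_z y=QDy$), together with the $\hbar^{-1}$ and the $D^j[t^j]$-sum; and finally apply $D^{-1}$. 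The result is $\sum_{j\ge0}D^{j-1}[t^j]e^{-2t\psi+2t\frac{\mathcal S(t\hbar\partial_y)}{\mathcal S(\hbar\partial_y)}\overline\psi}\tfrac{Dy}{2\hbar}$, whose $j=0$ term equals $D^{-1}\tfrac{Dy}{2\hbar}=\tfrac{y}{2\hbar}=\hbar^{-1}W_{0,1}$ (the unstable $(0,1)$ term --- the $n=1$ counterpart of the $W_{0,2}$-correction in the proof of Theorem~\ref{thm:ClosedFormulaForWgn}, in agreement with Proposition~\ref{prop:ClosedW01}), while its $j\ge1$ terms give the stated sum, nontrivial because $e^{2t\frac{\mathcal S(t\hbar\partial_y)}{\mathcal S(\hbar\partial_y)}\overline\psi}$ already differs from $e^{2t\psi}$ at order $\hbar^2$.

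The hard part will be this last step: making rigorous the $\mathcal S$-deformed ``fundamental theorem of calculus'' manipulation for the polar part and proving that it contributes \emph{exactly} $\tfrac{y}{2\hbar}+\sum_{j\ge1}D^{j-1}[t^j]e^{-2t\psi+2t\frac{\mathcal S(t\hbar\partial_y)}{\mathcal S(\hbar\partial_y)}\overline\psi}\tfrac{Dy}{2\hbar}$, with no spurious $\tfrac1Q$ and no higher power of $D$; one also has to check the parity bookkeeping (oddness of $y$ and $\partial_y$, evenness of $\overline\psi$, $Q$ and $D$ in $z$) so that the extension of the $m$-summation to $\mathbb Z$ is legitimate in both the regular and the polar parts. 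Everything else is a routine adaptation of~\cite{BDKS1} and of the preceding computations in this section.
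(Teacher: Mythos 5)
Your strategy is the same as the paper's: isolate the $u=0$ pole that obstructs the shift identity of \cite[Lemma 4.5]{BDKS1}, run the regular remainder through the standard chain (parity extension of the $m$-sum, the $D^j[t^j]$ trick, Lagrange--B\"uhrmann), and handle the polar piece by first applying $D$ --- so that the chain-rule factor $u$ from $z\partial_z e^{uy}$ cancels the pole --- then Lagrange-inverting and applying $D^{-1}$, with the $j=0$ term reproducing $\hbar^{-1}W_{0,1}=y/2\hbar$. The one substantive difference is the choice of subtraction term, and it is precisely what makes your flagged ``hard part'' harder than it needs to be. The paper subtracts $\frac{e^{uy}}{2u\hbar}$ with the \emph{undeformed} $y=y(z)$, not $\frac{1}{2\hbar u}e^{u{\mathcal S}(\hbar u z\partial_z)\overline y}$; then the polar piece carries no ${\mathcal S}$-deformation in $u$ at all, its treatment is the plain fundamental theorem of calculus applied to $\frac 1{2\hbar}\int_0^{y(z)} e^{2m\frac{{\mathcal S}(m\hbar\partial_\eta)}{{\mathcal S}(\hbar\partial_\eta)}\overline\psi(\hbar^2,\eta)}\,d\eta$ (all the deformation sits harmlessly inside the integrand as a function of $\eta$), and the regular remainder $\frac{e^{\hbar u/2}+e^{-\hbar u/2}}{4u\hbar{\mathcal S}(u\hbar)}e^{u{\mathcal S}(\hbar uz\partial_z)\overline y}-\frac{e^{uy}}{2u\hbar}$, after multiplication by $e^{-uy}$, is \emph{literally} the third line of \eqref{eq:ClosedWg1}, since the leftover $-\frac 1{2u\hbar}$ has no coefficients of $u^r$ with $r\geq 0$. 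With your literal split ($\rho(\hbar u)$ times the full exponential plus $\frac 1{2\hbar u}$ times the full exponential) the $\rho$-part does not equal the third line and the polar part acquires the extra factor $e^{u({\mathcal S}(\hbar uz\partial_z)\overline y-y)}$, which is what forces your ``${\mathcal S}$-deformed primitive''; to land on the stated formula you must move the cross term $\frac{e^{uy}}{2\hbar u}\bigl(e^{u({\mathcal S}(\hbar uz\partial_z)\overline y-y)}-1\bigr)$, which is regular in $u$, back into the regular piece. Once that regrouping is made, your argument coincides with the paper's proof step by step.
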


\begin{proof} By direct commutation of the operators, we have:
	\begin{align}\label{eq:g1ClosedFormula}
		 W_{1}(X) & = \sum_{m\in\mathbb{Z}_{\odd}^+} X^{m}  \sum_{r=0}^\infty 
		\partial_y^r e^{2m \frac{{\mathcal S}(m\hbar \partial_y)}{\mathcal S(\hbar \partial_y)} \overline\psi} \Big|_{y=0}
		[u^{r} z^{m}] \frac{e^{\frac{\hbar u}2}+e^{-\frac{\hbar u}2}}{4 u\hbar \mathcal{S}(u\hbar)} e^{u\mathcal{S}(\hbar u z\partial_{z}) 
			\overline y}
		\\ \notag
		&= \sum_{m\in\mathbb{Z}_{\odd}^+} X^{m}   \sum_{r=0}^\infty 
		\partial_y^r e^{2m \frac{{\mathcal S}(m\hbar \partial_y)}{\mathcal S(\hbar \partial_y)} \overline\psi} \Big|_{y=0}
		[u^{r} z^{m}] \left(\frac{e^{\frac{\hbar u}2}+e^{-\frac{\hbar u}2}}{4 u\hbar \mathcal{S}(u\hbar)} e^{u\mathcal{S}(\hbar u z\partial_{z}) 
			\overline y } - \frac{e^{u y}}{2 u \hbar } \right)
		\\ \notag
		& \quad +
		\sum_{m\in\mathbb{Z}_{\odd}^+} X^{m}  \sum_{r=0}^\infty 
		\partial_y^r e^{2m \frac{{\mathcal S}(m\hbar \partial_y)}{\mathcal S(\hbar \partial_y)} \overline\psi} \Big|_{y=0}
		[u^{r} z^{m}] \frac{e^{u y}}{2 u \hbar} 
	\end{align}
	The first summand is regular in $u$, so it can be computed as in the proof of Theorem~\ref{thm:ClosedFormulaForWgn}:
	\begin{align}\label{eq:g1ClosedFormula-1stsummand}
		&\sum_{m\in\mathbb{Z}_{\odd}^+} X^{m}   \sum_{r=0}^\infty 
		\partial_y^r e^{2m \frac{{\mathcal S}(m\hbar \partial_y)}{\mathcal S(\hbar \partial_y)} \overline\psi} \Big|_{y=0}
		[u^{r} z^{m}] \left(\frac{e^{\frac{\hbar u}2}+e^{-\frac{\hbar u}2}}{4 u\hbar \mathcal{S}(u\hbar)} e^{u\mathcal{S}(\hbar u z\partial_{z}) 
			\overline y } - \frac{e^{u y}}{2 u \hbar} \right)
		\\ \notag
		& =\sum_{m\in\mathbb{Z}} X^{m} [z^m]  \sum_{r=0}^\infty 
		\partial_y^r e^{2m \frac{{\mathcal S}(m\hbar \partial_y)}{\mathcal S(\hbar \partial_y)} \overline\psi} [u^r]
		\left(\frac{e^{\frac{\hbar u}2}+e^{-\frac{\hbar u}2}}{4 u\hbar \mathcal{S}(u\hbar)} e^{-uy+u\mathcal{S}(\hbar u z\partial_{z}) 
			\overline y } - \frac{1}{2 u \hbar } \right)
		\\ \notag
		& =\sum_{j,r=0}^\infty D^j [t^j] \frac 1Q e^{-2t \psi}
		\partial_y^r e^{2t \frac{{\mathcal S}(t\hbar \partial_y)}{\mathcal S(\hbar \partial_y)} \overline\psi} 
		[u^r]
		\left(\frac{e^{\frac{\hbar u}2}+e^{-\frac{\hbar u}2}}{4 u\hbar \mathcal{S}(u\hbar)} e^{-uy+u\mathcal{S}(\hbar u z\partial_{z}) 
			\overline y} \right).
	\end{align}
	The second summand of~\eqref{eq:g1ClosedFormula} can be computed by differentiation. 
	\begin{align}
		& D \sum_{m\in\mathbb{Z}_{\odd}^+} X^{m}  \sum_{r=0}^\infty 
		\partial_y^r e^{2m \frac{{\mathcal S}(m\hbar \partial_y)}{\mathcal S(\hbar \partial_y)} \overline\psi} \Big|_{y=0}
		[u^{r} z^{m}] \frac{e^{u y}}{2 u \hbar} 
		\\ \notag & 
		= \sum_{m\in\mathbb{Z}_{\odd}^+} X^{m}  \sum_{r=0}^\infty 
		\partial_y^r e^{2m \frac{{\mathcal S}(m\hbar \partial_y)}{\mathcal S(\hbar \partial_y)} \overline\psi} \Big|_{y=0}
		[u^{r} z^{m}] \frac{e^{u y(z)}QDy}{2\hbar} 
		\\ \notag 
		& = \sum_{m\in\mathbb{Z}_{\odd}^+} X^{m}  \sum_{r=0}^\infty 
		\partial_y^r e^{2m \frac{{\mathcal S}(m\hbar \partial_y)}{\mathcal S(\hbar \partial_y)} \overline\psi} 
		[u^{r} z^{m}] \frac{QDy}{2\hbar} 
		= \sum_{m\in\mathbb{Z}} X^{m}  [z^{m}]\sum_{r=0}^\infty 
		\partial_y^r e^{2m \frac{{\mathcal S}(m\hbar \partial_y)}{\mathcal S(\hbar \partial_y)} \overline\psi} 
		[u^{r}] \frac{QDy}{2\hbar} 
		\\ \notag 
		& = \sum_{j=0}^\infty D^j [t^j]
		e^{-2t\psi+2t \frac{{\mathcal S}(t\hbar \partial_y)}{\mathcal S(\hbar \partial_y)} \overline\psi} 
		\frac{Dy}{2\hbar} 
		= 	\frac{Dy}{2\hbar} + D \sum_{j=1}^\infty D^{j-1} [t^j]
		e^{-2t\psi+2t \frac{{\mathcal S}(t\hbar \partial_y)}{\mathcal S(\hbar \partial_y)} \overline\psi} 
		\frac{Dy}{2\hbar}. 
	\end{align}
	Combining these two computations, we obtain the statement of the proposition.	
\end{proof}


\section{Loop equations} \label{S5}
Consider the change of variables $X=ze^{-2\psi(y(z))}$. In this section we make a number of extra assumptions of analytical nature on the coefficients $\psi_{2d}$ and $y_{2d}$ of the $\hbar^2$-expansions of $\overline \psi$ and $\overline y$, and with these assumptions we prove the linear and quadratic loop equations for the $W_{g,n}$'s that we computed in the closed form in the previous section (or, more precisely, for the symmetric differentials that we construct from $W_{g,n}$'s).

\subsection{Assumptions}\label{sec:Assumptions} Let $z$ be a global affine coordinate on $\mathbb{C}\mathrm{P}^1$.
We assume that $\psi'(y(z))$ and $y'(z)$ can be analytically extended to rational functions on $\mathbb{C}\mathrm{P}^1$. These assumptions imply that $X=ze^{-2\psi(y(z))}$ extends to a global function on $\mathbb{C}\mathrm{P}^1$ and $d\log X$ is a rational $1$-form in the global coordinate $z$. 

The rational $1$-form $d\log X$ has a finite number of zeros, and we assume that all zeros of $d\log X$ are simple. We also assume that the coefficients of the positive degrees of $\hbar$ in the series $\overline \psi(\hbar^2,y(z))$ and $\overline y(\hbar^2,z)$ are rational functions in $z$ as well, and their singular points are disjoint from the zeros of $d\log X$.

\begin{proposition} Under these assumptions the symmetric $n$-differentials 
	\be \label{eq:omega-definition}
	\omega_{g,n}\coloneqq 2^{1-g}W_{g,n}(X_1,\dots,X_n) \prod_{i=1}^nd\log X_i+ \delta_{g,0}\delta_{n,2} \frac{1}{2} B(X_1,X_2) d\log X_1d\log X_2, \quad g\geq 0, n\geq 1,
	\ee
	analytically extend to global rational differentials on $(\mathbb{C}\mathrm{P}^1)^n$ for $2g-2+n>0$. 
\end{proposition}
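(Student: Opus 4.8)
The plan is to read the statement off the explicit closed formulas already established: Theorem~\ref{thm:ClosedFormulaForWgn} for $n\geq2$, $2g-2+n>0$, and Proposition~\ref{prop:ClosedWg1} for $n=1$, $g\geq1$. (For $2g-2+n>0$ the term $\delta_{g,0}\delta_{n,2}\frac12 B(X_1,X_2)d\log X_1 d\log X_2$ in the definition~\eqref{eq:omega-definition} of $\omega_{g,n}$ is absent and plays no role.) The first thing to record is that for a fixed pair $(g,n)$ the right-hand side of~\eqref{eq:MainFormulaForWgn} is a \emph{finite} sum once $[\hbar^{2g-2+n}]$ is extracted: the sum over $\gamma\in\Gamma_n$ is finite, and, since $2t_i\tfrac{{\mathcal S}(t_i\hbar\partial_{y_i})}{\mathcal S(\hbar\partial_{y_i})}\overline\psi_i-2t_i\psi_i$, $\mathcal{S}(\hbar u_iz_i\partial_{z_i})\overline y_i-y_i$, and each factor $e^{\hbar^2u_ku_\ell(\cdots)}-1$ are all $O(\hbar^2)$, the correlated powers of $\hbar$, $t_i$ and $u_i$ force the sums over $j_1,\dots,j_n$ and $r_1,\dots,r_n$ to truncate --- this is exactly the finiteness recorded in the remark after Theorem~\ref{thm:ClosedFormulaForWgn}, and the same applies to Proposition~\ref{prop:ClosedWg1}.

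Next I would check that, under the assumptions of Section~\ref{sec:Assumptions}, every ingredient of these formulas, read in the global coordinates $z_1,\dots,z_n$, is a rational function. The kernel $B(z_i,z_j)$ and all its $z_k\partial_{z_k}$-derivatives are rational; the $\hbar^2$-coefficients of $\overline y(\hbar^2,z_i)$ are rational in $z_i$ by hypothesis, and the operators $\mathcal{S}(\hbar u_iz_i\partial_{z_i})$, being order by order in $\hbar,u_i$ polynomials in $z_i\partial_{z_i}$, preserve rationality. The functions $y(z)$ and $\psi(y(z))$ themselves may be transcendental, but the formula is arranged so that they appear only through the combinations $\mathcal{S}(\hbar u_iz_i\partial_{z_i})\overline y_i-y_i$ and $e^{-2t_i\psi_i}\,\partial_{y_i}^{r_i}\exp\bigl(2t_i\tfrac{{\mathcal S}(t_i\hbar\partial_{y_i})}{\mathcal S(\hbar\partial_{y_i})}\overline\psi_i\bigr)$, in which their transcendental parts cancel; the remainder is rational in $z_i$ order by order in $\hbar,t_i,u_i$, since the hypotheses give rationality of $y'(z)$, $\psi'(y(z))$ and of the higher coefficients $y_d(z)$, $\psi_{2d}(y(z))$, hence by induction --- using $\tfrac{d}{dz}\psi^{(k)}(y(z))=\psi^{(k+1)}(y(z))\,y'(z)$ (and its analogues for the $\psi_{2d}$ and $y_d$) --- of all the higher $z$- and $y$-derivatives occurring in the formula. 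Finally $Q_i=1-2z_i\psi'(y(z_i))\,y'(z_i)$ is rational, so $D_i=Q_i^{-1}z_i\partial_{z_i}$ sends rational functions to rational functions (introducing at worst poles where $Q_i$ vanishes, i.e.\ at the zeros of $d\log X$), and the prefactors $\tfrac{e^{\hbar u_i/2}+e^{-\hbar u_i/2}}{4u_i\hbar\mathcal{S}(u_i\hbar)}$ are Laurent polynomials in $u_i,\hbar$ with no $z$-dependence. For $n=1$ the one transcendental summand $y/(2\hbar)$ of Proposition~\ref{prop:ClosedWg1} has $\hbar$-degree $-1$, hence does not enter $W_{g,1}=[\hbar^{2g-1}]W_1$ for $g\geq1$.

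Combining these, each summand of the finite sum computing $W_{g,n}$ is produced by applying rational operations, the operators $D_i$ and $z_i\partial_{z_i}$, and coefficient extractions in $t_i,u_i,\hbar$ to rational functions of $z_1,\dots,z_n$; hence $W_{g,n}$ is itself a rational function of $z_1,\dots,z_n$, and the a priori meaning of the formula as an expansion in the sector $|z_1|\ll\cdots\ll|z_n|\ll1$ is just the expansion of this rational function. Since $d\log X_i=Q_i\,d\log z_i$ with $Q_i$ rational, $\omega_{g,n}=2^{1-g}W_{g,n}\prod_{i=1}^nQ_i\,d\log z_i$ therefore analytically continues to a global rational $n$-differential on $(\mathbb{C}\mathrm{P}^1)^n$ whenever $2g-2+n>0$; its symmetry in $z_1,\dots,z_n$ and its oddness in each $z_i$ are clear from the structure of the formula and the remarks after Theorem~\ref{thm:ClosedFormulaForWgn}. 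The point that needs the most care is ensuring that the formal manipulations behind the closed formula --- the Lagrange inversion of $X=ze^{-2\psi(y(z))}$ and the extension of the summation over $m_i$ from $\mathbb{Z}_{\odd}$ to $\mathbb{Z}$ --- are compatible with this reading, so that no spurious contributions appear at $z_i=0$ (the anchor of the expansion) or at $z_i=\infty$; this is the same mechanism as in~\cite[Section~4]{BDKS1} and can be imported. We stress that the sharper statement that the poles of $\omega_{g,n}$ lie only at the zeros of $d\log X$, of controlled order, is not asserted here: it is the content of the loop-equation analysis in the rest of the section.
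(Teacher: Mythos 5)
Your proposal is correct and follows essentially the same route as the paper: the paper's proof of this proposition is a single sentence stating that the claim follows directly from the structure of the closed formulas in Equations~\eqref{eq:MainFormulaForWgn} and~\eqref{eq:ClosedWg1}, which is exactly what you do, only with the finiteness, rationality, and $n=1$ details spelled out explicitly. Your observations (finite truncation of the sums, rationality of every ingredient under the assumptions of Section~\ref{sec:Assumptions}, the cancellation of the transcendental parts of $\psi$ and $y$, and the harmlessness of the $y/(2\hbar)$ term for $g\geq 1$) are all accurate and simply make explicit what the paper leaves implicit.
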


\begin{proof} This statement follows directly from the structure of formulas given in Equations~\eqref{eq:MainFormulaForWgn} and~\eqref{eq:ClosedWg1}.
\end{proof}

Note the factor $2^{1-g}$. It is a compensation for the fact that the natural $W_{0,1}$ and $W_{0,2}$ that we obtained in the previous section are twice less than the formulas one might expect from the point of view of the spectral curve topological recursion, see Section~\ref{sec:TopoRec}.  

\subsection{Blobbed topological recursion}

Let $p$ be a simple zero point of $d\log X$. Let $\sigma$ denote the deck transformation of $X$ near $p$. 

\begin{definition} \label{def:Blobbed} We say that the system of symmetric $n$-differentials $\{\omega_{g,n}\}_{g\geq 0, n\geq 1}$ satisfies the linear loop equations at $p$ if for any $g\geq 0$, $n\geq 0$,
	\be
	\omega_{g,n+1}(w,z_{\llbracket n \rrbracket}) +
	\omega_{g,n+1}(\sigma(w),z_{\llbracket n \rrbracket})
	\ee
	is holomorphic at $ w\to p$ and vanishes at $w=p$.
	
	We say that the system of symmetric $n$-differentials $\{\omega_{g,n}\}_{g\geq 0, n\geq 1}$ satisfies the quadratic loop equations at $p$ if for any $g\geq 0$, $n\geq 0$, $(g,n)\not=(1,0)$, 
	\be
	\omega_{g-1,n+2}(w,\sigma(w),z_{\llbracket n \rrbracket}) +\sum_{\substack{g_1+g_2=g\\ I_1\sqcup I_2 = \llbracket n \rrbracket}}
	\omega_{g_1,n_1+1}(w,z_{I_2})\omega_{g_2,n_2+1}(\sigma(w),z_{I_2})
	\ee
	is holomorphic at $w\to p$ and has a zero of order at least two at $w=p$. In the case $(g,n)=(1,0)$ we require the same property, but we remove the singularity from the first summand, that is, we replace $\omega_{0,2}(w,\sigma(w))$ with 
	\begin{align}
	& \left(\omega_{0,2} -  \frac 12 B(X_1,X_2)d\log X_1d\log X_2\right)|_{X_1=X(w), X_2=X(\sigma(w))} 
	\\ \notag & = 2W_{0,2}(X_1,X_2) d\log X_1 d\log X_2|_{X_1=X(w), X_2=X(\sigma(w))}.		
	\end{align}
 
	If all zero points of $d\log X$ are simple and $\omega_{g,n}$'s satisfy the linear and quadratic loop equations at each of them, then we say that the system of symmetric $n$-differentials $\{\omega_{g,n}\}_{g\geq 0, n\geq 1}$ satisfies the blobbed topological recursion~\cite{BorotShadrin}.
\end{definition}

\begin{theorem}\label{thm:Blobbed} Under the analytic assumptions listed in Section~\ref{sec:Assumptions} the system of symmetric differentials~\eqref{eq:omega-definition} satisfies the blobbed topological recursion.
\end{theorem}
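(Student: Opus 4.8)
The plan is to reduce the blobbed topological recursion (the linear and quadratic loop equations at each simple zero $p$ of $d\log X$) to local analytic properties of the closed formulas from Theorem~\ref{thm:ClosedFormulaForWgn} and Proposition~\ref{prop:ClosedWg1}, following the strategy of~\cite{BDKS1,BDKS2} with the modifications forced by the oddness built into the $2$-BKP setup. The first step is to rewrite the $\omega_{g,n}$ in a form adapted to the local analysis near $p$: using the change of variables $X = ze^{-2\psi(y(z))}$, expand everything in the local coordinate where the deck transformation $\sigma$ acts, and observe (Remark~\ref{rem:NoPolesDiag}) that the closed formulas are already non-singular on diagonals and antidiagonals, so the only possible poles of $\omega_{g,n+1}(w,z_{\llbracket n\rrbracket})$ in $w$ near $p$ come from the factors $1/Q_i$ and from $d\log X_i$, i.e.\ from the vanishing of $dX$ at $p$, together with the explicit structure of the $u_i$-expansion (the $\frac{e^{\hbar u_i/2}+e^{-\hbar u_i/2}}{4u_i\hbar\mathcal S(u_i\hbar)}$ prefactor and the operators $\mathcal S(\hbar u_i z_i\partial_{z_i})$). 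The analytic assumptions in Section~\ref{sec:Assumptions} guarantee that $d\log X$ is rational with simple zeros and that the $\hbar$-positive corrections $\overline\psi(\hbar^2,y(z))$ and $\overline y(\hbar^2,z)$ are rational with singularities away from the zeros of $d\log X$, so all the relevant objects are meromorphic near $p$ and the pole orders can be tracked explicitly.

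Next I would establish the \emph{linear loop equations}. The key point is that $\omega_{g,n+1}(w,z_{\llbracket n\rrbracket}) + \omega_{g,n+1}(\sigma(w),z_{\llbracket n\rrbracket})$, being invariant under $w\mapsto\sigma(w)$, descends to a meromorphic object in $X = X(w)$ near $X(p)$; one then shows it has no pole there and vanishes at $p$. In the $2$-KP case of~\cite{BDKS1} this follows from writing $W_{g,n+1}$ as $D_{w}$ applied to something plus a principal-part analysis in the variable $X$; the same structural argument applies here because the formula~\eqref{eq:MainFormulaForWgn} exhibits $W_{g,n}$ as $\prod_i D_i^{j_i}[t_i^{j_i}]$ acting on an expression whose only $w$-dependence that can produce a pole at $p$ is through $z(w)$ near the branch point, and the symmetrization over the two sheets kills the odd (in the local coordinate) polar part while the even part is regular. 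The oddness in $z_i$ of the right-hand side of~\eqref{eq:MainFormulaForWgn} (established in the remark after Theorem~\ref{thm:ClosedFormulaForWgn}) is exactly the extra structural input that must be tracked carefully here, since it changes which terms survive symmetrization compared with the KP case.

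For the \emph{quadratic loop equations} I would follow the argument of~\cite[Section 5]{BDKS1} (cf.\ also~\cite{BorotEynardOrantin}): the combination
\[
\omega_{g-1,n+2}(w,\sigma(w),z_{\llbracket n\rrbracket}) + \sum_{\substack{g_1+g_2=g\\ I_1\sqcup I_2=\llbracket n\rrbracket}}\omega_{g_1,n_1+1}(w,z_{I_1})\,\omega_{g_2,n_2+1}(\sigma(w),z_{I_2})
\]
is again $\sigma$-invariant in $w$, hence descends to $X$, and one must show it has a double zero at $X=X(p)$. The mechanism, as in~\cite{BDKS1}, is that the ``principal specialization'' of the closed formula at $w$ and $\sigma(w)$ reorganizes into a single expression where the sum over graphs $\Gamma_n$ and the sum over splittings $g_1+g_2=g$, $I_1\sqcup I_2$ combine; the generating-series identity for $B(z,w)$ together with the $\mathcal S$-operator calculus produces the required cancellation of the would-be simple pole, and the leftover is manifestly regular with the correct vanishing order. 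I would treat the exceptional case $(g,n)=(1,0)$ separately using the explicit $W_{0,2}$ from Proposition~\ref{prop:W02} and $W_{g,1}$ from Proposition~\ref{prop:ClosedWg1}, exactly matching the subtraction of $\frac12 B(X_1,X_2)d\log X_1 d\log X_2$ prescribed in Definition~\ref{def:Blobbed}. Finally, since by assumption all zeros of $d\log X$ are simple and the argument applies verbatim at each of them, the linear and quadratic loop equations hold at every branch point, which is precisely the statement that $\{\omega_{g,n}\}$ satisfies the blobbed topological recursion.

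\textbf{Main obstacle.} The hard part will be the bookkeeping needed to port the KP principal-specialization argument of~\cite{BDKS1} through the neutral-fermion/$\mathcal S$-operator calculus of the BKP case: one must verify that the antidiagonal contributions in $B(z,w) = \frac{zw}{(z-w)^2} + \frac{zw}{(z+w)^2}$ (the second term, absent in KP) do not spoil the regularity at the branch point, and that the parity constraints (the oddness in each $z_i$, the evenness of $\psi_i$, $D_i$, $Q_i$) interact correctly with the two-sheeted symmetrization so that the pole-cancellation in the quadratic loop equation still goes through. I expect this to require a careful local expansion near $p$ and a parity-refined version of~\cite[Lemmas 4.5--4.7]{BDKS1}, but no genuinely new idea beyond what is already present in the closed formulas.
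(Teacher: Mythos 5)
Your treatment of the \emph{linear} loop equations is essentially the paper's mechanism: in the closed formula \eqref{eq:MainFormulaForWgn} the only source of a pole at a zero $p$ of $d\log X$ is the factor $1/Q_1$, which gives at most a simple pole, and $D_1=X_1\partial_{X_1}$ commutes with the deck transformation $\sigma$ and preserves the property that $f(w)+f(\sigma(w))$ is holomorphic (a simple pole has this property automatically). One correction of emphasis: the cancellation is not that ``symmetrization kills the odd polar part while the even part is regular'' --- the point is that there \emph{is} no $\sigma$-even polar part, because the principal part is generated from a simple pole by iterating $D_1$.

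The genuine gap is in the quadratic loop equations. You propose to specialize the closed formulas at $w$ and $\sigma(w)$ and to produce the pole cancellation by ``$\mathcal S$-operator calculus,'' but you give no mechanism, and this is not how the argument can be run: $\sigma(w)$ is only locally defined near $p$, so the algebraic formulas give no handle on the two-sheeted quadratic combination, and nothing in your outline accounts for the required double zero. The paper's proof rests on three ingredients that are absent from your proposal: (i) a one-parameter deformation $\mathcal W_{g,n}$ of the correlators obtained by inserting the operator $[a^0]\mathcal E(\hbar v,a)$ of \eqref{eq:DefinitionCalE} next to $J_{m_1}$, for which the same closed-formula analysis (Lemma~\ref{lem:curlyw}) shows that $\mathcal W_{g,n}(w,\dots)+\mathcal W_{g,n}(\sigma(w),\dots)$ is holomorphic at $p$ in every order of $v$ (Corollary~\ref{cor:CurlyWLLoop}); (ii) the computation that $[v^1]\mathcal W_{g,n}=2W_{g,n}$ recovers the linear loop equations while $[v^3]\mathcal W_{g,n}$ is the symmetrized \emph{cubic} combination \eqref{eq:ExpansionCurlyW} of the $W$'s evaluated at a \emph{single} point $X_1$, not at $(w,\sigma(w))$; and (iii) the invocation of \cite[Lemma 20]{borot2017special}, which converts the holomorphy of the symmetrization of that cubic combination \eqref{eq:QLEPreliminary}, together with the linear loop equations and the nonvanishing of $y$ at $p$ (verified in the paper using that $\psi'$ is odd, so $1-2z\psi'(y(z))y'(z)=1$ wherever $y$ vanishes), into the quadratic loop equations. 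Your closing claim that ``no genuinely new idea beyond what is already present in the closed formulas'' is needed is precisely where the proposal fails: the passage from closed formulas to the quadratic loop equations requires the auxiliary family $\mathcal W_{g,n}$ and the reduction lemma of \cite{borot2017special}.
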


\subsection{Proof of Theorem~\ref{thm:Blobbed}}
Consider the connected correlation function defined as
	\begin{align}
	\mathcal{W}_{g,n}=\ & [\hbar^{2g-2+n}]\sum_{m_1,\dots,m_n\in{\mathbb Z}_{\odd}} 
	X_1^{m_1}\dots X_n^{m_n} 
	\\ \notag & \lvac \frac{J_{m_1}}{\hbar m_1} [a^0]\mathcal{E}(\hbar v, a) J_{m_2}\dots J_{m_n} {\mathcal D} e^{\sum_{d=0}^\infty \hbar^{2d} \sum_{k\in{\mathbb Z}_{\odd}^+}\frac{J_{-k}}{\hbar k}[z^k] y_d(z)} \rvac^\circ.
	\end{align}
(here by $\lvac - \rvac^\circ$ we mean the connected vacuum expectation obtained by inclusion-exclusion formula from the disconnected one).
	
	\begin{lemma} For $\sum_{m\in{\mathbb Z}_{\odd}} J_{m}X^m / (\hbar m) = \sum_{m\in{\mathbb Z}}J_{m}X^m / (\hbar m)$ we have:
		\begin{align}\label{eq:ConjugationE0}
			& \left[\sum_{m\in{\mathbb Z}}\frac{J_{m}X^m}{\hbar m},[a^0] \mathcal{E}(\hbar v, a)\right] = v{\mathcal S}(\hbar vX\partial_X) \left(\frac{\mathcal{E}(\hbar v, X) + \mathcal{E}(-\hbar v, X)  }{2}\right).
		\end{align}
	\end{lemma}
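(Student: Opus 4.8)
The plan is to evaluate both sides in the basis of normal-ordered quadratics $\Normord{\phi_p\phi_q}$ and to match coefficients. First I would read off from the fermionic expansion of ${\mathcal E}$ in the second line of~\eqref{EE} that
\[
[a^0]\,{\mathcal E}(\hbar v,a)=\sum_{k\in{\mathbb Z}}(-1)^k\Normord{\phi_k\phi_{-k}}\,e^{k\hbar v},
\]
the coefficient of $a^0$ picking out the $k=0$ term of $\sum_{k,m}(-1)^m\Normord{\phi_{m-k}\phi_{-m}}a^ke^{(m-k/2)u}$ (no scalar summand arises, the expression being manifestly normal-ordered). Hence the left-hand side of~\eqref{eq:ConjugationE0} equals $\sum_m\frac{X^m}{\hbar m}\sum_k(-1)^ke^{k\hbar v}\,[J_m,\Normord{\phi_k\phi_{-k}}]$, and the key computation is $[J_m,\Normord{\phi_k\phi_{-k}}]$ for odd $m$.

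For that I would substitute $J_m=\tfrac12\sum_n(-1)^{n+1}\Normord{\phi_n\phi_{-n-m}}$ into the centrally extended $B_\infty$ bracket recalled in Section~\ref{S2}. The central correction term never fires for odd $m$ (each of its $\delta$-constraints would force $m=0$), while the four structure terms contribute at $n\in\{k-m,-k\}$, both producing $\Normord{\phi_{k-m}\phi_{-k}}$, and at $n\in\{-k-m,k\}$, both producing $\Normord{\phi_k\phi_{-k-m}}$; tracking the signs $(-1)^{n+1}$ gives the clean formula
\[
[J_m,\Normord{\phi_k\phi_{-k}}]=\Normord{\phi_{k-m}\phi_{-k}}+\Normord{\phi_k\phi_{-k-m}}.
\]
Substituting this back, reindexing the two summands by $(p,q)=(k-m,-k)$ and $(p,q)=(k,-k-m)$ respectively, and antisymmetrising via $\Normord{\phi_p\phi_q}=-\Normord{\phi_q\phi_p}$, I would obtain that the left-hand side equals
\[
\sum_{\substack{p,q\in{\mathbb Z}\\ p+q\ \mathrm{odd}}}\frac{X^{-p-q}}{\hbar(-p-q)}\Big[(-1)^p\sinh(p\hbar v)-(-1)^q\sinh(q\hbar v)\Big]\Normord{\phi_p\phi_q},
\]
where only $p+q$ odd occurs because $[a^0]{\mathcal E}$ has fermion index sum $0$ and $[J_m,\cdot]$ shifts it by the odd integer $-m$. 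Next I would expand the right-hand side from the same fermionic form: ${\mathcal E}(\pm\hbar v,X)=\sum_{p,q}(-1)^q\Normord{\phi_p\phi_q}\,X^{-p-q}e^{\pm(p-q)\hbar v/2}$, so the symmetrised combination $\tfrac12({\mathcal E}(\hbar v,X)+{\mathcal E}(-\hbar v,X))$ carries a factor $\cosh\!\big((p-q)\hbar v/2\big)$; applying $v\,{\mathcal S}(\hbar v X\partial_X)$, which multiplies $X^{-p-q}$ by $v\,{\mathcal S}\big((p+q)\hbar v\big)$, and antisymmetrising as before, the coefficient of $\Normord{\phi_p\phi_q}$ acquires a factor $(-1)^q-(-1)^p$, which vanishes for $p\equiv q\pmod 2$. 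Thus the right-hand side is supported on $p+q$ odd as well, and the two index ranges agree. The remaining coefficient comparison, after using $(-1)^p=-(-1)^q$ when $p+q$ is odd, reduces to the elementary identity
\[
2\sinh\!\Big(\tfrac{(p+q)\hbar v}{2}\Big)\cosh\!\Big(\tfrac{(p-q)\hbar v}{2}\Big)=\sinh(p\hbar v)+\sinh(q\hbar v)
\]
together with $v\,{\mathcal S}\big((p+q)\hbar v\big)=\tfrac{2}{(p+q)\hbar}\sinh\!\big((p+q)\hbar v/2\big)$, and this finishes the proof.

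I do not expect a genuine obstacle — the argument is bookkeeping — but three points need care: checking that the central-extension terms drop out for odd $m$; treating ${\mathcal E}(\pm\hbar v,X)$ as an operator-valued Laurent series in $X$ on which ${\mathcal S}(\hbar v X\partial_X)$ acts before any specialisation; and verifying that the two sides, which a priori list different families of quadratics, are supported on the same set, namely the $\Normord{\phi_p\phi_q}$ with $p+q$ odd. A purely bosonic route is also available and perhaps closer in spirit to Section~\ref{S4}: since $\sum_m J_mX^m/(\hbar m)$ is linear in the Heisenberg generators, its commutator with the exponential form~\eqref{eq:DefinitionCalE} of ${\mathcal E}(u,a)$ is central-valued, so $[\sum_m J_mX^m/(\hbar m),{\mathcal E}(u,a)]=(c_-(a)+c_+(a))\big({\mathcal E}(u,a)+\tfrac12\tfrac{1+e^{-u}}{1-e^{-u}}\big)$ with $c_-(a)=v\sum_{k\in{\mathbb Z}_{\odd}^+}(X/a)^k{\mathcal S}(k\hbar v)$ and $c_+(a)=v\sum_{k\in{\mathbb Z}_{\odd}^+}(a/X)^k{\mathcal S}(k\hbar v)$ (at $u=\hbar v$); taking $[a^0]$ kills the scalar summand, and the same skew-symmetry/parity trick turns the surviving $e^{\pm(\cdot)}$ into $\cosh(\cdot)$, yielding~\eqref{eq:ConjugationE0}.
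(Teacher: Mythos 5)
Your proposal is correct; both routes you describe check out. Your primary, fermionic argument is genuinely different from the paper's proof, which is exactly the one-line bosonic computation you relegate to your closing remark: the paper invokes the exponential form~\eqref{eq:DefinitionCalE} of $\mathcal{E}(u,a)$ together with the Heisenberg relation~\eqref{comJ}, so that the commutator with the linear-in-$J$ operator is central-valued, the scalar summand is killed by $[a^0]$, and the identification with the right-hand side follows from the parity identity $\mathcal{E}(u,X)=-\mathcal{E}(-u,-X)$ (equivalently, the coefficient $E_j(u)$ of $X^j$ satisfies $E_j(-u)=-(-1)^jE_j(u)$, which both kills the even-$j$ terms and turns the symmetrisation in $v$ into the identity on odd-$j$ terms). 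Your main route instead works entirely in the fermionic picture, computing $[J_m,\Normord{\phi_k\phi_{-k}}]=\Normord{\phi_{k-m}\phi_{-k}}+\Normord{\phi_k\phi_{-k-m}}$ from the centrally extended $B_\infty$ bracket and matching antisymmetrised coefficients of $\Normord{\phi_p\phi_q}$; I verified the vanishing of the central term for odd $m$, the sign bookkeeping, and the final hyperbolic identity $2\sinh\bigl(\tfrac{(p+q)\hbar v}{2}\bigr)\cosh\bigl(\tfrac{(p-q)\hbar v}{2}\bigr)=\sinh(p\hbar v)+\sinh(q\hbar v)$, and all steps are sound. What the fermionic route buys is that it makes completely explicit why both sides are supported on $\Normord{\phi_p\phi_q}$ with $p+q$ odd and requires no manipulation of exponentials of Heisenberg generators; what the bosonic route buys is brevity and consistency with the rest of Section~\ref{S5}, where $\mathcal{E}$ is systematically handled through~\eqref{eq:DefinitionCalE}. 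Either write-up would be acceptable; if you keep the fermionic one, do state explicitly (as you do) that the antisymmetrisation in $(p,q)$ is what legitimises the coefficient comparison, since the $\Normord{\phi_p\phi_q}$ are not linearly independent.
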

	\begin{proof} A straightforward computation using Equations~\eqref{eq:DefinitionCalE} and~\eqref{comJ}.
	\end{proof}
	\begin{remark} Note that $\mathcal{E}(u, X) = -\mathcal{E}(-u, -X)$. Hence the right hand side of Equation~\eqref{eq:ConjugationE0} is odd in $X$. Note also that the right hand side of Equation~\eqref{eq:ConjugationE0} is manifestly odd in $v$. 
	\end{remark}
	Our next goal is to compute the coefficients of $v^1$ and $v^3$ in \eqref{eq:ConjugationE0} applied to the covacuum. 
	\begin{lemma} We have:
		\begin{align}
			\lvac [v^1] v{\mathcal S}(\hbar vX\partial_X) \left(\frac{\mathcal{E}(\hbar v, X) + \mathcal{E}(-\hbar v, X)  }{2}\right) 
			& = 2 \lvac \sum_{m\in \mathbb{Z}^+_{\odd}} X^m J_m; \\ 
			\lvac [v^3] v{\mathcal S}(\hbar vX\partial_X) \left(\frac{\mathcal{E}(\hbar v, X) + \mathcal{E}(-\hbar v, X)  }{2}\right) 
			& = \hbar^2 \left(\frac 16(X\partial_X)^2 + \frac 1{6}\right) \lvac \sum_{m\in \mathbb{Z}^+_{\odd}} X^m J_m
			\\ \notag & \quad
			+\hbar^2 \frac 43 \lvac \Big(\sum_{m\in \mathbb{Z}^+_{\odd}} X^m J_m\Big)^3
		\end{align}	
	\end{lemma}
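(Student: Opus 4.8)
The plan is to evaluate $\lvac\mathcal{E}(u,X)$ directly from the bosonic formula~\eqref{eq:DefinitionCalE}, symmetrize in $u\mapsto-u$, and expand to order $v^3$. Since $\lvac J_{-k}=0$ for every $k\in\mathbb{Z}_{\odd}^+$, the covacuum annihilates the creation exponential in~\eqref{eq:DefinitionCalE}, and using $\tfrac{1+e^{-u}}{1-e^{-u}}=\coth(u/2)$ one gets
\begin{equation*}
	\lvac \mathcal{E}(u,X)=\tfrac12\coth\big(\tfrac u2\big)\,\lvac\Big(\exp\big(2u\,\mathcal{A}(u)\big)-1\Big),\qquad \mathcal{A}(u)\coloneqq\sum_{k\in\mathbb{Z}_{\odd}^+}X^k\mathcal{S}(ku)J_k .
\end{equation*}
As $\mathcal{S}$ is even, the operator series $\mathcal{A}(u)$ is even in $u$, whereas $\coth(u/2)$ is odd, so
\begin{equation*}
	\lvac\,\frac{\mathcal{E}(\hbar v,X)+\mathcal{E}(-\hbar v,X)}{2}=\tfrac12\coth\big(\tfrac{\hbar v}2\big)\,\lvac\sinh\big(2\hbar v\,\mathcal{A}(\hbar v)\big),
\end{equation*}
which is manifestly even in $v$, in agreement with the Remark above.

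Next I would expand in $v$. Since the expression in the statement is $v$ times a product of two series that are even in $v$ (namely $\mathcal{S}(\hbar v X\partial_X)$ and the series just displayed), only the coefficients of $v^1$ and $v^3$ survive, so it suffices to expand to the first two orders. Write $A_0\coloneqq\sum_{k\in\mathbb{Z}_{\odd}^+}X^kJ_k$; by~\eqref{comJ} the operators $J_k$ with $k>0$ commute pairwise, so all powers of $A_0$ below are unambiguous and no normal-ordering corrections arise. From $\mathcal{S}(z)=1+\tfrac{z^2}{24}+O(z^4)$ one has $\mathcal{A}(\hbar v)=A_0+\tfrac{\hbar^2v^2}{24}(X\partial_X)^2A_0+O(v^4)$, and then, using $\sinh(z)=z+\tfrac{z^3}6+O(z^5)$ and $\tfrac12\coth(z/2)=\tfrac1z+\tfrac z{12}+O(z^3)$, a short multiplication gives
\begin{equation*}
	\lvac\,\frac{\mathcal{E}(\hbar v,X)+\mathcal{E}(-\hbar v,X)}{2}=2\lvac A_0+\hbar^2v^2\Big(\tfrac1{12}(X\partial_X)^2+\tfrac16\Big)\lvac A_0+\tfrac43\hbar^2v^2\,\lvac A_0^3+O(v^4).
\end{equation*}

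Finally I would multiply by $v\,\mathcal{S}(\hbar v X\partial_X)=v\big(1+\tfrac{\hbar^2v^2}{24}(X\partial_X)^2+O(v^4)\big)$ and read off the coefficients, using $(X\partial_X)^2X^m=m^2X^m$. The coefficient of $v^1$ equals $2\lvac A_0=2\lvac\sum_{m\in\mathbb{Z}_{\odd}^+}X^mJ_m$, which is the first claimed identity. The coefficient of $v^3$ collects two contributions, from the $v^2$-part of the even series just displayed and from the $v^2$-part of $\mathcal{S}(\hbar v X\partial_X)$ acting on the leading term $2\lvac A_0$; together they give
\begin{equation*}
	\hbar^2\Big(\tfrac1{12}(X\partial_X)^2+\tfrac16\Big)\lvac A_0+\tfrac43\hbar^2\lvac A_0^3+\hbar^2\tfrac1{12}(X\partial_X)^2\lvac A_0=\hbar^2\Big(\tfrac16(X\partial_X)^2+\tfrac16\Big)\lvac A_0+\tfrac43\hbar^2\lvac A_0^3,
\end{equation*}
which is the second claimed identity since $A_0=\sum_{m\in\mathbb{Z}_{\odd}^+}X^mJ_m$. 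The computation is elementary throughout; the only point that requires care is that both $\mathcal{S}(\hbar v X\partial_X)$ and the $v$-expansion of $\mathcal{E}$ feed into the $v^3$-coefficient, and one must keep the low-order coefficients of $\coth$, $\mathcal{S}$ and $\sinh$ straight. I expect this bookkeeping --- rather than any conceptual difficulty --- to be the only mild obstacle.
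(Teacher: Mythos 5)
Your computation is correct and is exactly the ``straightforward computation using Equation~\eqref{eq:DefinitionCalE}'' that the paper invokes without writing out: you kill the creation exponential with the covacuum, symmetrize to get $\tfrac12\coth(\hbar v/2)\sinh(2\hbar v\,\mathcal{A}(\hbar v))$, and the low-order coefficients of $\coth$, $\mathcal{S}$ and $\sinh$ combine as you state to give both identities. No gaps; this matches the paper's intended proof.
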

	\begin{proof} A straightforward computation using Equation~\eqref{eq:DefinitionCalE}.
	\end{proof}
	
	\begin{corollary} We have:
		\begin{align} \label{eq:ExpansionCurlyW}
			\mathcal{W}_{g,n}(X_{\llbracket n \rrbracket}) = {} & v\Bigg(2W_{g,n}(X_{\llbracket n \rrbracket}) \Bigg)  
		+ v^3
			\Bigg(
			\frac 43 \Big(
			W_{g-2,n+2}(X_1,X_1,X_1,X_{\llbracket n \rrbracket\setminus 1} ) 
			\\ \notag & 
			+ 3 \sum_{\substack{g_1+g_2=g-1 \\ I_1\sqcup I_2 = \llbracket n \rrbracket\setminus 1}} W_{g_1,n_1+1}(X_1,X_{I_1})W_{g_2,n_2+2}(X_1,X_1,X_{I_2})
			\\ \notag & 
			+ \sum_{\substack{g_1+g_2+g_3=g \\ I_1\sqcup I_2 \sqcup I_3= \llbracket n \rrbracket\setminus 1}} W_{g_1,n_1+1}(X_1,X_{I_1})W_{g_2,n_2+1}(X_1,X_{I_2})W_{g_3,n_3+1}(X_1,X_{I_3})
			\Big) 
			\\ \notag & 
			+  \left(\frac 16 (X_1\partial_{X_1})^2 + \frac 1{6}\right) W_{g-1,n}(X_{\llbracket n \rrbracket}) \Bigg)
			+ O(v^5),
		\end{align}
where we have to substitue $W_{0,2}(X_i,X_j)+\frac 14B(X_i,X_j)$ instead of $W_{0,2}(X_i,X_j)$ in all instances when the arguments are not the same, that is, $i\not= j$.
\end{corollary}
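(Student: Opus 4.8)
The statement follows by feeding the two preceding Lemmas into the definition of $\mathcal{W}_{g,n}$ and then reassembling the connected vacuum expectation into the genuinely connected correlators $W_{g',n'}$. \textbf{Step 1: move the special leg into the insertion.} First I would extend the sum defining the first current from $m_1\in\mathbb{Z}_{\odd}^+$ to $m_1\in\mathbb{Z}_{\odd}$ (harmless against the covacuum, as recorded in the first Lemma) and write $A_1\coloneqq\sum_{m_1\in\mathbb{Z}}J_{m_1}X_1^{m_1}/(\hbar m_1)$. Commuting $A_1$ to the right of $[a^0]\mathcal{E}(\hbar v,a)$ gives $\lvac A_1[a^0]\mathcal{E}(\hbar v,a)(\cdots)\rvac = \lvac[A_1,[a^0]\mathcal{E}(\hbar v,a)](\cdots)\rvac+\lvac[a^0]\mathcal{E}(\hbar v,a)\,A_1(\cdots)\rvac$, and the second term vanishes identically because $\lvac[a^0]\mathcal{E}(\hbar v,a)=0$: the covacuum kills the creation exponential in \eqref{eq:DefinitionCalE}, and the surviving $a^0$-part of the remaining exponential cancels against the $-1$. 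The commutator is supplied by \eqref{eq:ConjugationE0}, so the special leg is replaced by $v\,\mathcal{S}(\hbar vX_1\partial_{X_1})\tfrac12\bigl(\mathcal{E}(\hbar v,X_1)+\mathcal{E}(-\hbar v,X_1)\bigr)$ acting on $\lvac$.

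\textbf{Step 2: the coefficient of $v$.} Reading off $[v^1]$ with the second Lemma collapses this operator to $2\sum_{m\in\mathbb{Z}_{\odd}^+}X_1^mJ_m$, a single ordinary current at $X_1$ weighted by $X_1^m$. Substituting back and converting the bare currents $J_{m_i}$ past $\mathcal{D}$ via $J_k\mathcal{D}=\mathcal{D}\,\mathbb{J}_k$ (together with $\lvac\mathcal{D}=\lvac$) returns precisely the defining expression \eqref{eq:FirstFormulaDisconnectedW} for the connected $W_{g,n}$; hence $[v^1]\mathcal{W}_{g,n}=2W_{g,n}$, the leading term of \eqref{eq:ExpansionCurlyW}.

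\textbf{Step 3: the coefficient of $v^3$.} The second Lemma splits $[v^3]$ into two pieces. The piece $\hbar^2\bigl(\tfrac16(X_1\partial_{X_1})^2+\tfrac16\bigr)\sum_mX_1^mJ_m$ again inserts a single current at $X_1$; the scalar operator $\tfrac16(X_1\partial_{X_1})^2+\tfrac16$ commutes out of the expectation, while the prefactor $\hbar^2$ turns $[\hbar^{2g-2+n}]$ into $[\hbar^{2(g-1)-2+n}]$ and therefore produces $\bigl(\tfrac16(X_1\partial_{X_1})^2+\tfrac16\bigr)W_{g-1,n}$. The piece $\hbar^2\tfrac43\bigl(\sum_mX_1^mJ_m\bigr)^3$ inserts three identical, mutually commuting currents all at $X_1$. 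Distributing these three currents (together with the legs at $X_2,\dots,X_n$) over the connected components of the correlator gives the three families in \eqref{eq:ExpansionCurlyW}: all three currents in one component yields $W_{g-2,n+2}(X_1,X_1,X_1,X_{\llbracket n\rrbracket\setminus1})$; a $2+1$ split yields the products $W_{g_1,n_1+1}(X_1,X_{I_1})W_{g_2,n_2+2}(X_1,X_1,X_{I_2})$; and a $1+1+1$ split yields the triple products. The numbers of ways to distribute three identical currents into distinguished components of sizes $(3),(2,1),(1,1,1)$ are $1,3,1$ relative to the ordered sums over $g_i$ and $I_i$, reproducing the coefficients in the bracket, and Euler-characteristic additivity together with the overall $\hbar^2$ fixes the genera to $g-2$, $g_1+g_2=g-1$, and $g_1+g_2+g_3=g$.

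\textbf{Step 4: the unstable correction and the main difficulty.} The currents at $X_2,\dots,X_n$ carry full sums over $\mathbb{Z}_{\odd}$, so whenever two distinct legs form a $(0,2)$ component the mode-sum extension adds the singular term $\tfrac14B$, exactly as in Proposition~\ref{prop:W02} and the proof of Theorem~\ref{thm:ClosedFormulaForWgn}; this is the prescription $W_{0,2}(X_i,X_j)\mapsto W_{0,2}(X_i,X_j)+\tfrac14B(X_i,X_j)$ for $i\neq j$, and no correction appears for the two currents sitting at the coincident argument $X_1$. The main obstacle is the bookkeeping of Step 3: one must simultaneously verify that the multinomial weights of the three coincident currents match the symmetry factors implicit in the ordered genus and subset sums, and that the Euler-characteristic count with the extra factor $\hbar^2$ lands on exactly the advertised $(g',n')$, while tracking which coincidences $X_i=X_j$ do or do not trigger the $B$-correction.
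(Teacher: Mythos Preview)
Your argument is correct and follows exactly the route the paper intends: the two preceding Lemmas are the only inputs, and the Corollary is obtained by (i) replacing the first leg by the commutator via~\eqref{eq:ConjugationE0}, (ii) reading off the $v^1$ and $v^3$ coefficients from the second Lemma, and (iii) re-expressing the resulting connected vacuum expectation (with one, respectively three, extra $\mathbb{J}$-currents at $X_1$) through the cumulant-of-a-product formula; your bookkeeping of the multinomial weights $1,3,1$ and the Euler-characteristic shifts is right, as is the observation that the $\tfrac14B$-correction is triggered only by the $\mathbb{Z}_{\odd}$-sums at $X_2,\dots,X_n$ (the three $X_1$-currents from the second Lemma carry $m\in\mathbb{Z}_{\odd}^+$ only). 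One small slip of phrasing: the definition of $\mathcal{W}_{g,n}$ already sums $m_1$ over $\mathbb{Z}_{\odd}$, so there is nothing to ``extend'' in Step~1 --- the first Lemma merely rewrites that sum over all of $\mathbb{Z}$.
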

	
	Now, repeating \emph{mutatis mutandis} the arguments of the proofs of Theorem~\ref{thm:ClosedFormulaForWgn} and Propositions~\ref{prop:W02},~\ref{prop:ClosedW01}, and~\ref{prop:ClosedWg1}, we obtain closed algebraic formulas for $\mathcal{W}_{g,n}$.
	
	\begin{lemma} \label{lem:curlyw} Under the change of variables $X=ze^{-2\psi(y(z))}$ we have:
	\begin{align} \label{eq:MainFormulaForCurlyWgn}
		& \mathcal{W}_{g,n}(X_{\llbracket n \rrbracket}) = [\hbar^{2g-2+n}]
	\sum_{\substack {j_1,\dots,j_n, \\ r_1,\dots,r_n = 0}}^\infty \left[\prod_{i=2}^n D_i^{j_i} [t_i^{j_i}]
	\frac 1{Q_i}
	e^{-2t_i \psi_{i} } 
	\partial_{y_i}^{r_i} e^{2t_i \frac{{\mathcal S}(t_i\hbar \partial_{y_{i}})}{\mathcal S(\hbar \partial_{y_{i}})} \overline\psi_i } [u_i^{r_i}]\right]
	\\ \notag & 
	\left[D_1^{j_1} [t_1^{j_1}]
	\frac 1{Q_1}
	e^{-2t_1\psi_{1} } 
	\partial_{y_1}^{r_1} e^{2t_1 \frac{{\mathcal S}(t_1\hbar \partial_{y_{1}})}{\mathcal S(\hbar \partial_{y_{1}})} \overline\psi_1 }  v \mathcal{S}(t_1\hbar\partial_{y_1})  (e^{vy_1}+ e^{-vy_1}) [u_1^{r_1}]\right]
	\\ \notag &
	\prod_{i=1}^n \frac{e^{\frac{\hbar u_i}2}+e^{-\frac{\hbar u_i}2}}{4 u_i\hbar \mathcal{S}(u_i\hbar)} e^{u_i \left( \mathcal{S}(\hbar u_i z_i\partial_{z_i}) \overline y_i - y_{i} \right)}
	\sum_{\gamma\in\Gamma_n} 
	\prod_{(v_k,v_\ell)\in E_\gamma} \left( e^{\hbar^2 u_ku_\ell \mathcal{S}(\hbar u_k z_k\partial_{z_k})\mathcal{S}(\hbar u_\ell z_\ell\partial_{z_\ell})B(z_k,z_\ell)} - 1\right)
\end{align}	
for $n\geq 2$, $(g,n)\not=(0,2)$. Here $\Gamma_n$ is the set of all connected simple graphs on $n$ vertices $v_1,\dots,v_n$, and $E_\gamma$ is the set of edges of $\gamma$.

In the case $(g,n)=(0,2)$ we have
\be \label{eq:ClosedFormulaCurlW02}
\mathcal{W}_{0,2} = \frac {v(e^{vy_1}+e^{-vy_1})}{4Q_1Q_2} B(z_1,z_2) 
\ee

In the case $(g,n)=(0,1)$ we have
\be \label{eq:ClosedFormulaCurlW01}
\mathcal{W}_{0,1} = \frac 12 ( e^{vy_1}-e^{-vy_1}).
\ee

In the case $g\geq 1$, $n=1$ we have
\begin{align} \label{eq:ClosedFormulaCurlWg1}
 \mathcal{W}_{g,1} = {} & [\hbar^{2g-1}]
\sum_{\substack {j_1,r_1= 0}}^\infty 
D_1^{j_1} [t_1^{j_1}]
\frac 1{Q_1}
e^{-2t_1\psi_{1} } 
\partial_{y_1}^{r_1} e^{2t_1 \frac{{\mathcal S}(t_1\hbar \partial_{y_{1}})}{\mathcal S(\hbar \partial_{y_{1}})} \overline\psi_1 }  v \mathcal{S}(t_1\hbar\partial_{y_1})  (e^{vy_1}+ e^{-vy_1}) 
\\ \notag & \qquad \qquad \qquad
[u_1^{r_1}] \frac{e^{\frac{\hbar u_1}2}+e^{-\frac{\hbar u_1}2}}{4 u_1\hbar \mathcal{S}(u_1\hbar)} e^{-u_1 y_{1}+u_1 \mathcal{S}(\hbar u_1 z_1\partial_{z_1}) \overline y_1 }
\\ \notag & + [\hbar^{2g-1}]
\sum_{j_1=1}^\infty D^{j_1-1} [t^{j_1}]
e^{-2t_1\psi_{1} + 2t_1 \frac{{\mathcal S}(t_1\hbar \partial_{y_{1}})}{\mathcal S(\hbar \partial_{y_{1}})} \overline\psi_1 }  v \mathcal{S}(t_1\hbar\partial_{y_1})  (e^{vy_1}+ e^{-vy_1}) 
\frac{Dy}{2} .
\end{align}	
In all these formulas we use $y_i = y(z_i)$, $\overline y_i = \overline y(\hbar^2, z_i)$, $\psi_i = \psi(y_i)$, $\overline\psi_i = \overline\psi(\hbar^2,y_i)$, $Q_i = Q(z_i)$, $X_i=X(z_i)$, $D_i = X_i \partial_{X_i} = Q_i^{-1} z_i \partial_{z_i}$.  
\end{lemma}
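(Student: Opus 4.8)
The plan is to bring $\mathcal{W}_{g,n}$ into exactly the normal form underlying the proof of Theorem~\ref{thm:ClosedFormulaForWgn}, the only difference being that the first marked point carries a modified insertion. First I would use the commutation Lemma, Equation~\eqref{eq:ConjugationE0}, to move $[a^0]\mathcal{E}(\hbar v,a)$ to the left past $\sum_{m_1}J_{m_1}X_1^{m_1}/(\hbar m_1)$ so that it acts directly on the covacuum. Since $\lvac[a^0]\mathcal{E}(\hbar v,a)$ does not vanish, this produces a genuine correction: in place of the first copy of $\sum_{m_1}J_{m_1}X_1^{m_1}/(\hbar m_1)$ we obtain $v\,\mathcal{S}(\hbar vX_1\partial_{X_1})\,(\mathcal{E}(\hbar v,X_1)+\mathcal{E}(-\hbar v,X_1))/2$ acting on $\lvac$. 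After this step $\mathcal{W}_{g,n}$ is, structurally, the connected vacuum expectation analyzed in Section~\ref{S4}: a product over $i=2,\dots,n$ of the operators $\mathbb{J}_{m_i}$ written as in Equation~\eqref{eq:JJ}, times this dressed first insertion, all evaluated against $\mathcal{D}\,e^{\sum_d\hbar^{2d}\sum_k(J_{-k}/\hbar k)[z^k]y_d(z)}\rvac$.

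Then I would repeat, \emph{mutatis mutandis}, the chain of manipulations from the proofs of Theorem~\ref{thm:ClosedFormulaForWgn} and Propositions~\ref{prop:W02},~\ref{prop:ClosedW01},~\ref{prop:ClosedWg1}. Namely: commute all bosonic operators, using $[J_+(z),J_-(w)]=\tfrac14B(z,w)$, which generates the factors $e^{u_i\mathcal{S}(\hbar u_iz_i\partial_{z_i})\overline y_i}$ and $e^{\hbar^2u_ku_\ell\mathcal{S}(\hbar u_kz_k\partial_{z_k})\mathcal{S}(\hbar u_\ell z_\ell\partial_{z_\ell})B(z_k,z_\ell)}$; pass to the connected correlator by inclusion--exclusion, replacing $\prod_{i<j}(\,\cdots\,)$ by $\sum_{\gamma\in\Gamma_n}\prod_{(v_k,v_\ell)\in E_\gamma}(\,\cdots\,)$; expand in $u_1,\dots,u_n$ (legitimate for $n\geq2$, whence the restriction); extend the $m_i$-summations from $\mathbb{Z}_{\odd}^+$ first to $\mathbb{Z}_{\odd}$ (this affects only $\mathcal{W}_{0,2}$, by the singular $B$-term, which is why $(g,n)=(0,2)$ is treated separately) and then to $\mathbb{Z}$ by the parity argument of Theorem~\ref{thm:ClosedFormulaForWgn}; apply the substitution $\sum X^mg(m)=\sum_jD^jX^m[t^j]g(t)$; and finally apply Lagrange--B\"uhrmann inversion for $X=ze^{-2\psi(y(z))}$, which produces the factors $\tfrac1{Q_i}e^{-2t_i\psi_i}$ and brings everything to the form in Equation~\eqref{eq:MainFormulaForCurlyWgn}.

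The main technical point, and the step I expect to be the real obstacle, is tracking how the dressed first insertion survives these manipulations. One must check that under the same operations (extraction of $[z_1^{m_1}]$, the $m_1\mapsto t_1$ substitution, the change of variables $X_1=z_1e^{-2\psi_1}$ with $D_1=Q_1^{-1}z_1\partial_{z_1}$) the prefactor $v\,\mathcal{S}(\hbar vX_1\partial_{X_1})$ transmutes into $v\,\mathcal{S}(t_1\hbar\partial_{y_1})$, exactly as $\mathcal{S}(m_i\hbar\partial_y)$ arises for an ordinary point, while the symmetric combination $\mathcal{E}(\hbar v,X_1)+\mathcal{E}(-\hbar v,X_1)$ contributes, on top of the universal per-point factor $\frac{e^{\hbar u_1/2}+e^{-\hbar u_1/2}}{4u_1\hbar\mathcal{S}(u_1\hbar)}e^{u_1(\mathcal{S}(\hbar u_1z_1\partial_{z_1})\overline y_1-y_1)}$ already present, precisely the factor $e^{vy_1}+e^{-vy_1}$. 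Here the parity remarks recorded after the statement of Theorem~\ref{thm:ClosedFormulaForWgn} are essential: the coefficient of $u_1^{r_1}$ is even (resp.\ odd) in $z_1$ according to the parity of $r_1$, and $e^{vy_1}+e^{-vy_1}$ is even in $z_1$ while $e^{vy_1}-e^{-vy_1}$ is odd, so that the extension of the $m_1$-summation to all of $\mathbb{Z}$ remains consistent in the presence of the dressing.

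It remains to handle the degenerate cases. For $n=1$, $g\geq1$ one repeats the proof of Proposition~\ref{prop:ClosedWg1}, splitting off the $u_1$-singular part of the dressed insertion as there, to obtain Equation~\eqref{eq:ClosedFormulaCurlWg1}. For $\mathcal{W}_{0,2}$ one repeats the computation of Proposition~\ref{prop:W02} with the extra factor inserted at the first point, which at the $(0,2)$ level collapses to $v(e^{vy_1}+e^{-vy_1})$ (the $\mathcal{S}$-operator acting trivially to leading order), giving Equation~\eqref{eq:ClosedFormulaCurlW02}; note there is no $-\tfrac14 B(X_1,X_2)$-type correction here since $\mathcal{W}_{0,2}$ is defined through the $m\in\mathbb{Z}_{\odd}$ correlator from the outset. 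For $\mathcal{W}_{0,1}$ one argues as in Proposition~\ref{prop:ClosedW01}: compute $D\mathcal{W}_{0,1}$, recognize the antiderivative in $y_1$ of $v\,\mathcal{S}(t_1\hbar\partial_{y_1})(e^{vy_1}+e^{-vy_1})$ at leading order, and obtain $\mathcal{W}_{0,1}=\tfrac12(e^{vy_1}-e^{-vy_1})$, Equation~\eqref{eq:ClosedFormulaCurlW01}. Finally, as a consistency check the Corollary preceding the lemma is reproduced: at order $v^1$ the dressed insertion reduces to $2\sum_mX_1^mJ_m$ by the computation of the $v^1$-coefficient in the Lemma preceding that Corollary, so the $v^1$-coefficient of $\mathcal{W}_{g,n}$ is $2W_{g,n}$, matching Theorem~\ref{thm:ClosedFormulaForWgn}.
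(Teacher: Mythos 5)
Your skeleton is broadly the right one, and your treatment of the special cases and of why $\mathcal{W}_{0,2}$ carries no $-\tfrac14 B(X_1,X_2)$ correction is accurate, but the first and central step has two problems. First, a factual error: $\lvac [a^0]\mathcal{E}(\hbar v,a)$ \emph{does} vanish. By Equation~\eqref{eq:DefinitionCalE} the covacuum annihilates the $J_{-l}$-exponential, and the $[a^0]$-coefficient of $\exp\bigl(2u\sum_{l}a^l\mathcal{S}(lu)J_l\bigr)-1$ is $1-1=0$. It is precisely this vanishing that entitles you to \emph{replace} the first insertion by the commutator of Equation~\eqref{eq:ConjugationE0}; under your stated (false) premise the uncommuted term $\lvac[a^0]\mathcal{E}(\hbar v,a)\sum_{m_1}J_{m_1}X_1^{m_1}/(\hbar m_1)\cdots$ would survive and the replacement would not follow. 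Second, and more seriously, the step you yourself flag as ``the real obstacle'' — that $v\mathcal{S}(\hbar vX_1\partial_{X_1})$ ``transmutes'' into $v\mathcal{S}(t_1\hbar\partial_{y_1})$ and that $\tfrac12(\mathcal{E}(\hbar v,X_1)+\mathcal{E}(-\hbar v,X_1))$ contributes exactly the factor $e^{vy_1}+e^{-vy_1}$ on top of the universal per-point factors — is the entire content of the lemma, and you give no argument for it. The commutator equals $\sum_{k,m}(-1)^m\hat F_{m-k,m}X_1^k\bigl(T(m)-T(m-k)\bigr)$ with $T(x)=\tfrac1{2\hbar}(e^{\hbar vx}-e^{-\hbar vx})$, which is not of the form $\sum_kX_1^k\,\mathbb{J}_k$, so the derivation of Theorem~\ref{thm:ClosedFormulaForWgn} does not apply to it verbatim; pushing it through the $\mathcal{D}$-conjugation, the parity argument, the $m_1\mapsto t_1$ substitution and the Lagrange--B\"uhrmann step is a genuinely new computation that your proposal asserts rather than performs.

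The paper closes exactly this gap by a different packaging of the same algebra: it writes $[a^0]\mathcal{E}(\hbar v,a)=[\epsilon^1]\,e^{\epsilon[a^0]\mathcal{E}(\hbar v,a)}$ and observes (Equation~\eqref{eq:ExtralDIagonal}) that $e^{\epsilon[a^0]\mathcal{E}(\hbar v,a)}$ is a \emph{diagonal} group element of the same form as $\mathcal{D}$, corresponding to shifting $\overline\psi$ by $\epsilon\,\Delta\overline\psi$ with $\Delta\overline\psi(\hbar^2,y)=\tfrac1{2\hbar}(e^{\hbar v/2}-e^{-\hbar v/2})(e^{vy}+e^{-vy})$. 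Hence the dressed first insertion is an operator $\widetilde{\mathbb{J}}_{m_1}$ of \emph{exactly} the same type as $\mathbb{J}_{m_1}$, every step of the proofs of Theorem~\ref{thm:ClosedFormulaForWgn} and Propositions~\ref{prop:W02},~\ref{prop:ClosedW01},~\ref{prop:ClosedWg1} applies without modification, and extracting $[\epsilon^1]$ at the end produces the factor $v\mathcal{S}(t_1\hbar\partial_{y_1})(e^{vy_1}+e^{-vy_1})$ automatically, because $\tfrac{2\,\mathcal{S}(t_1\hbar\partial_{y_1})}{\mathcal{S}(\hbar\partial_{y_1})}\Delta\overline\psi=v\,\mathcal{S}(t_1\hbar\partial_{y_1})(e^{vy_1}+e^{-vy_1})$ (the functions $e^{\pm vy}$ being eigenfunctions of $\partial_y$, and the factor $2t_1$ from differentiating the exponential cancelling against the $1/m_1$ in the definition of $\mathcal{W}_n$). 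To complete your argument you should either adopt this reformulation or actually carry out the bookkeeping for the commutator insertion; as written, the formula is assumed at its only nontrivial point.
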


\begin{proof} First, observe that $[a^0]\mathcal{E}(\hbar v, a)$ can be represented as the coefficient of $[\epsilon^1]$ in the expression
\begin{align} \label{eq:ExtralDIagonal}
	e^{\epsilon [a^0]\mathcal{E}(\hbar v, a)}=e^{\epsilon \sum_{k\in \mathbb{Z}} (-1)^k e^{\hbar vk} :\phi_k\phi_{-k}:} = e^{\epsilon \sum_{k\in \mathbb{Z}_+} (-1)^{k} (e^{\hbar vk}-e^{-\hbar vk}) :\phi_k\phi_{-k}:} 
\end{align}
For $T(k)=\frac 1{2\hbar} (e^{\hbar vk}-e^{-\hbar vk})$ we have $T(k+1)-T(k) = \Delta\overline\psi(\hbar^2,\hbar(k+\frac 12))$, where 
\begin{align}
\Delta\overline\psi(\hbar^2,y) = \frac 1{2\hbar} (e^{\frac{\hbar v}2} - e^{-\frac{hv}2})(e^{vy}+e^{-vy})= v +v^3\left(\frac{\hbar^2}{24} + \frac {y^2}2\right)+O(v^5).	
\end{align}
Define $\widetilde{ \mathbb{J}}_k$ as the conjugation of ${ \mathbb{J}}_k$ with the operator given in~\eqref{eq:ExtralDIagonal}. It is operator of exactly the same type as ${ \mathbb{J}}_k$, we just replace $\overline\psi(\hbar^2,y)$ by $\overline\psi(\hbar^2,y) + \epsilon \Delta\overline\psi(\hbar^2,y)$ in its definition. By~\eqref{eq:JJ} we have:
\begin{align}
\widetilde{ \mathbb{J}}_k	& = \frac 14  \sum_{r=0}^\infty \partial_y^r \exp\left(2k \frac{{\mathcal S}(k\hbar \partial_y)}{\mathcal S(\hbar \partial_y)} (\overline\psi(\hbar^2,y) + \epsilon \Delta\overline\psi(\hbar^2,y)) \right) \Big|_{y=0}
\\ \notag & \quad
[u^r a^k]  
\frac{e^{\hbar u/2}+e^{-\hbar u/2}}{u\hbar \mathcal{S}(u\hbar)}\exp\left(2\hbar u \sum_{l\in{\mathbb Z}_{\odd}^+} a^{-l} {\mathcal S}(l\hbar u) J_{-l}\right)\exp\left(2\hbar u\sum_{l\in{\mathbb Z}_{\odd}^+} a^l{\mathcal S}(l\hbar u) J_l\right).
\end{align}

Let  $\mathcal{W}_n = \sum_{g=0}^\infty \hbar^{2g-2+n} \mathcal{W}_{g,n}$. Then 
\begin{align}
	\mathcal{W}_{n}(X_{\llbracket n \rrbracket}) & =
	[\epsilon^1] \sum_{m_1,\dots,m_n\in {\mathbb Z}_{\odd}^+} \frac 1{m_1} X_1^{m_1}\dots X_n^{m_n}
		\\ \notag & \quad \lvac \widetilde{\mathbb{J}}_{m_1}\dots \mathbb{J}_{m_n} e^{\sum_{d=0}^\infty \hbar^{2d} \sum_{k\in{\mathbb Z}_{\odd}^+}\frac{J_{-k}}{\hbar k}[z^k] y_d(z)}\rvac^\circ.
\end{align}
By commutation of the operators, we have:
\begin{align}\label{eq:ClosedFormulaCurWDIsc}
	\mathcal{W}_{n} = {} & \sum_{m_1,\dots,m_n\in\mathbb{Z}_{\odd}^+} 
	\prod_{i=1}^n  X_i^{m_i} \cdot  \sum_{r_1,\dots,r_n=0}^\infty 
		\frac 1{m_1} [\epsilon^1] \partial_y^{r_1} \exp\left(2m_1 \frac{{\mathcal S}(m_1\hbar \partial_y)}{\mathcal S(\hbar \partial_y)} (\overline\psi + \epsilon \Delta\overline\psi(\hbar^2,y))\right) \Big|_{y=0}
			\\ \notag &
	\prod_{i=2}^n
	\partial_y^{r_i} \exp\left(2m_i \frac{{\mathcal S}(m_i\hbar \partial_y)}{\mathcal S(\hbar \partial_y)} \overline\psi \right) \Big|_{y=0}
	[\prod_{i=1}^n u_i^{r_i} z_i^{m_i}] \prod_{i=1}^n \frac{e^{\frac{\hbar u_i}2}+e^{-\frac{\hbar u_i}2}}{4 u_i\hbar \mathcal{S}(u_i\hbar)} e^{u_i \mathcal{S}(\hbar u_i z_i\partial_{z_i}) \overline y_i }
				\\ \notag &
	\sum_{\gamma\in\Gamma_n} 
\prod_{(v_k,v_\ell)\in E_\gamma} \left( e^{\hbar^2 u_ku_\ell \mathcal{S}(\hbar u_k z_k\partial_{z_k})\mathcal{S}(\hbar u_\ell z_\ell\partial_{z_\ell})B(z_k,z_\ell)} - 1\right),
\end{align}
where $\Gamma_n$ is the set of all connected simple graphs on $n$ vertices $v_1,\dots,v_n$, and $E_\gamma$ is the set of edges of $\gamma$.

As in the proof of Theorem~\ref{thm:ClosedFormulaForWgn} and Propositions~\ref{prop:ClosedW01} and~\ref{prop:ClosedWg1}, we use that for any formal power series $G(y)$ in $y$ and $F(u)$ in $u$, we have
\begin{align}
	\sum_{r=0}^\infty \partial_y^r G(y) |_{y=0} [u^r] F(u) = \sum_{r=0}^\infty \partial_y^r G(y)  [u^r] e^{-uy}F(u)
\end{align}
(\cite[Lemma 4.5]{BDKS1}). Applying it to Equation~\eqref{eq:ClosedFormulaCurWDIsc} in the cases $n\geq 2$, $(g,n)\not=(0,2)$ (these cases have to be treated separately, it is the same situation as in the proof of Theorem~\ref{thm:ClosedFormulaForWgn}), we obtain:
\begin{align}\label{eq:ClosedFormulaCurW-2}
	\mathcal{W}_{n} = {} & \sum_{m_1,\dots,m_n\in\mathbb{Z}_{\odd}^+} 
	\prod_{i=1}^n  X_i^{m_i} [z_i^{m_i}] e^{2m_i\psi_i} \cdot  \sum_{r_1,\dots,r_n=0}^\infty 
	\\ \notag &
	e^{-2m_1\psi_1}\partial_{y_1}^{r_1}\left( \exp\left(2m_1 \frac{{\mathcal S}(m_1\hbar \partial_{y_1})}{\mathcal S(\hbar \partial_{y_1})} \overline\psi_1 \right) v \mathcal{S}(m_1\hbar\partial_{y_1})  (e^{vy_1}+ e^{-vy_1})\right) 
	\\ \notag &
	\prod_{i=2}^n
	e^{-2m_i\psi_i}\partial_{y_i}^{r_i} \exp\left(2m_i \frac{{\mathcal S}(m_i\hbar \partial_{y_i})}{\mathcal S(\hbar \partial_{y_i})} \overline\psi_i \right) 
	[\prod_{i=1}^n u_i^{r_i}]\prod_{i=1}^n \frac{e^{\frac{\hbar u_i}2}+e^{-\frac{\hbar u_i}2}}{4 u_i\hbar \mathcal{S}(u_i\hbar)} e^{u_i \left(\mathcal{S}(\hbar u_i z_i\partial_{z_i}) \overline y_i -y \right)}
	\\ \notag &
	\sum_{\gamma\in\Gamma_n} 
	\prod_{(v_k,v_\ell)\in E_\gamma} \left( e^{\hbar^2 u_ku_\ell \mathcal{S}(\hbar u_k z_k\partial_{z_k})\mathcal{S}(\hbar u_\ell z_\ell\partial_{z_\ell})B(z_k,z_\ell)} - 1\right),
\end{align}
Starting from this point all further steps just repeat the computations made in the proof of Theorem~\ref{thm:ClosedFormulaForWgn}. We use three ideas: 
\begin{itemize}
	\item extend the summation to $m_1,\dots,m_n\in \mathbb{Z}$;
	\item capture the polynomial dependence on $m_1,\dots,m_n$ replacing their entrances by $t_1,\dots,t_n$ and applying $\prod_{i=1}^n\sum_{j_i=1}^\infty D_i^{j_i} [t_i^{j_i}]$;
	\item apply Lagrange--B\"uhrmann formula for the change of variables.
\end{itemize}
This completes the proof of Equation~\eqref{eq:MainFormulaForCurlyWgn}. All other equations stated in the lemma are obtained by small variations of this argument, which repeat the corresponding special cases in the proofs of Propositions~\ref{prop:W02},~\ref{prop:ClosedW01}, and~\ref{prop:ClosedWg1}.
\end{proof}

\begin{corollary}\label{cor:CurlyWLLoop} The functions $\mathcal{W}_{g,n}$, $g\geq 0$, $n\geq 1$, are formal power series in $v$, whose coefficients are rational functions in the variables $z_{\llbracket n \rrbracket}$, that near each simple zero point $p$ of $d\log X$ satisfy the property that
\be \label{eq:Property}
	\mathcal{W}_{g,n}(z_1,z_{\llbracket n \rrbracket\setminus 1}) + \mathcal{W}_{g,n}(\sigma(z_1),z_{\llbracket n \rrbracket\setminus 1})
\ee
is holomorphic at $z_1\to p$. Here $\sigma$ is the deck transformation of $X$ at $p$. 
\end{corollary}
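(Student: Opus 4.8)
The plan is to read the behaviour of $\mathcal{W}_{g,n}$ in the variable $z_1$ near $p$ directly off the closed formulas of Lemma~\ref{lem:curlyw} and then to play it against two elementary properties of the deck transformation $\sigma$.

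First I would record that, under the analytic assumptions of Section~\ref{sec:Assumptions}, every coefficient of $v^k$ in each of~\eqref{eq:MainFormulaForCurlyWgn}, \eqref{eq:ClosedFormulaCurlW02}, \eqref{eq:ClosedFormulaCurlW01}, \eqref{eq:ClosedFormulaCurlWg1} is a rational function of $z_1,\dots,z_n$: the only building blocks are $\psi(y(z_i))$, $\overline\psi(\hbar^2,y(z_i))$, $y(z_i)$, $\overline y(\hbar^2,z_i)$, $B(z_k,z_\ell)$, the factors $1/Q_i$, the operators $z_i\partial_{z_i}$ and $D_i=Q_i^{-1}z_i\partial_{z_i}$, and finitely many coefficient extractions in $\hbar$, $t_i$, $u_i$ (the sums over $j_i,r_i$ and over the graphs $\gamma$ being finite for fixed $(g,n)$). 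Next, since $d\log X=(z^{-1}-2\psi'(y(z))y'(z))\,dz$ has a zero, not a pole, at $p$, the functions $y(z)$ and $\psi(y(z))$ are holomorphic at $p$, the $\hbar$-corrections are holomorphic at $p$ by assumption, $B(z_1,z_k)$ is holomorphic at $z_1=p$ for the remaining variables generic, and $z_1\partial_{z_1}$ preserves holomorphicity at $p$ because $p\neq 0$. Hence the only poles of $\mathcal{W}_{g,n}$ at $z_1=p$ come from the explicit factor $1/Q_1$ and from $D_1$, and $Q_1$ has a \emph{simple} zero at $p$ because the zeros of $d\log X$ are assumed simple.

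Consequently each coefficient of $v^k$ in $\mathcal{W}_{g,n}$ can be written, as a function of $z_1$, as a finite sum $\sum_{j\geq 0}D_1^{\,j}\big(Q_1^{-1}h_j(z_1)\big)$, where the $h_j$ depend on $v$ and on the other variables and are holomorphic at $z_1=p$ (for $(g,n)=(0,2)$ only the term $j=0$ occurs; in the last summand of~\eqref{eq:ClosedFormulaCurlWg1} one has $D_1^{\,j-1}$ applied to $Q_1^{-1}\cdot(\text{holomorphic})$, of the same shape; for $(g,n)=(0,1)$ there is no $1/Q_1$ and the assertion is immediate). I would then use two facts about $\sigma$ near $p$, both immediate from $X\circ\sigma=X$, $\sigma\circ\sigma=\mathrm{id}$, $\sigma(p)=p$: (i) $D_1=X_1\partial_{X_1}$ commutes with the pullback $\sigma^*$ in $z_1$, because $X$ is $\sigma$-invariant; and (ii) differentiating $X\circ\sigma=X$ and using $Q=zX'/X$ gives $Q(\sigma(z))=g(z)Q(z)$ with $g(z)=\sigma(z)/(z\sigma'(z))$ holomorphic at $p$ and $g(p)=-1$. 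By (i),
\[
\mathcal{W}_{g,n}(z_1,z_{\llbracket n\rrbracket\setminus 1})+\mathcal{W}_{g,n}(\sigma(z_1),z_{\llbracket n\rrbracket\setminus 1})=\sum_{j\geq 0}D_1^{\,j}\Big(Q_1^{-1}\big(h_j(z_1)+g(z_1)^{-1}h_j(\sigma(z_1))\big)\Big),
\]
and by (ii) the bracket $h_j(z_1)+g(z_1)^{-1}h_j(\sigma(z_1))$ takes the value $h_j(p)+(-1)h_j(p)=0$ at $z_1=p$, so $Q_1^{-1}\big(h_j(z_1)+g(z_1)^{-1}h_j(\sigma(z_1))\big)$ — a simple pole killed by a simple zero — is holomorphic at $z_1=p$. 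It is also $\sigma$-invariant; since a $\sigma$-invariant function holomorphic at $p$ descends to a function of $X_1$ holomorphic at $X_1=X(p)$ and $X_1\partial_{X_1}$ preserves holomorphicity in $X_1$, applying $D_1^{\,j}$ keeps the term holomorphic at $z_1=p$. Summing over the finitely many $j$ yields the claim at the zero $p$; the other variables never enter, so the argument applies verbatim at each simple zero of $d\log X$.

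The point requiring care is the bookkeeping in the reduction to the normal form $\sum_j D_1^{\,j}(Q_1^{-1}\cdot\text{holomorphic})$: one must check, regime by regime in Lemma~\ref{lem:curlyw}, that the operators $\mathcal{S}(\hbar u_1 z_1\partial_{z_1})$ acting on $\overline y_1$ and on the edge kernels $B(z_1,z_k)$, and the operators $\partial_{y_1}$, $\mathcal{S}(t_1\hbar\partial_{y_1})$ acting on $\overline\psi_1$ and on $e^{\pm v y_1}$, introduce no poles at $z_1=p$, and that the terms carrying several stacked $D_1$'s (as in the $n=1$ formulas) and the higher powers of $v$ are all of the stated shape. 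Everything beyond that is formal algebra; the only genuinely analytic input, besides the prescribed location of the singularities of $\psi'$, $y'$ and of the $\hbar$-corrections, is the simplicity of the zero $p$, which is exactly what matches the simple pole of $Q_1^{-1}$ against the simple zero forced by $g(p)=-1$. This corollary is the technical core from which the linear loop equations (the $v^1$-coefficient) and, together with the expansion~\eqref{eq:ExpansionCurlyW}, the quadratic loop equations of Theorem~\ref{thm:Blobbed} will follow.
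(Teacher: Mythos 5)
Your proposal is correct and follows essentially the same route as the paper's own (much terser) proof: both reduce to the observation that each formula of Lemma~\ref{lem:curlyw} is a finite sum of $D_1^{j}$ applied to a function with at most a simple pole at $p$ coming from $1/Q_1$, that a simple pole is killed by the symmetrization $z_1\mapsto\sigma(z_1)$ (your $g(p)=-1$ computation), and that $D_1$ commutes with $\sigma^*$ and hence preserves the property. The extra bookkeeping you supply is consistent with, and fills in, the two-sentence argument given in the paper.
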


\begin{proof} This follows directly from the structure of the formulas in Lemma~\ref{lem:curlyw}. We apply $D_1^{j_1}$ to a rational function that has a simple pole at $w\to p$ (coming from the factor $1/Q_1$). A function with at most simple pole automatically satisfies~\eqref{eq:Property}, and the operator $D_1$ preserves this property.
\end{proof}

\begin{proof}[Proof of Theorem~\ref{thm:Blobbed}] Fix a zero point $p$ of $d\log X$ (which by assumption is simple) and let $\sigma$ be the deck transformation of $X$ near this point. For any function $f(z)$ defined in the neighborhood of $p$ we define
\begin{equation}
	\mathsf{S}_zf(z) = f(z)+f(\sigma(z)).
\end{equation}
Then the linear loop equations at the point $p$ for the symmetric differentials expressed as in Equation~\eqref{eq:omega-definition} can be equivalently rewritten as
\begin{align}
	\mathsf{S}_{z_1} W_{g,n}(z_{\llbracket n \rrbracket})
\end{align}
is holomorphic at $z\to p$ for any $(g,n)$. Corollary~\ref{cor:CurlyWLLoop} applied to the coefficients of $[v^1]$ in $\mathcal{W}_{g,n}$ implies that it is indeed the case. 

Note also that Corollary~\ref{cor:CurlyWLLoop} applied to the coefficients of $[v^3]$ in $\mathcal{W}_{g,n}$ implies that $\mathsf{S}_{z_1}[v^3]\mathcal{W}_{g,n}$ is holomorphic at $z\to p$ for any $(g,n)$. Using explicit formula for $[v^3]2^{3-g}\mathcal{W}_{g,n}$ given in Equation~\eqref{eq:ExpansionCurlyW} and the linear loop equations, we conclude that
\begin{align} \label{eq:QLEPreliminary}
	 \mathsf{S}_{z_1}\Big( &
	2^{1-(g-2)}W_{g-2,n+2}(z_1,z_1,z_1,z_{\llbracket n \rrbracket\setminus 1} ) 
	\\ \notag & 
	+ 3 \sum_{\substack{g_1+g_2=g-1 \\ I_1\sqcup I_2 = \llbracket n \rrbracket\setminus 1}} 2^{1-g_1}W_{g_1,n_1+1}(z_1,z_{I_1})2^{1-g_2}W_{g_2,n_2+2}(z_1,z_1,z_{I_2})
	\\ \notag & 
	+ \sum_{\substack{g_1+g_2+g_3=g \\ I_1\sqcup I_2 \sqcup I_3= \llbracket n \rrbracket\setminus 1}} 2^{1-g_1}W_{g_1,n_1+1}(z_1,z_{I_1})2^{1-g_3}W_{g_2,n_2+1}(z_1,z_{I_2})2^{1-g_3}W_{g_3,n_3+1}(z_1,z_{I_3})
	\Big) 
\end{align}
is holomorphic at $z\to p$ for any $(g,n)$. Here we abuse the notation a little bit since each time we use $2W_{0,2}(z_i,z_j)$ with $i\not=j$, we actually mean $\frac 12B(z_i,z_j)$. 

This particular system of equations is studied in a bit different situation in~\cite[Lemma 20]{borot2017special}. The main difference between our situation and the one studied in~\cite[Lemma 20]{borot2017special} is the choice of $B$, which is the standard Bergman kernel in~\cite{borot2017special}, but it does not affect the proof in any step. Another difference is the rescaling of $W_{g,n}$ by $2^{1-g}$ in the definition of $\omega_{g,n}$'s, but both~\eqref{eq:QLEPreliminary} and the quadratic loop equations are homogeneous with respect to this rescaling. 

So, adjusted in our situation~\cite[Lemma 20]{borot2017special} proves that the holomorphy of the expression given in~\eqref{eq:QLEPreliminary} implies the quadratic loop equations for the symmetric differentials $\omega_{g,n}$ given by Equation~\eqref{eq:omega-definition}, under the condition that $y$ does not vanish at $z=p$. The latter condition is obviously satisfied in our situation. Indeed, the point $p$ satisfies the equation $1-2p\psi'(y(p))y'(p)=0$. On the other hand, $\psi'$ is an odd function in $y$, so at any point $z$ where $y(z)=0$, we have $1-2z\psi'(y(z))y'(z)=1$. Therefore, $y$ does not vanish at $z=p$. Hence the symmetric differentials $\omega_{g,n}$ satisfy the quadratic loop equations. 
\end{proof}


\section{Formulas for $H_{g,n}$}\label{S6}

In this section we derive expressions for $H_{g,n}$ by integration of the earlier derived expressions for $W_{g,n}$.
%
%
%
Since the case of $W_{g,1}$ was a bit special, we firstly perform a separate computation for $H_{g,1}$.

\begin{proposition}\label{prop:Hg1Formula} For $g\geq 1$ we have:
	\begin{align} \label{eq:Hg1formula}
		& H_{g,1} = 
		[\hbar^{2g}] \sum_{j=2}^\infty D^{j-2} [t^j]
		e^{-2t\psi+2t \frac{{\mathcal S}(t\hbar \partial_y)}{\mathcal S(\hbar \partial_y)} \overline\psi} 
		\frac{Dy}{2} 
		\\ \notag 
		& + [\hbar^{2g}]\sum_{j=1}^\infty D^{j-1} [t^j] \sum_{r=0}^\infty \frac 1Q e^{-2t \psi}
		\partial_y^r e^{2t \frac{{\mathcal S}(t\hbar \partial_y)}{\mathcal S(\hbar \partial_y)} \overline\psi} 
		[u^r] 
		\left(\frac{e^{\frac{\hbar u}2}+e^{-\frac{\hbar u}2}}{4 u\mathcal{S}(u\hbar)} e^{-uy+u\mathcal{S}(\hbar u z\partial_{z}) 
			\overline y } \right)
		\\ \notag
		& + [\hbar^{2g}] \int_0^z \left( \frac{1}{S(\hbar\partial_y)} \overline \psi - \psi\right) y'dz
		+ [\hbar^{2g}] \int_0^z\frac {dz}{2z} 
		\left(\overline y -y \right).
	\end{align}
	Here, as usual, we use $y= y(z)$, $\overline y = \overline y(\hbar^2, z)$, $\psi = \psi(y) = \psi(y(z))$, $\overline\psi= \overline\psi(\hbar^2,y)=\overline\psi(\hbar^2,y(z))$, $Q = Q(z)$, $X=X(z)$, $D = X\partial_{X} = Q^{-1} z \partial_{z}$.  
\end{proposition}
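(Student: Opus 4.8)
The plan is to obtain $H_{g,1}$ by integrating the closed formula for $W_{g,1}$ proved in Proposition~\ref{prop:ClosedWg1}. Since $W_n^\bullet = D_1\cdots D_n H_n^\bullet$ and for $n=1$ connected and disconnected $n$-point functions coincide, we have $W_{g,1}=D\,H_{g,1}$ with $D=X\partial_X$, so $H_{g,1}=D^{-1}W_{g,1}$. The operator $D^{-1}$ is made unambiguous by the normalization built into the definition of $H_n^\bullet$ in Section~\ref{S5.2}: $H_{g,1}$ is a series in $X$ containing only positive powers, hence it vanishes at $X=0$, which under $X=ze^{-2\psi(y(z))}$ is the point $z=0$ (the factor $e^{-2\psi(y(z))}$ is a unit there since $y=y(z)$ is odd, so $y(0)=0$). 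Concretely, using $d\log X=Q\,d\log z$ one has $D^{-1}f=\int_0^z f\,Q\,\tfrac{dz}{z}$, and in particular $D^{-1}(g\,Dy)=\int_0^z g\,y'\,dz$ and $D^{-1}(h/Q)=\int_0^z h\,\tfrac{dz}{z}$. I would apply $D^{-1}$ to the three groups of summands in~\eqref{eq:ClosedWg1} separately, using that $D^{-1}$ commutes with $[\hbar^{2g-1}]$, with the sums over $j$ and $r$, and with the coefficient extractions $[t^j]$ and $[u^r]$.

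For the summands carrying a genuinely positive power of $D$, namely $D^{j-1}$ with $j\geq 2$ in the middle sum of~\eqref{eq:ClosedWg1} and $D^{j}$ with $j\geq 1$ in the last sum, applying $D^{-1}$ just lowers the exponent by one. After relabelling the indices and moving the explicit scalar factors $\hbar^{-1}$ out of the $\hbar$-extraction (so that $[\hbar^{2g-1}]$ of the $W_1$-formula becomes $[\hbar^{2g}]$ of the $H$-formula, with $\tfrac{Dy}{2\hbar}\mapsto\tfrac{Dy}{2}$ and $\tfrac{1}{4u\hbar\mathcal{S}(u\hbar)}\mapsto\tfrac{1}{4u\mathcal{S}(u\hbar)}$), these terms reproduce exactly the first two lines of~\eqref{eq:Hg1formula}. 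The term $\tfrac{y}{2\hbar}$ in~\eqref{eq:ClosedWg1} contributes nothing, since $y=y(z)$ is $\hbar$-independent and $[\hbar^{2g-1}]\tfrac{y}{2\hbar}=0$ for $g\geq 1$.

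The only remaining summands are the two carrying $D^0$, which require genuine integration: the $j=1$ term of the middle sum and the $j=0$ term of the last sum. For the first, I would use that the exponent of $e^{-2t\psi+2t\frac{\mathcal{S}(t\hbar\partial_y)}{\mathcal{S}(\hbar\partial_y)}\overline\psi}$ vanishes at $t=0$ and that the operator $\mathcal{S}(t\hbar\partial_y)$ equals the identity at $t=0$, whence $[t^1]e^{-2t\psi+2t\frac{\mathcal{S}(t\hbar\partial_y)}{\mathcal{S}(\hbar\partial_y)}\overline\psi}=2\bigl(\tfrac{1}{\mathcal{S}(\hbar\partial_y)}\overline\psi-\psi\bigr)$; so this term equals $\hbar^{-1}\bigl(\tfrac{1}{\mathcal{S}(\hbar\partial_y)}\overline\psi-\psi\bigr)Dy$, and $D^{-1}(g\,Dy)=\int_0^z g\,y'\,dz$ turns it into $[\hbar^{2g}]\int_0^z\bigl(\tfrac{1}{\mathcal{S}(\hbar\partial_y)}\overline\psi-\psi\bigr)y'\,dz$. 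For the second, at $t=0$ the factor $\partial_y^r e^{2t\frac{\mathcal{S}(t\hbar\partial_y)}{\mathcal{S}(\hbar\partial_y)}\overline\psi}$ collapses to $\delta_{r,0}$, so the $j=0$ term equals $\tfrac{1}{Q}[u^0]\bigl(\tfrac{e^{\hbar u/2}+e^{-\hbar u/2}}{4u\hbar\mathcal{S}(u\hbar)}\,e^{u(\mathcal{S}(\hbar u z\partial_z)\overline y-y)}\bigr)$; since $\tfrac{e^{\hbar u/2}+e^{-\hbar u/2}}{4u\hbar\mathcal{S}(u\hbar)}=\tfrac14\coth(\hbar u/2)=\tfrac{1}{2\hbar u}+O(\hbar u)$ and $\mathcal{S}(\hbar u z\partial_z)\overline y-y=(\overline y-y)+O(u^2)$, only the product $\tfrac{1}{2\hbar u}\cdot u(\overline y-y)$ survives the extraction, giving $\tfrac{\overline y-y}{2\hbar Q}$, and $D^{-1}(h/Q)=\int_0^z h\,\tfrac{dz}{z}$ produces $[\hbar^{2g}]\int_0^z\tfrac{dz}{2z}(\overline y-y)$. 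Collecting the four pieces yields~\eqref{eq:Hg1formula}.

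I expect the main difficulty to be bookkeeping rather than conceptual: one must isolate precisely the two $D^0$-summands and no others, keep track of the shift in $\hbar$-grading when the explicit $\hbar^{-1}$ factors are moved between the $W$- and $H$-normalizations, and verify that no extra integration constants arise. The last point uses that $H_{g,1}$ has no $X^0$-part together with the regularity at $z=0$ of the two integrands $\bigl(\tfrac{1}{\mathcal{S}(\hbar\partial_y)}\overline\psi-\psi\bigr)y'$ and $\tfrac{\overline y-y}{z}$, the latter being regular because $\overline y-y=\sum_{d\geq 1}\hbar^{2d}y_d(z)$ with each $y_d$ odd, hence $O(z)$ at the origin.
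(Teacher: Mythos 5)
Your proposal is correct and follows essentially the same route as the paper: both integrate the closed formula of Proposition~\ref{prop:ClosedWg1} by applying $D^{-1}$ (normalized by vanishing at $z=0$), lowering the power of $D$ in the summands with $j\geq 2$ (resp.\ $j\geq 1$), and explicitly integrating the two $D^0$-summands via $[t^1]e^{-2t\psi+2t\frac{\mathcal{S}(t\hbar\partial_y)}{\mathcal{S}(\hbar\partial_y)}\overline\psi}=2\bigl(\tfrac{1}{\mathcal{S}(\hbar\partial_y)}\overline\psi-\psi\bigr)$ and the $[t^0][u^0]$ extraction. Your extra remarks on the absence of integration constants correspond to the paper's parenthetical note that the constant term in $z$ vanishes.
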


\begin{proof}
	Recall that for $g\geq 1$ 
	\begin{align}
		W_{g,1} =\ & [\hbar^{2g}] \sum_{j=1}^\infty D^{j-1} [t^j]
		e^{-2t\psi+2t \frac{{\mathcal S}(t\hbar \partial_y)}{\mathcal S(\hbar \partial_y)} \overline\psi} 
		\frac{Dy}{2} 
		\\ \notag 
		& + [\hbar^{2g}]\sum_{j,r=0}^\infty D^j [t^j] \frac 1Q e^{-2t \psi}
		\partial_y^r e^{2t \frac{{\mathcal S}(t\hbar \partial_y)}{\mathcal S(\hbar \partial_y)} \overline\psi} 
		[u^r] 
		\left(\frac{e^{\frac{\hbar u}2}+e^{-\frac{\hbar u}2}}{4 u\mathcal{S}(u\hbar)} e^{-uy+u\mathcal{S}(\hbar u z\partial_{z}) 
			\overline y } \right).
	\end{align}
	Hence,  for $g\geq 1$
	\begin{align}
		& H_{g,1} = [\hbar^{2g}] \sum_{j=2}^\infty D^{j-2} [t^j]
		e^{-2t\psi+2t \frac{{\mathcal S}(t\hbar \partial_y)}{\mathcal S(\hbar \partial_y)} \overline\psi} 
		\frac{Dy}{2} 
		\\ \notag 
		& + [\hbar^{2g}]\sum_{j=1}^\infty D^{j-1} [t^j] \sum_{r=0}^\infty \frac 1Q e^{-2t \psi}
		\partial_y^r e^{2t \frac{{\mathcal S}(t\hbar \partial_y)}{\mathcal S(\hbar \partial_y)} \overline\psi} 
		[u^r] 
		\left(\frac{e^{\frac{\hbar u}2}+e^{-\frac{\hbar u}2}}{4 u\mathcal{S}(u\hbar)} e^{-uy+u\mathcal{S}(\hbar u z\partial_{z}) 
			\overline y } \right)
		\\ \notag
		& + [\hbar^{2g}] \int_0^z dz \frac Qz [t^1]
		e^{-2t\psi+2t \frac{{\mathcal S}(t\hbar \partial_y)}{\mathcal S(\hbar \partial_y)} \overline\psi} 
		\frac{Dy}{2} 
		\\ \notag
		& + [\hbar^{2g}] \int_0^z dz \frac Qz [t^0] \sum_{r=0}^\infty \frac 1Q e^{-2t \psi}
		\partial_y^r e^{2t \frac{{\mathcal S}(t\hbar \partial_y)}{\mathcal S(\hbar \partial_y)} \overline\psi} 
		[u^r] 
		\left(\frac{e^{\frac{\hbar u}2}+e^{-\frac{\hbar u}2}}{4 u\mathcal{S}(u\hbar)} e^{-uy+u\mathcal{S}(\hbar u z\partial_{z}) 
			\overline y } \right)
	\end{align}
	(note that the constant term in $z$ of this expression vanishes).
	The third term here can be computed as
	\begin{align}
		[\hbar^{2g}] \int_0^z dz \frac Qz [t^1]
		e^{-2t\psi+2t \frac{{\mathcal S}(t\hbar \partial_y)}{\mathcal S(\hbar \partial_y)} \overline\psi} 
		\frac{Dy}{2} 
		& = [\hbar^{2g}] \int_0^z dz \left( \frac{1}{S(\hbar\partial_y)} \overline \psi  - \psi\right) \frac{QDy}{z}
		\\ \notag
		& = [\hbar^{2g}] \int_0^z \left( \frac{1}{S(\hbar\partial_y)} \overline \psi - \psi\right) y'dz.
	\end{align}
	The fourth term can be computed as
	\begin{align}
		& [\hbar^{2g}] \int_0^z dz \frac Qz [t^0] \sum_{r=0}^\infty \frac 1Q e^{-2t \psi}
		\partial_y^r e^{2t \frac{{\mathcal S}(t\hbar \partial_y)}{\mathcal S(\hbar \partial_y)} \overline\psi} 
		[u^r]
		\left(\frac{e^{\frac{\hbar u}2}+e^{-\frac{\hbar u}2}}{4 u\mathcal{S}(u\hbar)} e^{-uy+u\mathcal{S}(\hbar u z\partial_{z}) 
			\overline y } \right)
		\\ \notag 
		& = [\hbar^{2g}] \int_0^z\frac {dz}z 
		[u^0] 
		\left(\frac{e^{\frac{\hbar u}2}+e^{-\frac{\hbar u}2}}{4 u\mathcal{S}(u\hbar)} e^{-uy+u\mathcal{S}(\hbar u z\partial_{z}) 
			\overline y } \right)
		\\ \notag 
		& = [\hbar^{2g}] \int_0^z\frac {dz}{2z} 
		\left(\overline y -y \right) = [\hbar^{2g}] \int_0^z\frac {dz}{2z} 
		\left(\overline y -y \right).
	\end{align}
	Combining these formulas, we obtain the statement of the proposition.
\end{proof}

In the case $n=2$ we have the following formula for $H_{g,2}$.

\begin{proposition} \label{prop:Hg2Formula} In the case $g=0$ we have
	\begin{align}
		H_{0,2} = \frac 14 \log \frac{(z_1-z_2)(X_1+X_2)}{(z_1+z_2)(X_1-X_2)}.
	\end{align}
	For $g\geq 0$ we have:
	\begin{align} \label{eq:Hg2Formula}
		H_{g,2} = \ & [\hbar^{2g}] 
		\sum_{\substack {j_1,j_2 =1 \\ r_1,r_2 = 0}}^\infty \Bigg[\prod_{i=1}^2 D_i^{j_i-1} [t_i^{j_i}]
		\frac 1{Q_i}
		e^{-2t_i \psi_i } 
		\partial_{y_i}^{r_i} e^{2t_i \frac{{\mathcal S}(t_i\hbar \partial_{y_i})}{\mathcal S(\hbar \partial_{y_i})} \overline \psi_i } [u_i^{r_i}]
		\\ \notag &
		\frac{e^{\hbar u_i/2}+e^{-\hbar u_i/2}}{4 u_i\hbar \mathcal{S}(u_i\hbar)} e^{-u_iy_i+ u_i  \mathcal{S}(\hbar u_i z_i\partial_{z_i}) \overline y_i }\Bigg]
		\left( e^{\hbar^2 u_1u_2 \mathcal{S}(\hbar u_1 z_1\partial_{z_1})\mathcal{S}(\hbar u_2 z_2\partial_{z_2})B(z_1,z_2)} - 1\right)
		\\ \notag
		& + [\hbar^{2g}] 
		\sum_{\substack {j =1 \\ r = 0}}^\infty \Bigg[ D_1^{j-1} [t^{j}]
		\frac 1{Q_1}
		e^{-2t \psi_1 } 
		\partial_{y_1}^{r} e^{2t\frac{{\mathcal S}(t\hbar \partial_{y_1})}{\mathcal S(\hbar \partial_{y_1})} \overline \psi_1 } [u^{r}]
		\\ \notag &
		\frac{e^{\hbar u/2}+e^{-\hbar u/2}}{4 u\hbar \mathcal{S}(u\hbar)} e^{-uy_1+ u \mathcal{S}(\hbar u z_1\partial_{z_1}) \overline y_1 }\Bigg]
		\frac{1}{2}\hbar u \mathcal{S}(\hbar u z_1\partial_{z_1}) \Big( \frac{z_1}{z_1-z_2} - \frac{z_1}{z_1+z_2} \Big)
		\\ \notag
		& + [\hbar^{2g}] 
		\sum_{\substack {j =1 \\ r = 0}}^\infty \Bigg[ D_2^{j-1} [t^{j}]
		\frac 1{Q_2}
		e^{-2t \psi_2 } 
		\partial_{y_2}^{r} e^{2t\frac{{\mathcal S}(t\hbar \partial_{y_2})}{\mathcal S(\hbar \partial_{y_2})} \overline \psi_2 } [u^{r}] 
		\\ \notag &
		\frac{e^{\hbar u/2}+e^{-\hbar u/2}}{4 u\hbar \mathcal{S}(u\hbar)} e^{-uy_2+ u \mathcal{S}(\hbar u z_2\partial_{z_2}) \overline y_2 }\Bigg]
		\frac{1}{2}\hbar u \mathcal{S}(\hbar u z_2\partial_{z_2}) \Big( \frac{z_2}{z_2-z_1} - \frac{z_2}{z_2+z_1} \Big).
	\end{align}
	Here we use the notation $y_i = y(z_i)$, $\overline y_i = \overline y(\hbar^2, z_i)$, $\psi_i = \psi(y_i)$, $\overline\psi_i = \overline\psi(\hbar^2,y_i)$, $Q_i = Q(z_i)$, $X_i=X(z_i)$, $D_i = X_i \partial_{X_i} = Q_i^{-1} z_i \partial_{z_i}$ for $i=1,2$.
\end{proposition}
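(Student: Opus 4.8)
The plan is to obtain $H_{g,2}$ from $W_{g,2}$ by inverting $D_1$ and $D_2$, following the scheme already used for $H_{g,1}$ in Proposition~\ref{prop:Hg1Formula}. Since $W_{g,2}=D_1D_2H_{g,2}$ and, by its definition through $H_2^\bullet$, the connected function $H_{g,2}$ vanishes as $z_1\to0$ and as $z_2\to0$, this condition fixes all integration constants: whenever $D_i^{-1}$ has to be applied genuinely rather than merely lowering an exponent $D_i^{j_i}\mapsto D_i^{j_i-1}$, it is the operator $\int_0^{z_i}\tfrac{dz_i'}{z_i'}(\cdot)$, because $D_i=Q_i^{-1}z_i\partial_{z_i}$ and the $Q_i^{-1}$ already present in the formulas cancels the Jacobian. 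For $g\ge1$ I would start from the closed formula for $W_{g,2}$ given by Theorem~\ref{thm:ClosedFormulaForWgn} with $n=2$, where $\Gamma_2$ consists of the single one-edge graph; the case $g=0$ must be treated separately, using Proposition~\ref{prop:W02}, because Theorem~\ref{thm:ClosedFormulaForWgn} requires $2g-2+n>0$.

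For $g=0$ I would verify the logarithmic expression directly. Expanding the logarithm into $\log(z_1\pm z_2)$ and $\log(X_1\pm X_2)$ and using $D_i=Q_i^{-1}z_i\partial_{z_i}$, one computes $D_1D_2\log(X_1\pm X_2)=\mp\frac{X_1X_2}{(X_1\pm X_2)^2}$ and $D_1D_2\log(z_1\pm z_2)=\mp\frac{1}{Q_1Q_2}\frac{z_1z_2}{(z_1\pm z_2)^2}$, whence $D_1D_2$ of the proposed expression equals $\frac{1}{4Q_1Q_2}B(z_1,z_2)-\frac14B(X_1,X_2)=W_{0,2}$ by Proposition~\ref{prop:W02}; and at $z_1=0$ (hence $X_1=0$) the argument of the logarithm is $\frac{(-z_2)X_2}{z_2(-X_2)}=1$, likewise at $z_2=0$, so the expression vanishes there and therefore coincides with $H_{0,2}$. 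I would also note that running the integration argument below on the $\frac{1}{4Q_1Q_2}B(z_1,z_2)$-summand of $W_{0,2}$ and treating the extra $-\frac14B(X_1,X_2)$ by hand reproduces, at $[\hbar^0]$, the general formula~\eqref{eq:Hg2Formula}, which is why it is stated for all $g\ge0$.

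For $g\ge1$ the core step is to split the double sum $\sum_{j_1,j_2\ge0}$ in the $W_{g,2}$-formula into the four ranges determined by whether each $j_i$ is $0$ or $\ge1$, and to apply $D_1^{-1}D_2^{-1}$ to each. For $j_i\ge1$, $D_i^{-1}$ turns $D_i^{j_i}$ into $D_i^{j_i-1}$. For $j_i=0$ one uses $[t_i^0]\bigl(e^{-2t_i\psi_i}\partial_{y_i}^{r_i}e^{2t_i\frac{\mathcal{S}(t_i\hbar\partial_{y_i})}{\mathcal{S}(\hbar\partial_{y_i})}\overline\psi_i}\bigr)=\partial_{y_i}^{r_i}(1)=\delta_{r_i,0}$, so the $i$-th vertex collapses and $D_i^{-1}$ becomes the genuine integral $\int_0^{z_i}\tfrac{dz_i'}{z_i'}$. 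The range $j_1,j_2\ge1$ yields the first term of~\eqref{eq:Hg2Formula} directly. In the range $j_1\ge1$, $j_2=0$, extracting $[u_2^0]$ from the $2$nd vertex factor times the single edge factor $\bigl(e^{\hbar^2u_1u_2\mathcal{S}(\hbar u_1z_1\partial_{z_1})\mathcal{S}(\hbar u_2z_2\partial_{z_2})B(z_1,z_2)}-1\bigr)$ leaves $\frac{\hbar u_1}{2Q_2}\mathcal{S}(\hbar u_1z_1\partial_{z_1})B(z_1,z_2)$, and the $z_2$-integral is elementary, $\int_0^{z_2}\frac{B(z_1,z_2')}{z_2'}\,dz_2'=\frac{z_1}{z_1-z_2}-\frac{z_1}{z_1+z_2}$; this produces the second term of~\eqref{eq:Hg2Formula}, and the symmetric range $j_1=0$, $j_2\ge1$ the third. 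In the range $j_1=j_2=0$ both integrations are genuine and give $\frac14\int_0^{z_1}\frac{dz_1'}{z_1'}\int_0^{z_2}\frac{dz_2'}{z_2'}B(z_1',z_2')$, an $\hbar$-independent quantity, which is killed by $[\hbar^{2g}]$ for $g\ge1$ and therefore does not contribute.

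The main obstacle, as in the analogous earlier proofs, is not any single integral but the bookkeeping around the unstable/special behaviour: tracking integration constants so that every genuine $D_i^{-1}$ is an integral from $0$; keeping in mind that the $W_{g,2}$-formula used for $g\ge1$ comes from the extended summation over $m\in\mathbb{Z}_{\odd}$ (so it already omits the ``$-\frac14B(X_1,X_2)$'' correction that distinguishes the genuine $W_{0,2}$, which is precisely why $g=0$ goes through Proposition~\ref{prop:W02}); and checking that the collapsed-vertex contributions assemble into exactly the factors $\frac12\hbar u\,\mathcal{S}(\hbar uz_i\partial_{z_i})\bigl(\frac{z_i}{z_i-z_j}-\frac{z_i}{z_i+z_j}\bigr)$ appearing in~\eqref{eq:Hg2Formula}. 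Once these points are settled, the computation is a routine adaptation of the proofs of Theorem~\ref{thm:ClosedFormulaForWgn} and Propositions~\ref{prop:ClosedW01},~\ref{prop:ClosedWg1}, and~\ref{prop:Hg1Formula}.
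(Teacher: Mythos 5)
Your proposal is correct and follows essentially the same route as the paper's proof: both reduce the claim to checking $D_1D_2H_{g,2}=W_{g,2}$ (using vanishing at $z_i=0$ to fix the integration constants), split the sum according to $j_i=0$ versus $j_i\geq 1$, integrate the $j_i\geq 1$ part by lowering the exponent of $D_i$, recognize the collapsed-vertex $\frac{1}{Q_i}[u_i^0]$ contribution as $D_i$ of $\frac12\hbar u_k\mathcal{S}(\hbar u_k z_k\partial_{z_k})\bigl(\frac{z_k}{z_k-z_i}-\frac{z_k}{z_k+z_i}\bigr)$, and discard the doubly-collapsed term $\frac{1}{4Q_1Q_2}B(z_1,z_2)$ as $\hbar$-independent. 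Your explicit verification of the $g=0$ logarithmic formula is slightly more detailed than the paper's ``straightforward computation,'' but it is the same check.
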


\begin{proof} Note that all formulas above are odd in both arguments, hence they vanish if any of their arguments vanishes. Hence it is enough to check that $D_1D_2H_{g,2} = W_{g,2}$. In the case $g=0$ it is a straightforward computation. For $g\geq 1$ we recall the relevant special case of Equation~\eqref{eq:MainFormulaForWgn}:
	\begin{align} 
		W_{g,2}  =\  & [\hbar^{2g}] 
		\sum_{\substack {j_1,j_2 \\ r_1,r_2 = 0}}^\infty \left[\prod_{i=1}^2 D_i^{j_i} [t_i^{j_i}]
		\frac 1{Q_i}
		e^{-2t_i \psi_i } 
		\partial_{y_i}^{r_i} e^{2t_i \frac{{\mathcal S}(t_i\hbar \partial_{y_i})}{\mathcal S(\hbar \partial_{y_i})} \overline \psi_i } [u_i^{r_i}]\right]
		\\ \notag &
		\prod_{i=1}^2 \frac{e^{\hbar u_i/2}+e^{-\hbar u_i/2}}{4 u_i\hbar \mathcal{S}(u_i\hbar)} e^{-u_iy_i+ u_i  \mathcal{S}(\hbar u_i z_i\partial_{z_i}) \overline y_i }
		\left( e^{\hbar^2 u_1u_2 \mathcal{S}(\hbar u_1 z_1\partial_{z_1})\mathcal{S}(\hbar u_2 z_2\partial_{z_2})B(z_1,z_2)} - 1\right).
	\end{align}	
	Note that
	\begin{align}
		\sum_{j,r=0}^\infty D^{j} [t^{j}]
		\frac 1{Q}
		e^{-2t\psi } 
		\partial_{y}^{r} e^{2t \frac{{\mathcal S}(t\hbar \partial_{y})}{\mathcal S(\hbar \partial_{y})} \overline \psi } [u^r]
		= 	\frac 1Q[u^0] + \sum_{r=0}^\infty\sum_{j=1}^\infty D^{j} [t^{j}]
		\frac 1{Q}
		e^{-2t\psi } 
		\partial_{y}^{r} e^{2t \frac{{\mathcal S}(t\hbar \partial_{y})}{\mathcal S(\hbar \partial_{y})} \overline \psi }[u^r]
	\end{align}
	The second summand here can be trivially integrated by applying $D^{-1}$. In order to integrate the cases of application of $\frac 1Q[u^0]$ we observe that
	for any $\gamma\in\Gamma_n$ the coefficient of $u_i^0$ in
	\begin{align}
		& \frac 1{Q_i}[u_i^0] \frac{e^{\hbar u_1/2}+e^{-\hbar u_i/2}}{4 u_i\hbar \mathcal{S}(u_i\hbar)} e^{-u_iy_i+ u_i  \mathcal{S}(\hbar u_i z_i\partial_{z_i}) \overline y_i }
		\left( e^{\hbar^2 u_iu_k \mathcal{S}(\hbar u_i z_i\partial_{z_i})\mathcal{S}(\hbar u_k z_k\partial_{z_k})B(z_i,z_k)} - 1\right)
		\\ \notag 
		& =  D_i \frac{1}{2}\hbar u_k \mathcal{S}(\hbar u_k z_k\partial_{z_k}) \Big( \frac{z_k}{z_k-z_i} - \frac{z_k}{z_k+z_i} \Big).
	\end{align}
	(here $k=2$ if $i=1$ and $k=1$ if $i=2$), which also admits application of $D_i^{-1}$. In particular, if we apply this term for both variables, we have:
	\begin{align} 
		& \prod_{i=1}^2 \left[\frac 1{Q_i} [u_i^0] \frac{e^{\hbar u_i/2}+e^{-\hbar u_i/2}}{4 u_i\hbar \mathcal{S}(u_i\hbar)} e^{-u_iy_i+ u_i  \mathcal{S}(\hbar u_i z_i\partial_{z_i}) \overline y_i }\right]
		\\ \notag &
		\left( e^{\hbar^2 u_1u_2 \mathcal{S}(\hbar u_1 z_1\partial_{z_1})\mathcal{S}(\hbar u_2 z_2\partial_{z_2})B(z_1,z_2)} - 1\right)
		\\ \notag
		& =\frac 1{4Q_1Q_2}B(z_1,z_2),
	\end{align}	 
	so this case doesn't contribute to $[\hbar^{2g}]$, $g\geq 1$. Combining these computations with the application of $D_1^{-1}D_2^{-1}$, we obtain the statement of the proposition. 
\end{proof}

Finally, in the general case of $n\geq 3$ we have the following expression for $H_{g,n}$.

\begin{proposition} \label{prop:HgnFormula} For a $\gamma\in\Gamma_n$ let $I_\gamma$ denote the subset of vertices of $\gamma$ of index $\geq 2$. Let $K_\gamma\subset E_\gamma$ be the subset of the set of edges that connect a vertex of index $1$ to another vertex. When we write $(v_i,v_k)\in K_\gamma$, we assume that $v_i$ is the vertex of index $1$ (and, therefore, $v_k\in I_\gamma$). We have:
	\begin{align} \label{eq:HgnGeneralFormula}
		H_{g,n} =\ & [\hbar^{2g-2+n}]\sum_{\gamma\in\Gamma_n}\prod_{i\in I_\gamma} \left[ \sum_{r_i=0}^\infty\sum_{j_i =1}^\infty D_i^{j_i-1} [t_i^{j_i}]
		\frac 1{Q_i}
		e^{-2t_i \psi_i } 
		\partial_{y_i}^{r_i} e^{2t_i \frac{{\mathcal S}(t_i\hbar \partial_{y_i})}{\mathcal S(\hbar \partial_{y_i})} \overline \psi_i } [u_i^{r_i}] \right]
		\\
		\notag
		& 
		\prod_{i\in I_\gamma} \frac{e^{\hbar u_i/2}+e^{-\hbar u_i/2}}{4 u_i\hbar \mathcal{S}(u_i\hbar)} e^{-u_i y_i + u_i ( \mathcal{S}(\hbar u_i z_i\partial_{z_i}) \overline y_i }
		\\ 
		\notag
		&
		\prod_{(v_k,v_\ell)\in E_\gamma\setminus K_\gamma} \left( e^{\hbar^2 u_ku_\ell \mathcal{S}(\hbar u_k z_k\partial_{z_k})\mathcal{S}(\hbar u_\ell z_\ell\partial_{z_\ell})B(z_k,z_\ell)} - 1\right)
		\\
		\notag
		& 
		\prod_{(v_i,v_k)\in  K_\gamma} \Bigg( 
		\frac{1}{2}\hbar u_k \mathcal{S}(\hbar u_k z_k\partial_{z_k}) \Big( \frac{z_k}{z_k-z_i} - \frac{z_k}{z_k+z_i} \Big) +
		\\ 
		\notag
		& 
		\left[ \sum_{r_i=0}^\infty\sum_{j_i =1}^\infty D_i^{j_i-1} [t_i^{j_i}]
		\frac 1{Q_i}
		e^{-2t_i \psi_i } 
		\partial_{y_i}^{r_i} e^{2t_i \frac{{\mathcal S}(t_i\hbar \partial_{y_i})}{\mathcal S(\hbar \partial_{y_i})} \overline \psi_i } [u_i^{r_i}]\right]
		\\
		\notag 
		& \frac{e^{\hbar u_i/2}+e^{-\hbar u_i/2}}{4 u_i\hbar \mathcal{S}(u_i\hbar)} e^{-u_i y_i + u_i ( \mathcal{S}(\hbar u_i z_i\partial_{z_i}) \overline y_i }
		\left( e^{\hbar^2 u_iu_k \mathcal{S}(\hbar u_iz_i\partial_{z_i})\mathcal{S}(\hbar u_k z_k\partial_{z_k})B(z_i,z_k)} - 1\right) \Bigg) . 
	\end{align}
	Here, as usual, we use $y_i = y(z_i)$, $\overline y_i = \overline y(\hbar^2, z_i)$, $\psi_i = \psi(y_i)$, $\overline\psi_i = \overline\psi(\hbar^2,y_i)$, $Q_i = Q(z_i)$, $X_i=X(z_i)$, $D_i = X_i \partial_{X_i} = Q_i^{-1} z_i \partial_{z_i}$.  
\end{proposition}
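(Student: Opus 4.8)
The plan is to recover $H_{g,n}$ from $W_{g,n}$ by ``integrating'' $n$ times. Recall that $W_n^\bullet=D_1\cdots D_nH_n^\bullet$ (with $D_i=X_i\partial_{X_i}$, $D_i(X_i^{k_i}/k_i)=X_i^{k_i}$); since inclusion--exclusion commutes with the $D_i$ and with extracting coefficients of $\hbar$, this gives $W_{g,n}=D_1\cdots D_nH_{g,n}$. As $H_{g,n}$ expands in strictly positive powers of the $X_i$, it is odd in each $z_i$ and vanishes when any $z_i\to0$; the same parity count as in the remark following Theorem~\ref{thm:ClosedFormulaForWgn}, using that $\tfrac{z_k}{z_k-z_i}-\tfrac{z_k}{z_k+z_i}=\tfrac{2z_iz_k}{z_k^2-z_i^2}$ is odd in $z_i$ and in $z_k$, shows the right-hand side of~\eqref{eq:HgnGeneralFormula} is odd in every $z_i$ as well. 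Hence it suffices to check that applying $D_1\cdots D_n$ to the right-hand side of~\eqref{eq:HgnGeneralFormula} reproduces the closed formula~\eqref{eq:MainFormulaForWgn} for $W_{g,n}$; the normalisation (vanishing on the coordinate hyperplanes) then pins down the antiderivative uniquely.

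I would carry this out graph by graph, starting from~\eqref{eq:MainFormulaForWgn}. Fix $\gamma\in\Gamma_n$. In $W_{g,n}$ each vertex $v_i$ contributes the operator $\sum_{j_i,r_i\geq0}D_i^{j_i}[t_i^{j_i}]\frac1{Q_i}e^{-2t_i\psi_i}\partial_{y_i}^{r_i}e^{2t_i\frac{\mathcal S(t_i\hbar\partial_{y_i})}{\mathcal S(\hbar\partial_{y_i})}\overline\psi_i}[u_i^{r_i}]$ applied to $\frac{e^{\hbar u_i/2}+e^{-\hbar u_i/2}}{4u_i\hbar\mathcal S(u_i\hbar)}e^{u_i(\mathcal S(\hbar u_iz_i\partial_{z_i})\overline y_i-y_i)}$, and each edge contributes an ``$e^{\hbar^2u_ku_\ell\mathcal S\,\mathcal S\,B}-1$'' factor. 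For the terms with $j_i\geq1$ one shifts $D_i^{j_i}=D_i\cdot D_i^{j_i-1}$, so $D_i^{-1}$ just produces the $\sum_{j_i\geq1}D_i^{j_i-1}[t_i^{j_i}]\cdots$ blocks of~\eqref{eq:HgnGeneralFormula}. The only subtlety is the $j_i=0$ term: there $[t_i^0]$ of $e^{-2t_i\psi_i}\partial_{y_i}^{r_i}e^{2t_i\frac{\mathcal S(t_i\hbar\partial_{y_i})}{\mathcal S(\hbar\partial_{y_i})}\overline\psi_i}$ equals $\partial_{y_i}^{r_i}(1)=\delta_{r_i,0}$, so only $r_i=0$ survives and the $j_i=0$ contribution of $v_i$ is $\frac1{Q_i}$ times the coefficient of $u_i^0$ in $\frac{e^{\hbar u_i/2}+e^{-\hbar u_i/2}}{4u_i\hbar\mathcal S(u_i\hbar)}e^{u_i(\mathcal S(\hbar u_iz_i\partial_{z_i})\overline y_i-y_i)}$ times the product of the edge factors incident to $v_i$.

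Now the case distinction. If $v_i$ has degree $\geq2$, i.e.\ $v_i\in I_\gamma$, this $j_i=0$ contribution vanishes: $\frac{e^{\hbar u_i/2}+e^{-\hbar u_i/2}}{4u_i\hbar\mathcal S(u_i\hbar)}=\tfrac1{2\hbar u_i}(1+O(u_i^2))$ has a simple pole in $u_i$ while each incident edge factor is $O(u_i)$, so the product is $O(u_i^{\deg v_i-1})$ and contributes nothing to $[u_i^0]$; hence the sum over $j_i$ for $v_i\in I_\gamma$ effectively starts at $1$, and $D_i^{-1}$ yields exactly the corresponding block of~\eqref{eq:HgnGeneralFormula}. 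If $v_i$ is a leaf, joined to the unique vertex $v_k$ by the edge $(v_i,v_k)\in K_\gamma$ --- and for connected $\gamma$ with $n\geq3$ the endpoint $v_k$ is necessarily in $I_\gamma$ --- then the $j_i=0$ contribution is precisely the object computed in the proof of Proposition~\ref{prop:Hg2Formula} (a computation entirely \emph{local} to the single edge at $v_i$): it equals $D_i\big(\tfrac12\hbar u_k\mathcal S(\hbar u_kz_k\partial_{z_k})\big(\tfrac{z_k}{z_k-z_i}-\tfrac{z_k}{z_k+z_i}\big)\big)$. Thus $D_i^{-1}$ turns the $j_i=0$ piece into the first summand of the $K_\gamma$-factor of~\eqref{eq:HgnGeneralFormula}, while the $j_i\geq1$ pieces give its second summand, the leftover $u_k$-dependence in both being absorbed later by the $[u_k^{r_k}]$ attached to $v_k$. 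Since each term produced is manifestly of the form $D_i(\cdots)$, the $D_i^{-1}$ act termwise without ambiguity and, being operators in different variables, commute; collecting them over all vertices recovers~\eqref{eq:HgnGeneralFormula}.

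The main obstacle --- the point where care is needed --- is the treatment of the $j_i=0$ terms: one must be certain that for $v_i\in I_\gamma$ the vanishing of the $u_i^0$-coefficient survives the presence of the difference operators $\mathcal S(\hbar u_iz_i\partial_{z_i})$, $\mathcal S(\hbar u_kz_k\partial_{z_k})$ sitting inside the edge factors (it does, since those operators do not change the $u_i$-degree), and that the leaf identity borrowed from Proposition~\ref{prop:Hg2Formula} genuinely involves only the one edge at the leaf, so that it applies verbatim however large $\gamma$ is. Once these two local facts are in place, the remaining manipulations --- extending the summation to $m_i\in\mathbb Z$, capturing the polynomial dependence on $m_i$ via $D_i^{j_i}[t_i^{j_i}]$, and the Lagrange--Bürmann change of variables --- are identical to those already performed in the proofs of Theorem~\ref{thm:ClosedFormulaForWgn} and Propositions~\ref{prop:Hg1Formula} and~\ref{prop:Hg2Formula}, and no ``transcendental'' antiderivative terms of the kind appearing for $n=1$ in Proposition~\ref{prop:Hg1Formula} are needed, precisely because every $j_i=0$ contribution is either zero or manifestly $D_i$-exact.
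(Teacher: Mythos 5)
Your proposal is correct and follows essentially the same route as the paper's own proof: reduce to checking $D_1\cdots D_nH_{g,n}=W_{g,n}$ plus vanishing at $z_i=0$, split each vertex operator into the $j_i\geq 1$ part (integrated by peeling off one $D_i$) and the $j_i=0$ part (which forces $r_i=0$ and reduces to $\tfrac1{Q_i}[u_i^0]$), observe that the latter vanishes for vertices of degree $\geq 2$ and is $D_i$-exact for leaves via the identity $\tfrac1{Q_i}B(z_i,z_k)=D_i\bigl(\tfrac{z_k}{z_k-z_i}-\tfrac{z_k}{z_k+z_i}\bigr)$. Your added justification of why $[u_i^0]$ vanishes at vertices of index $\geq 2$ (pole order versus the $O(u_i)$ edge factors) is a point the paper asserts without elaboration, but the argument is the same.
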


\begin{proof} Note that $H_{g,n}$ as given in Equation~\eqref{eq:HgnGeneralFormula} vanishes if we set any of its variables to zero (since it is odd in each of its variables). So, the only thing that we have to check is that indeed $D_1\cdots D_n H_{g,n} = W_{g,n}$ as given by Equation~\eqref{eq:MainFormulaForWgn}. Recall Equation~\eqref{eq:MainFormulaForWgn}:
	\begin{align} 
		W_{g,n}  = & [\hbar^{2g-2+n}] 
		\sum_{\substack {j_1,\dots,j_n, \\ r_1,\dots,r_n = 0}}^\infty \left[\prod_{i=1}^n D_i^{j_i} [t_i^{j_i}]
		\frac 1{Q_i}
		e^{-2t_i \psi_i } 
		\partial_{y_i}^{r_i} e^{2t_i \frac{{\mathcal S}(t_i\hbar \partial_{y_i})}{\mathcal S(\hbar \partial_{y_i})} \overline \psi_i } [u_i^{r_i}] \right]
		\\ \notag &
		\prod_{i=1}^n \frac{e^{\hbar u_i/2}+e^{-\hbar u_i/2}}{4 u_i\hbar \mathcal{S}(u_i\hbar)} e^{-u_iy_i+ u_i  \mathcal{S}(\hbar u_i z_i\partial_{z_i}) \overline y_i }
		\\ \notag &
		\sum_{\gamma\in\Gamma_n} 
		\prod_{(v_k,v_\ell)\in E_\gamma} \left( e^{\hbar^2 u_ku_\ell \mathcal{S}(\hbar u_k z_k\partial_{z_k})\mathcal{S}(\hbar u_\ell z_\ell\partial_{z_\ell})B(z_k,z_\ell)} - 1\right).
	\end{align}	
	Note that
	\begin{align}
		\sum_{j,r=0}^\infty D^{j} [t^{j}]
		\frac 1{Q}
		e^{-2t\psi } 
		\partial_{y}^{r} e^{2t \frac{{\mathcal S}(t\hbar \partial_{y})}{\mathcal S(\hbar \partial_{y})} \overline \psi } [u^r]
		= 	\frac 1Q[u^0] + \sum_{r=0}^\infty\sum_{j=1}^\infty D^{j} [t^{j}]
		\frac 1{Q}
		e^{-2t\psi } 
		\partial_{y}^{r} e^{2t \frac{{\mathcal S}(t\hbar \partial_{y})}{\mathcal S(\hbar \partial_{y})} \overline \psi }[u^r].
	\end{align}
	The second summand here can be trivially integrated by applying $D^{-1}$. In order to integrate the cases of application of $\frac 1Q[u^0]$ we observe that
	for any $\gamma\in\Gamma_n$ the coefficient of $u_i^0$ in 
	\begin{align}
		& \frac{e^{\hbar u_i/2}+e^{-\hbar u_i/2}}{4 u_i\hbar \mathcal{S}(u_i\hbar)} e^{-u_iy_i+ u_i  \mathcal{S}(\hbar u_i z_i\partial_{z_i}) \overline y_i }
		\prod_{(v_i,v_k)\in E_\gamma} \left( e^{\hbar^2 u_iu_k \mathcal{S}(\hbar u_i z_i\partial_{z_i})\mathcal{S}(\hbar u_k z_k\partial_{z_k})B(z_i,z_k)} - 1\right)
	\end{align}
	is non-trivial if and only if $i$ has index $1$ in $\gamma$. Then there is only one edge $(e_i,e_k)\in E_\gamma$ that is attached to the vertex $i$. In this case, 
	\begin{align}
		& \frac 1{Q_i} [u_i^0] \frac{e^{\hbar u_i/2}+e^{-\hbar u_i/2}}{4 u_i\hbar \mathcal{S}(u_i\hbar)} e^{-u_iy_i+ u_i  \mathcal{S}(\hbar u_i z_i\partial_{z_i}) \overline y_i }
		\left( e^{\hbar^2 u_iu_k \mathcal{S}(\hbar u_i z_i\partial_{z_i})\mathcal{S}(\hbar u_k z_k\partial_{z_k})B(z_i,z_k)} - 1\right)
		\\ \notag
		& = \frac 1{Q_i} \frac 12 \hbar u_k\mathcal{S}(\hbar u_k z_k\partial_{z_k})B(z_i,z_k)
		= D_i \frac{1}{2}\hbar u_k \mathcal{S}(\hbar u_k z_k\partial_{z_k}) \Big( \frac{z_k}{z_k-z_i} - \frac{z_k}{z_k+z_i} \Big),
	\end{align}
	and we can apply $D_i^{-1}$ to the latter expression. This explains the special summands for $(v_i,v_k)\in K_\gamma$ in Equation~\eqref{eq:HgnGeneralFormula} and completes the proof of the proposition.
\end{proof}


\section{Topological recursion for spin Hurwitz number with completed cycles}\label{S7}

The goal of this Section is to prove a conjecture proposed by Giacchetto, Kramer, and Lewa\'nski. In our terms, it concerns the symmetric $n$-differentials constructed from Orlov's hypergeometric 2-BKP tau-functions for $\overline \psi=\frac 12 {\mathcal S}(\hbar \partial_y) y^{2s}$ and $\overline y = z$. But in fact we consider a more general situation, with $\overline \psi=\frac 12 {\mathcal S}(\hbar \partial_y) P(y)$ and $\overline y =y =R(z)$, where $P$ is an arbitrary even polynomial in $y$ and $R$ is an arbitrary odd polynomial in $z$, since the arguments in this more general situation do not differ from the ones for the Giacchetto--Kramer--Lewa\'nski situation. 

\begin{remark}
Note that if we put  $\overline{\psi}(y)=\frac{1}{2} {\mathcal S}(\hbar \p_y) P(y)$, then the weight for the KP hypergeometric tau-function (\ref{2cKP})  does not coincide with the deformation, considered in \cite{BDKS2}. Therefore, if in the relation (\ref{root}) one of the tau-functions, $\tau_{KP}$ or $\tau$, is described by a suitable version of topological recursion, the other one is not described by it.
\end{remark}

\subsection{Topological recursion in the odd situation} \label{sec:TopoRec}

Consider $\mathbb{C}\mathrm{P}^1$ with a fixed global coordinate $z$, and with two functions, $X$ and $y$ such that $X(-z)=-X(z)$ and $y(-z)=-y(-z)$, with an extra assumption that $dX/X$ is a rational differential with the simple critical points $p_1,\dots,p_N$ (it is clear that $N$ must be even and the set of critical points is invariant under $z\leftrightarrow -z$) and $y$ is holomorphic near the critical points with $dy|_{p_i}\not=0$. It is not necessary but both convenient and sufficient for our goals to assume that $y$ is meromorphic. Let 
\begin{align}
	\mathcal{B}(z_1,z_2)\coloneqq \frac 12 \left(\frac{1}{(z_1-z_2)^2} + \frac{1}{(z_1+z_2)^2}\right)dz_1dz_2.
\end{align}

With this input we construct a system of symmetric differentials $\omega_{g,n}$, $g\geq 0$, $n\geq 1$, given by 
\begin{align}
	& \omega_{0,1}(z_1) = y(z_1) d\log X(z_1) ; 
	\\ \notag
	& \omega_{0,2}(z_1,z_2) = \mathcal{B}(z_1,z_2);
\end{align}
and for $2g-2+n>0$ we use the recursion
\begin{align}\label{eq:TopologicalRecursion}
	\omega_{g,n} (z_1,\dots,z_n) \coloneqq \ & \frac 12 \sum_{i=1}^N \res_{z\to p_i} 
	\frac{\int_z^{\sigma_i(z)} \mathcal{B}(z_1,\cdot)} {\omega_{0,1}(\sigma_i(z_1)) -  \omega_{0,1}(z_1) }\Bigg(
	\omega_{g-1,n+1}(z,\sigma_i(z),z_{\llbracket n \rrbracket \setminus \{1\}})
	\\
	\notag
	& + \sum_{\substack{g_1+g_2 = g, I_1\sqcup I_2 = {\llbracket n \rrbracket \setminus \{1\}} \\
			(g_1,|I_1|),(g_2,|I_2|) \not= (0,0) }} \omega_{g_1,1+|I_1|}(z,z_{I_1})\omega_{g_2,1+|I_2|}(\sigma_i(z), z_{I_2})\Bigg),
\end{align}
where  $\sigma_i$ is the deck transformation of $X$ near $p_i$, $i=1,\dots, N$. We wouldn't go into the discussion of this peculiar version of this topological recursion, as it should be done in a more general equivariant setup. 

For our goals it is sufficient to state the following equivalent reformulation of this version of topological recursion, which is completely parallel to~\cite[Theorem 2.2]{BorotShadrin} and~\cite[Section 1]{BorotEynardOrantin}.

\begin{lemma} \label{lem:EquiavelentTR} A system of meromorphic symmetric differentials $\omega_{g,n}$,
	 $2g-2+n>0$ is obtained from the given starting data (that includes the formulas for $\omega_{0,1}$ 
	 and $\omega_{0,2}$)
	 by topological recursion~\eqref{eq:TopologicalRecursion} if and only if 
	\begin{enumerate}
		\item This system of differentials satisfies the blobbed topological recursion (see Definition~\ref{def:Blobbed}).
		\item For any $g\geq 0,n\geq 1,2g-2+n>0$ 
		\begin{align}
			\omega_{g,n}(z_{\llbracket n \rrbracket}) = \sum_{i_1,\dots,i_n=1}^N \Bigg(\prod_{j=1}^n \res_{w_j\to p_{i_j}} \int_{p_{i_j}}^{w_i} \mathcal{B}(\cdot,z_j)\Bigg)\omega_{g,n}(w_{\llbracket n \rrbracket})
		\end{align}
		(this is the so-called projection property).
	\end{enumerate}
\end{lemma}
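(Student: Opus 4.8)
The plan is to follow, essentially verbatim, the standard proof of the equivalence between topological recursion and the combination of the loop equations with the projection property, \cite[Theorem~2.2]{BorotShadrin} (see also \cite[Section~1]{BorotEynardOrantin}), recording only the modifications forced by the $\mathbb{Z}_2$-equivariant/odd setup. Throughout, for a simple critical point $p_i$ of $d\log X$ I write $\sigma_i$ for the local deck transformation of $X$, $\mathsf{S}_z f(z)\coloneqq f(z)+f(\sigma_i(z))$ (as in the proof of Theorem~\ref{thm:Blobbed}), $K_i(z_1,z)$ for the recursion kernel in~\eqref{eq:TopologicalRecursion}, and $\hat P_{z}f\coloneqq\sum_{i=1}^N\res_{w\to p_i}\big(\int_{p_i}^w\mathcal{B}(\cdot,z)\big)f(w)$ for the associated polar projector, so that property~(2) asserts that $\omega_{g,n}$ is fixed by $\prod_{j=1}^n\hat P_{z_j}$. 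The basic observation underlying the whole adaptation is that, since $X$ and $y$ are odd and the $p_i$ are simple, the points $p_i$ and $-p_i$ are distinct, so near any fixed $p_i$ the antidiagonal term $\tfrac12(z_1+z_2)^{-2}dz_1dz_2$ of $\mathcal{B}$ is holomorphic; hence near $p_i$ the bidifferential $\mathcal{B}$ has exactly the same local behaviour as its KP counterpart up to a holomorphic correction, and every \emph{local} computation near $p_i$ used in \cite{BorotEynardOrantin,BorotShadrin} goes through unchanged. Globally, $\mathcal{B}$ is still a symmetric bidifferential on $\mathbb{C}\mathrm{P}^1$, regular off the diagonal and the antidiagonal and with vanishing residues, so its reproducing property on $\mathbb{C}\mathrm{P}^1$ also holds.

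For the direction ``$\Longrightarrow$'' I would induct on $2g-2+n$. If $\omega_{g,n}$ is produced by~\eqref{eq:TopologicalRecursion}: (i) the integrand $K_i(z_1,z)\cdot(\text{lower }\omega\text{'s})$ is, as a $1$-form in $z_1$, a combination of poles at the $p_j$ only and has vanishing residue there, while by the induction hypothesis the other arguments $z_2,\dots,z_n$ have poles only at the $p_j$; the reproducing property of $\mathcal{B}$ then yields $\hat P_{z_1}\omega_{g,n}=\omega_{g,n}$, and the symmetry of $\omega_{g,n}$ together with the induction hypothesis upgrade this to the full projection property~(2); (ii) applying $\mathsf{S}_z$ to the integrand and using that $\int_z^{\sigma_i(z)}\mathcal{B}(z_1,\cdot)$ is invariant and $\omega_{0,1}(\sigma_i(z))-\omega_{0,1}(z)$ anti-invariant under $\sigma_i$, one obtains the linear and quadratic loop equations at $p_i$ exactly as in \cite{BorotEynardOrantin}; the only analytic input, $y(p_i)\neq 0$, is the one already verified at the end of the proof of Theorem~\ref{thm:Blobbed}. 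This gives property~(1).

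For the direction ``$\Longleftarrow$'' I would use that, by property~(1) and the analysis of abstract loop equations — the same one invoked in the proof of Theorem~\ref{thm:Blobbed}, namely \cite[Lemma~20]{borot2017special} with the standard Bergman kernel replaced by $\mathcal{B}$ — the polar part of $\omega_{g,n}(z_1,z_{\llbracket n \rrbracket\setminus 1})$ at each $p_i$ in the variable $z_1$ is exactly $\tfrac12\res_{z\to p_i}K_i(z_1,z)\cdot(\text{lower }\omega\text{'s})$. Property~(2) in the variable $z_1$ states that $\omega_{g,n}(z_1,\cdot)=\hat P_{z_1}\omega_{g,n}$, i.e.\ that $\omega_{g,n}$ is reconstructed from precisely these polar parts; plugging in the above description yields that $\{\omega_{g,n}\}$ satisfies the recursion~\eqref{eq:TopologicalRecursion}, and since the latter determines the system uniquely from $\omega_{0,1}$ and $\omega_{0,2}$, the system $\{\omega_{g,n}\}$ coincides with the one produced by topological recursion.

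The step I expect to be the main obstacle is the ``$\Longleftarrow$'' direction, and specifically the verification that the adaptation of \cite[Lemma~20]{borot2017special} to the symmetrized bidifferential $\mathcal{B}$ — hence to a setting in which the correlators are odd and carry matched poles at $p_i$ and $-p_i$ — still identifies the $z_1$-polar part of $\omega_{g,n}$ with the kernel expression \emph{with the correct normalization}, so that the factor $\tfrac12$ in front of the sum of residues in~\eqref{eq:TopologicalRecursion} and the $2^{1-g}$ rescalings built into the definition~\eqref{eq:omega-definition} of the $\omega_{g,n}$'s all match. As noted in the proof of Theorem~\ref{thm:Blobbed}, both \eqref{eq:QLEPreliminary} and the quadratic loop equations are homogeneous in these rescalings, and $\omega_{0,1}(\sigma_i(z))-\omega_{0,1}(z)$ has a genuine simple zero at $p_i$ because $y(p_i)\neq 0$; granting this bookkeeping, the remaining argument parallels \cite{BorotShadrin} step by step.
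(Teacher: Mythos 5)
Your proposal is correct and takes essentially the same route as the paper: the paper's proof of this lemma is literally the single sentence ``The same argument as in~\cite[Section 2.4]{BorotShadrin}'', and your write-up is a faithful expansion of exactly that adaptation, correctly isolating the only new points (the antidiagonal part of $\mathcal{B}$ is holomorphic near each $p_i$ since $p_i\neq -p_i$, the oddness of $X$ and $y$, and the bookkeeping of the $\tfrac12$ and $2^{1-g}$ normalizations).
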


\begin{proof} The same argument as in~\cite[Section 2.4]{BorotShadrin}. 
\end{proof}

If we represent the symmetric differential $\omega_{g,n}$ as $\omega_{g,n} = 2^{1-g}W_{g,n} \prod_{i=1}^n d\log X_i$, $\omega_{0,2}= 2W_{0,2}d\log X_1d\log X_2 + \mathcal{B}(X_1,X_2) $, where $W_{g,n} = D_1\cdots D_n H_{g,n}$, $D_i = X_i\partial_{X_i}$, then the linear loop equations in combination with the projection property can be equivalently reformulated in terms of $H_{g,n}$. This reformulation can be directly applied in the odd case that we consider here, and we recall it and prove for the particular $n$-point functions of spin Hurwitz numbers with completed cycles in the next section, Section~\ref{sec:QuasiPol}. 

\subsection{Quasi-polynomiality}\label{sec:QuasiPol} The goal of this section is to prove some special property of the functions $H_{g,n}$, and, as a corollary, $W_{g,n}$'s that is sometimes called quasi-polynomiality in the literature and in the context of topological recursion is equivalent to a combination of the so-called projection property and the liner loop equations. We refer to~\cite[Section 3]{BDKS2} for a full discussion.

Recall that with $\overline \psi=\frac 12 {\mathcal S}(\hbar \partial_y) P(y)$, $P(-y)=P(y)$ is a polynomial, and $\overline y = y = R(z)$, $R(-z)=-R(z)$ is a polynomial we have $X=z\exp(-P(R(z)))$. Let $p_1,\dots,p_N\in\mathbb{C}\mathrm{P}^1$ be the critical points of $X$. Here $N=\deg P\cdot \deg R\in 2\mathbb{Z}$, we assume that all critical points are simple, and the set of critical points is obviously invariant under the involution $z\leftrightarrow -z$.

Define the space $\Theta_n$ as the linear span of functions $\prod_{i=1}^N f_i(z_i)$, where each $f_i(z_i)$ is a rational function on $\mathbb{C}\mathrm{P}^1$, $f_i(-z_i)=-f_i(z_i)$, $f_i$ has poles only at the points $p_1,\dots,p_N$, and the principal part of $f_i$ at $p_k$, $k=1,\dots,N$, is odd with respect to the corresponding deck transformation $\sigma_k$ of function $X$ near $p_k$. The last condition can be reformulated as a requirement that for any $k=1,\dots,N$ the locally defined function $f_i(z_i)+f_i(\sigma_k z_i)$ is holomorphic at $z_i\to p_k$. 

\begin{proposition}\label{prop:QuasiPol} In the case  $\overline \psi=\frac 12 {\mathcal S}(\hbar \partial_y) P(y)$, $P(-y)=P(y)$ is a polynomial, and $\overline y = y = R(z)$, $R(-z)=-R(z)$ is a polynomial, the functions $H_{g,n}$ belong to the space $\Theta_n$, for any $n\geq 1$, $g\geq 0$ such that $2g-2+n>0$. 
\end{proposition}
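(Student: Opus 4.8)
The plan is to verify, directly from the explicit closed formulas for $H_{g,n}$ in Propositions~\ref{prop:Hg1Formula}, \ref{prop:Hg2Formula}, and~\ref{prop:HgnFormula}, the three properties that together characterise membership in $\Theta_n$ for each stable $(g,n)$: that $H_{g,n}$ is \emph{rational and odd} in each $z_i$; that, as a function of each $z_i$, it has \emph{poles only at the critical points} $p_1,\dots,p_N$ of $X$; and that $H_{g,n}(z_{\llbracket n \rrbracket}) + H_{g,n}(\sigma_k z_1, z_{\llbracket n \rrbracket\setminus 1})$ is \emph{holomorphic at $z_1\to p_k$} for every $k$ (and likewise in every variable). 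Indeed, a jointly rational function whose every one-variable slice has these three properties lies in $\Theta_n$: partial-fraction in $z_1$, use that there is no pole at $\infty$ together with oddness to kill the polynomial part, then iterate over the remaining variables.

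I would start with the specialisation. Since $\overline\psi = \tfrac12\mathcal{S}(\hbar\partial_y)P(y)$ one has $\frac{1}{\mathcal{S}(\hbar\partial_y)}\overline\psi = \psi$ and $\frac{\mathcal{S}(t\hbar\partial_y)}{\mathcal{S}(\hbar\partial_y)}\overline\psi = \tfrac12\mathcal{S}(t\hbar\partial_y)P(y)$, a polynomial in $y$; and $\overline y = y = R(z)$ gives $\overline y_i - y_i = 0$ and makes $\mathcal{S}(\hbar u_i z_i\partial_{z_i})\overline y_i$ a polynomial in $z_i$. Hence the two integral correction terms in~\eqref{eq:Hg1formula} (and their analogues in the $H_{g,2}$ and $H_{g,n}$ formulas) vanish, and for every fixed stable $(g,n)$ the right-hand sides of~\eqref{eq:Hg1formula}, \eqref{eq:Hg2Formula}, and~\eqref{eq:HgnGeneralFormula} become \emph{finite} expressions built from the rational functions $1/Q_i$, $B(z_i,z_j)$, and polynomials in the $z_i$ by finitely many applications of $z_i\partial_{z_i}$, of $D_i = Q_i^{-1}z_i\partial_{z_i}$ and its antiderivative $D_i^{-1} = \int_0^{z_i}Q_i\,dz_i/z_i$ (normalised to vanish at $z_i = 0$), and of multiplication. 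Rationality in each $z_i$ is then manifest, oddness in each $z_i$ was already recorded in the remarks following Theorem~\ref{thm:ClosedFormulaForWgn}, and, being odd and regular at $z_i = 0$, $H_{g,n}$ vanishes at $z_i = 0$.

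For the remaining two properties I would exploit the relation $H_{g,n} = D_1^{-1}\cdots D_n^{-1}W_{g,n}$ and transfer the corresponding statements for $W_{g,n}$ from Sections~\ref{S4}--\ref{S5}. First, $W_{g,n}$ is rational with poles only at $p_1,\dots,p_N$: no poles on the diagonals $z_i = \pm z_j$ by Remark~\ref{rem:NoPolesDiag}, and no poles at $z_i = 0$ or $z_i = \infty$ by inspection of~\eqref{eq:MainFormulaForWgn}, where one uses $\deg P \geq 2$ so that the order-$N$ zero of $1/Q_i$ at $z_i = \infty$ dominates the polynomial growth produced by $\mathcal{S}(\hbar u_i z_i\partial_{z_i})R(z_i)$ and by the $K_\gamma$-type terms. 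Antidifferentiating in $z_i$, with the value $0$ prescribed at $z_i = 0$, an expression holomorphic at $z_i = 0, \infty, \pm z_j$ keeps it holomorphic there, and the antiderivatives $D_j^{-1}$ for $j\neq i$ act in other variables and cannot create $z_i$-poles; hence $H_{g,n}$ has poles only at $p_1,\dots,p_N$. Second, the linear loop equations of Theorem~\ref{thm:Blobbed} — equivalently, Corollary~\ref{cor:CurlyWLLoop} applied to $[v^1]\mathcal{W}_{g,n} = 2W_{g,n}$ — say that $W_{g,n}(z_{\llbracket n \rrbracket}) + W_{g,n}(\sigma_k z_1, z_{\llbracket n \rrbracket\setminus 1})$ is holomorphic at $z_1\to p_k$. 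This property survives each $D_j^{-1}$ with $j\neq 1$ trivially (antidifferentiation in another variable of a function holomorphic in $z_1$ near $p_k$ stays holomorphic in $z_1$ near $p_k$), and it survives $D_1^{-1}$ as well: if $f = D_1^{-1}g$, so $\partial_{z_1}f = (Q_1/z_1)\,g$, then using the identity $\frac{Q_1(\sigma_k z_1)\,\sigma_k'(z_1)}{\sigma_k z_1} = \frac{Q_1(z_1)}{z_1}$ (the $\sigma_k$-invariance of $d\log X$) one finds $\partial_{z_1}\!\big(f(z_1)+f(\sigma_k z_1)\big) = \frac{Q_1(z_1)}{z_1}\big(g(z_1)+g(\sigma_k z_1)\big)$, which is holomorphic at $p_k$ since $Q_1/z_1 = d\log X/dz_1$ has a simple zero there; hence $f(z_1)+f(\sigma_k z_1)$ is holomorphic at $p_k$, because a meromorphic function whose derivative is holomorphic has no pole. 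Iterating over all variables yields the third property, and with it $H_{g,n}\in\Theta_n$.

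The step I expect to be the main obstacle is the control of spurious poles, i.e.\ showing that neither poles on the diagonals and antidiagonals, nor poles at $z_i = 0$ or $z_i = \infty$, survive in $H_{g,n}$: for $W_{g,n}$ this is Remark~\ref{rem:NoPolesDiag} together with the degree estimate at infinity (using $\deg P\geq 2$), and the extra point is that passing to $H_{g,n}$ through the antiderivatives $D_i^{-1}$ — which is where the odd structure, the normalisation at $z_i=0$, and the explicit form $X = z e^{-P(R(z))}$ all enter — reintroduces neither diagonal poles nor a pole at infinity. This is the $2$-BKP counterpart of the quasi-polynomiality argument of~\cite[Section 3]{BDKS2}, and I expect the reasoning there to transfer \emph{mutatis mutandis}, the only genuinely new input being the bookkeeping of the extra antidiagonal term in $B$ and the $\pm$-symmetry of the set of critical points.
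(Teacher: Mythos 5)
Your overall strategy --- rationality, oddness, localisation of poles at $p_1,\dots,p_N$, and $\sigma_k$-odd principal parts, checked variable by variable from the closed formulas --- is the same as the paper's, and two of your ingredients are sound: the observation that the two integral correction terms in~\eqref{eq:Hg1formula} vanish because $\tfrac{1}{\mathcal S(\hbar\partial_y)}\overline\psi=\psi$ and $\overline y=y$, and the transfer of the linear loop equations through $D_1^{-1}$ via the identity $\frac{Q(\sigma_k z)\sigma_k'(z)}{\sigma_k z}=\frac{Q(z)}{z}$ (the paper argues the odd principal parts differently, by noting that the pole at $p_k$ arises from iterated $D_1$ applied to functions with at most simple poles, but your version is correct and arguably cleaner).

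The genuine gap is at $z_i\to\infty$. You claim that $W_{g,n}$ has no pole at $\infty$ and that ``antidifferentiating \dots an expression holomorphic at $z_i=0,\infty,\pm z_j$ keeps it holomorphic there.'' The second claim is false for $D_i^{-1}$: one has $D_i^{-1}W=\int_0^{z_i}W\,\frac{Q\,d\zeta}{\zeta}=\int W\,d\log X$, and since $\log X=\log z-P(R(z))$ grows like $z^{\deg P\deg R}$ at infinity, a $W$ that is merely bounded at $\infty$ produces an $H$ with a pole of order up to $\deg Q=\deg P\deg R$ there. What you actually need is that $W_{g,n}$ \emph{vanishes} at $\infty$ to order at least $\deg P\deg R+1$ in each variable, and your ``inspection'' of~\eqref{eq:MainFormulaForWgn} only asserts the absence of a pole, which is strictly weaker. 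This is exactly the point where the paper invests its only nontrivial estimate: it works directly with the $H_{g,n}$ formulas and performs a careful degree count, recording that each $D$ (and each bare factor $Q^{-1}$) lowers the order at $\infty$ by $\deg P\deg R$ and each application of $\partial_y+2tP'(y)$ lowers it by $\deg R$, which is encoded by the substitutions $t=z^{-\deg P\deg R}$ and $u=z^{-\deg R}$; the residual expression is then manifestly regular at $\infty$ because $(\mathcal S(t\hbar\partial_y)-1)P$ and $(\mathcal S(\hbar uz\partial_z)-1)R$ start at order $t^2$, resp.\ $u^2$. To repair your route you would have to carry out the analogous count for $W_{g,n}$ and extract the precise order of vanishing at $\infty$ (not just regularity) before applying $D_i^{-1}$; as written, the step ``$W_{g,n}$ regular at $\infty$ $\Rightarrow$ $H_{g,n}$ regular at $\infty$'' does not hold.
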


\begin{proof} In the proof we analyze the formulas obtained in Propositions~\ref{prop:Hg1Formula},~\ref{prop:Hg2Formula}, and~\ref{prop:HgnFormula}. It is clear from the structure of the formulas~\eqref{eq:Hg1formula},~\eqref{eq:Hg2Formula}, and \eqref{eq:HgnGeneralFormula} that with our assumptions $H_{g,n}$ are rational functions in $z_1,\dots,z_n$. 
	
	Consider $H_{g,n}$ as a function of $z_1$, treating the rest of the variables as parameters. From the shape of the formula we see that it might have poles at $z_1 \to \pm z_i$, $i=2,\dots,n$, $z_1\to \infty$, and at the zeros of $Q$. In this case $Q = z\partial_z\log X = 1+ z\partial_z P(R(z))$, and its zeros are exactly $p_1,\dots, p_N$. 
	
	From Remark~\ref{rem:NoPolesDiag} it follows that there are no singularities at $z_1=\pm z_i$, $i=2,\dots,n$. 
	
	In all terms of the formulas~\eqref{eq:Hg1formula},~\eqref{eq:Hg2Formula}, and \eqref{eq:HgnGeneralFormula} the principal part at $z_1\to p_k$ is generated by the iterative application of the operator $D_1=X_1\partial_{X_1} = Q(z_1)^{-1} z_1\partial_{z_1}$ to a function that is either holomorphic at $z_1\to p_k$ (as in the first summand of~\eqref{eq:Hg1formula}), or has a simple pole at $z_1\to p_k$  (as in the second summand of~\eqref{eq:Hg1formula}, where we divide a function holomorphic at $z_1\to p_k$ by $Q(z_1)$). Holomorphic functions and functions with a simple pole automatically have principal parts at $z_1\to p_k$ that are odd with respect to the deck transformation at $p_k$, and the operator $D_1$ preserves this property (while increasing the order of the pole at $p_k$). 
	
	Let us now check that there is no pole at $z_1\to \infty$. Note that the terms that really look special, the last two summands in Equation~\eqref{eq:Hg1formula}, vanish with our assumptions (and that is crucially important since for any other choice of $\overline \psi$ and $\overline y$ with given $\psi=P$ and $y=R$ it wouldn't be the case). To all other terms in the formulas~\eqref{eq:Hg1formula},~\eqref{eq:Hg2Formula}, and \eqref{eq:HgnGeneralFormula} the same rough estimation of the order of pole is applicable, cf.~\cite[Lemma 4.6]{BDKS2}. We perform it here only for the second summand in Equation~\eqref{eq:Hg1formula}, since in all other cases the analysis is exactly the same.
	To this end, consider
	\begin{align}\label{eq:OrderPole1}
		& [\hbar^{2g}]\sum_{j=1}^\infty D^{j-1} [t^j] \sum_{r=0}^\infty \frac 1Q e^{-2t \psi}
		\partial_y^r e^{2t \frac{{\mathcal S}(t\hbar \partial_y)}{\mathcal S(\hbar \partial_y)} \overline\psi} 
		[u^r] 
		\left(\frac{e^{\frac{\hbar u}2}+e^{-\frac{\hbar u}2}}{4 u\mathcal{S}(u\hbar)} e^{-uy+u\mathcal{S}(\hbar u z\partial_{z}) 
			\overline y } \right)
		\\ \notag
		&  =	[\hbar^{2g}]\sum_{j=1}^\infty D^{j-1} [t^j] \sum_{r=0}^\infty \frac 1Q 
		(\partial_y+2t P'(y))^r e^{2t ({{\mathcal S}(t\hbar \partial_y)} -1)P(y)} 
		\\ \notag & \qquad \qquad \qquad \quad \quad \quad \quad
		 [u^r] 
		\left(\frac{e^{\frac{\hbar u}2}+e^{-\frac{\hbar u}2}}{4 u\mathcal{S}(u\hbar)} e^{u(\mathcal{S}(\hbar u z\partial_{z})-1) 
			R(z) } \right)
	\end{align}
	Note that the operator $D=Q^{-1} z\partial z$ decreases the order of pole at $z\to \infty$ by $\deg Q = \deg P\deg R$. The same holds for the factor $Q^{-1}$ alone. This means that the order of pole in~\eqref{eq:OrderPole1} at $z\to \infty$ is equal to the order of pole at $z\to \infty$ of 
	\begin{align}\label{eq:OrderPole2}
		&	[\hbar^{2g}] \sum_{r=0}^\infty 
		(\partial_y+2t P'(y))^r e^{2t ({{\mathcal S}(t\hbar \partial_y)} -1)P(y)} \Big|'_{t = z^{-\deg P\deg R}}
		\\ \notag & \qquad \qquad 
	[u^r]
		\left(\frac{e^{\frac{\hbar u}2}+e^{-\frac{\hbar u}2}}{4 u\mathcal{S}(u\hbar)} e^{u(\mathcal{S}(\hbar u z\partial_{z})-1) 
			R(z) } \right),
	\end{align}
	where by $|'$ we mean that we only select the terms with $\deg t \geq 1$. With this substitution observe that each application of the operator $\partial_y+2t P'(y)$ decreases the order at $z\to \infty$ by $\deg R$. Thus the order of pole of \eqref{eq:OrderPole2} at $z\to \infty$ is equal to the order of pole at $z\to \infty$ of
	\begin{align}\label{eq:OrderPole3}
		&	[\hbar^{2g}] 
		e^{2t ({{\mathcal S}(t\hbar \partial_y)} -1)P(y)} 
		\left(\frac{e^{\frac{\hbar u}2}+e^{-\frac{\hbar u}2}}{4 u\mathcal{S}(u\hbar)} e^{u(\mathcal{S}(\hbar u z\partial_{z})-1) 
			R(z) } \right)
		\Big|'_{t = z^{-\deg P\deg R}} \Big|''_{u = z^{-\deg R}},
	\end{align}
	where by $|''$ we mean that we only select the terms with $\deg u \geq 0$. The latter expression is manifestly regular at $z\to \infty$. 
	
	Finally, extending our arguments to all variables $z_1,\dots,z_n$, we obtain that $H_{g,n}(z_1,\dots,z_n)$, $2g-2+n>0$, is a rational function that in each of its variables has poles only at the points $p_1,\dots,p_N$ with the odd principal parts with respect to the corresponding deck transformations. This immediately implies that $H_{g,n}\in \Theta_n$. 
\end{proof}

\subsection{Giacchetto--Kramer--Lewa\'nski conjecture and its generalization}\label{S7.3}
Consider the $n$-functions $H_{g,n}$ constructed from Orlov's hypergeometric BKP tau-functions for $\overline \psi=\frac 12 {\mathcal S}(\hbar \partial_y) P(y)$ and $\overline y =y =R(z)$, where $P$ is an arbitrary even polynomial in $y$ and $R$ is an arbitrary odd polynomial in $z$. Recall $X=X(z)=z\exp(-P(R(z)))$. Recall that we defined 
$W_{g,n} = D_1\cdots D_n H_{g,n}$, and we set
\begin{align}\label{eq:rescaling}
	\omega_{g,n}(z_{\llbracket n \rrbracket}) \coloneqq 2^{1-g} W_{g,n}(X_{\llbracket n \rrbracket}) \prod_{i=1}^n \frac {dX_i}{X_i} + \delta_{g,0}\delta_{n,2} \frac{1}{2} B(X_1,X_2) d\log X_1d\log X_2,
\end{align}
With this assignment, it follows from Propositions~\ref{prop:W02} and~\ref{prop:ClosedW01} that 
\begin{align}
	\omega_{0,1}(z) = y\,d\log X \qquad \text{and} \qquad \omega_{0,2}(z_1,z_2) = \mathcal{B}(z_1,z_2).
\end{align}
For all other $\omega_{g,n}$, $g\geq 0$, $n\geq 1$, $2g-2+n>0$, we have the following theorem

\begin{theorem}[Generalized Giacchetto--Kramer--Lewa\'nski conjecture] The symmetric $n$-dif\-fe\-ren\-tials $\omega_{g,n}$ are obtained by the odd topological recursion~\eqref{eq:TopologicalRecursion} for the initial data $X=z\exp(-P(R(z)))$ and $y=R(z)$. 
\end{theorem}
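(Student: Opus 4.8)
The plan is to deduce the statement from Lemma~\ref{lem:EquiavelentTR}, which reduces the claim that the $\omega_{g,n}$ are produced by the odd topological recursion~\eqref{eq:TopologicalRecursion} to two properties: (i) the blobbed topological recursion of Definition~\ref{def:Blobbed}, and (ii) the projection property; the matching of the initial data $\omega_{0,1}=y\,d\log X$, $\omega_{0,2}=\mathcal{B}$ has already been recorded just before the statement. For (i) I would apply Theorem~\ref{thm:Blobbed}, after checking that the data $\overline\psi=\frac12{\mathcal S}(\hbar\partial_y)P(y)$, $\overline y=y=R(z)$ satisfies the analytic assumptions of Section~\ref{sec:Assumptions}. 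Here $\psi(y)=\frac12 P(y)$, so $\psi'(y(z))=\frac12 P'(R(z))$ and $y'(z)=R'(z)$ are polynomials and hence extend to rational functions on $\mathbb{C}\mathrm{P}^1$; the form $d\log X=\big(z^{-1}-(P\circ R)'(z)\big)dz$ is rational with all zeros simple by the running hypothesis of Section~\ref{S7.3}, and its zeros $p_1,\dots,p_N$ lie in $\mathbb{C}^*$ since $d\log X$ has poles at $z=0$ and $z=\infty$; finally the coefficients of the positive powers of $\hbar$ in $\overline\psi(\hbar^2,y(z))=\frac12\sum_{k\geq 0}\frac{\hbar^{2k}}{4^k(2k+1)!}P^{(2k)}(R(z))$ and in $\overline y(\hbar^2,z)=R(z)$ are polynomials in $z$, hence regular on $\mathbb{C}$, so their singular points avoid $p_1,\dots,p_N$. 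With these checks Theorem~\ref{thm:Blobbed} gives (i) directly.

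For (ii) I would invoke Proposition~\ref{prop:QuasiPol}, which in the present case asserts $H_{g,n}\in\Theta_n$ for all $(g,n)$ with $2g-2+n>0$. The bridge is the identity $\omega_{g,n}=2^{1-g}W_{g,n}\prod_{i=1}^n d\log X_i=2^{1-g}\,d_{z_1}\!\cdots d_{z_n}H_{g,n}$, valid for $2g-2+n>0$, which follows from $W_{g,n}=D_1\cdots D_n H_{g,n}$ together with $D_iH\cdot d\log X_i=d_{z_i}H$. Fixing all but one variable and using the definition of $\Theta_n$, the one-form $d_{z_j}H_{g,n}$ is a rational differential on $\mathbb{C}\mathrm{P}^1$ with poles only at $p_1,\dots,p_N$, with vanishing residues (it is exact), and with principal part at each $p_k$ odd with respect to the local deck transformation $\sigma_k$ of $X$. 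The operator $\sum_{i=1}^N\res_{w\to p_i}\int_{p_i}^{w}\mathcal{B}(\cdot,z_j)$, built from the symmetrized kernel $\mathcal{B}$, acts as the identity on such differentials: it recovers the $\sigma_i$-odd part of the principal part at each $p_i$, and a rational differential on $\mathbb{C}\mathrm{P}^1$ with no residues is determined by the sum of its principal parts. Applying this operator in each variable $z_1,\dots,z_n$ returns $\omega_{g,n}$, which is the projection property. I should also note the input is admissible for~\eqref{eq:TopologicalRecursion}: $X(-z)=-X(z)$ because $P$ is even and $R$ is odd, and $y=R(z)$ is holomorphic near each $p_i$ with $dy|_{p_i}=R'(p_i)\,dz\neq 0$, since $R'(p_i)=0$ would force $Q(p_i)=1+p_iP'(R(p_i))R'(p_i)=1\neq 0$, contradicting $Q(p_i)=0$.

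The only step that is not essentially bookkeeping is the reduction (ii): verifying that the residue operator with the $\mathbb{Z}_2$-symmetrized Bergman kernel $\mathcal{B}$, as opposed to the ordinary $dz_1dz_2/(z_1-z_2)^2$, reproduces precisely the class of differentials obtained from elements of $\Theta_n$ by $d$-differentiation. This is the standard equivalence between quasi-polynomiality and the combination of the projection property with the linear loop equations, but in the ``odd'' setting one must verify that the oddness condition built into the definition of $\Theta_n$ is exactly what makes the symmetrized kernel recover the principal parts; this is the analogue of~\cite[Section 3]{BDKS2} in our case and goes through without change. Assembling (i) and (ii) through Lemma~\ref{lem:EquiavelentTR} then yields the theorem.
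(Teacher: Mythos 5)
Your proposal is correct and follows essentially the same route as the paper: reduce via Lemma~\ref{lem:EquiavelentTR} to the blobbed topological recursion plus the projection property, obtain the former from Theorem~\ref{thm:Blobbed} (the paper simply declares the analytic assumptions ``obvious'' where you verify them explicitly), and obtain the latter from the quasi-polynomiality statement of Proposition~\ref{prop:QuasiPol}. The extra detail you supply on why $H_{g,n}\in\Theta_n$ yields the projection property with the symmetrized kernel $\mathcal{B}$ is consistent with the equivalence the paper cites from~\cite[Section 3]{BDKS2}.
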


\begin{proof} According to Lemma~\ref{lem:EquiavelentTR} we have to check the blobbed topological recursion and the projection property. The blobbed topological recursion follows from Theorem~\ref{thm:Blobbed}, which is proved in a much more general situation (it is obvious that the analytic assumptions listed in Section~\ref{sec:Assumptions} are satisfied). 
	On the other hand, the linear loop equations and the projection property are equivalent to the statement of Proposition~\ref{prop:QuasiPol}.
\end{proof}

\bibliographystyle{alphaurl}
\bibliography{KPTRref}

\end{document}